\definecolor{framecolor}{rgb}{0, 0.502, 0.502}
\definecolor{bgcolor}{rgb}{0.960, 0.984, 1.000}
\newtcolorbox{mybox}[1]{colback=bgcolor,colframe=framecolor,fonttitle=\bfseries,title=#1}
\algnewcommand{\algorithmicand}{\textbf{ and }}
\algnewcommand{\algorithmicor}{\textbf{ or }}
\algnewcommand{\OR}{\algorithmicor}
\algnewcommand{\AND}{\algorithmicand}
\algrenewcommand{\algorithmiccomment}[1]{\hfill// #1}
\newcommand*{\addFileDependency}[1]{
  \typeout{(#1)}
  \@addtofilelist{#1}
  \IfFileExists{#1}{}{\typeout{No file #1.}}
}
\algrenewcommand\algorithmicrequire{\textbf{Input:}}
\algrenewcommand\algorithmicensure{\textbf{Output:}}
\algnewcommand{\LineComment}[1]{\Statex \(\triangleright\) #1}
\newcommand*{\myexternaldocument}[1]{%
    \externaldocument{#1}%
    \addFileDependency{#1.tex}%
    \addFileDependency{#1.aux}%
}
\begin{document}
\RUNAUTHOR{Lin et al.} 
\RUNTITLE{CrowdLLM: Building Digital Populations Augmented with Generative Models}
\TITLE{CrowdLLM: Building LLM-Based Digital Populations Augmented with Generative Models}


\ARTICLEAUTHORS{%
\AUTHOR{Ryan Feng Lin\textsuperscript{a}\thanks{Equal contribution.}, Keyu Tian\textsuperscript{b}\footnotemark[1], Hanming Zheng\textsuperscript{b}, Congjing Zhang\textsuperscript{a}, Li Zeng\textsuperscript{b}\thanks{Corresponding authors.}, Shuai Huang\textsuperscript{a}\footnotemark[2]}
\AFF{\textsuperscript{a} Department of Industrial and
Systems Engineering, University of Washington, Seattle, WA 98195, USA\\
\textsuperscript{b} Department of Data Science, City University of Hong Kong, Kowloon, Hong Kong\\
\EMAIL{\{ryanflin, congjing, shuaih\}@uw.edu, \{ktian6-c, hanming.zheng\}@my.cityu.edu.hk, li.zeng@cityu.edu.hk}} 
} 

\ABSTRACT{%
The emergence of large language models (LLMs) has sparked much interest in creating LLM-based digital populations that can be applied to many applications such as social simulation, crowdsourcing, marketing, and recommendation systems. A digital population can reduce the cost of recruiting human participants and alleviate many concerns related to human subject study. However, research has found that most of the existing works rely solely on LLMs and could not sufficiently capture the accuracy and diversity of a real human population. To address this limitation, we propose CrowdLLM that integrates pretrained LLMs and generative models to enhance the diversity and fidelity of the digital population. We conduct theoretical analysis of CrowdLLM regarding its great potential in creating cost-effective, sufficiently representative, scalable digital populations that can match the quality of a real crowd. Comprehensive experiments are also conducted across multiple domains (e.g., crowdsourcing, voting, user rating) and simulation studies which demonstrate that CrowdLLM achieves promising performance in both accuracy and distributional fidelity to human data.}

\KEYWORDS{digital population, large language model (LLM), generative AI}

\maketitle

\section{Introduction}\label{intro}

Recent years have witnessed the immense potential of large language models (LLMs) in performing human tasks \citep{song2023llm, liu2024make} and human behavior simulation \citep{zhousotopia, sun2024building}. This capacity of LLMs has sparked much interest in creating virtual human-like decision-making agents that can be used in many applications such as social simulation \citep{wang2025yulan, anthisposition}, behavioral studies \citep{chen2024evaluating, meng2024ai}, crowdsourcing \citep{grunde2025designing,xu2024role, moskovskiy2024llms}, marketing \citep{deshmukh2024harnessing, cai2025rtbagent}, recommendation \citep{shu2024rah, portugal2024agentic}, etc. A common theme of these applications is that they all involve a large group of human participants so as to solicit their decision-making powers to provide solutions to a task, and then, aggregate their solutions to solve the task \citep{zhang2014spectral}. Apparently, an implicit assumption made in these ``human-intensive'' operations is that the system could not bet on one single participant to solve the problem. Instead, it relies on the wisdom of the crowd, which by definition means a diverse collection of individuals who would provide different responses on the same problem. There are many reasons: sometimes it is because there is no ground truth (like in voting); sometimes it is because the quality of the crowd is not guaranteed (like in crowdsourcing); and sometimes it is simply because the goal is to solicit input from the diverse population (like collecting user ratings for products in order to develop accurate recommendation systems). The growing interest in using LLMs in these applications is motivated by reasons that vary across the applications, but one common reason is that in many of these applications the involvement of real humans may raise many concerns about privacy \citep{xia2020privacy}, confidentiality \citep{sims2019legal}, quality \citep{iren2014cost}, transparency \citep{xie2023dark}, etc. These concerns also extend to other fields, such as bias in social science \citep{alizadeh2025web} and privacy risks in recommendation systems \citep{wang2025privacy}. Beyond these legal, ethical, and cost-related challenges, recruiting workers itself presents significant complexities. The LLM-based synthetic crowd can circumvent these challenges and therefore inspire the many recent aforementioned developments. As articulated in \citep{Anthis2025ICML_LLM_Social}, LLMs should always be used as a concept testing tool for pilot and exploratory studies before we recruit real humans. 

Generally, an LLM-based model can be constructed based on a pretrained LLM, such as OpenAI ChatGPT \citep{achiam2023gpt}, Meta Llama \citep{touvron2023llama}, Google Gemini \citep{team2023gemini, team2025gemma}, Deepseek \citep{guo2025deepseek}, etc. It is followed by supervised fine-tuning (SFT) on a specific task \citep{li2024getting, ding2023parameter}, possibly combined with human preference alignment through learning methods such as reinforcement learning from human feedback (RLHF) \citep{hua2024intuitive, rafailov2024direct, schulman2017proximal}, to achieve better performance in completing the given task. Although such a pipeline has been widely adopted, the underlying drawbacks are not negligible. While SFT is much cheaper than pretraining \citep{xia2024understanding}, it can still be cost-prohibitive, which further demands parameter-efficient techniques to reduce the cost \citep{hu2021lora, ding2023parameter}. Meanwhile, SFT or RLHF can be largely impacted by the quality and the curation of the task-specific data used for tuning \citep{liu2024coachlm, yeh2024reliable, chang2022data}. Thus, building a high-achieving LLM-based model to perform human tasks remains a challenge, particularly when there is a lack of high-quality data and computational resources. 




Noting these issues, many endeavors have been devoted to the improvement of the model tuning \citep{wu2025llm, yin2024lofit} and prompt designing \citep{zhaopareto, zamfirescu2023johnny, zhang2024neural} so as to craft well-tuned task-specific LLMs. However, LLMs are pure ``black box'' models whose controllability (i.e., of their behaviors and output) is known to be a challenge. While instruction prompts play a significant role in inducing LLMs to show desired behaviors, it is usually difficult to find a universal design of the prompts for various tasks. It is also beyond our reach to know whether LLMs really understand the prompts and generate the outputs causally based on the instructions. Additionally, LLMs usually generate outputs with little diversity \citep{kirk2023understanding, peterson2024ai, padmakumardoes2024writing}, which is actually one key challenge in the development of the envisioned digital populations that we are interested here. It can be seen that many existing efforts are devoted to the framework of LLM which relies on large-scale high-quality data and substantial computational resources, whereas in many aforementioned applications, sometimes a lightweight solution of the LLM-based digital population is sufficient for the task. After all, in these applications, such as crowdsourcing a data labeling task or surveying a potential market for a new product, the human participants need not be experts; they may perform poorly on individual tasks, yet through effective aggregation of their inputs, they can collectively achieve satisfactory results. Still, the main challenges for developing such an LLM-based digital population are, as pointed out in \citep{Anthis2025ICML_LLM_Social} and many other recent efforts, that these LLM-based models usually exhibit a lack of diversity and undetected bias, and inaccuracies due to excessively user-pleasing outputs. 

Therefore, targeting the applications where a lightweight solution of the LLM-based digital population is sufficient, we pursue a strategy that is different from the existing efforts that aim to solve the problem within the LLM framework itself. Rather, our approach is to augment LLM with a generative machine learning model that can provide the diversity it needs, mitigate the bias it implicitly has, and improve its accuracy. We develop a principled design of a computational pipeline that is lightweight enough to be cost-effective but also sufficiently accurate and robust to guide the generation and aggregation of a diverse pool of LLM-based virtual participants to match the diversity and accuracy of real-world operations. 

Our main contributions include: (1) we propose CrowdLLM to emulate decision-making diversity and distributional fidelity observed in many real-world operations in crowdsourcing, voting, and product reviews; (2) CrowdLLM is built on a rigorous probabilistic framework that integrates the best of the two worlds, the LLM and the generative ML models; (3) we conduct theoretical analysis of CrowdLLM regarding its great potential in creating cost-effective, sufficiently representative, scalable digital populations that can match the quality of real populations; and (4) we conduct comprehensive experiments across multiple domains (e.g., crowdsourcing, voting, user rating) and simulation studies which demonstrate that CrowdLLM achieves promising performance in both accuracy and distributional fidelity to human data.

\begin{figure}[!t]
    \centering
    \begin{subfigure}{0.45\textwidth} 
        \centering
        \includegraphics[width=\textwidth]{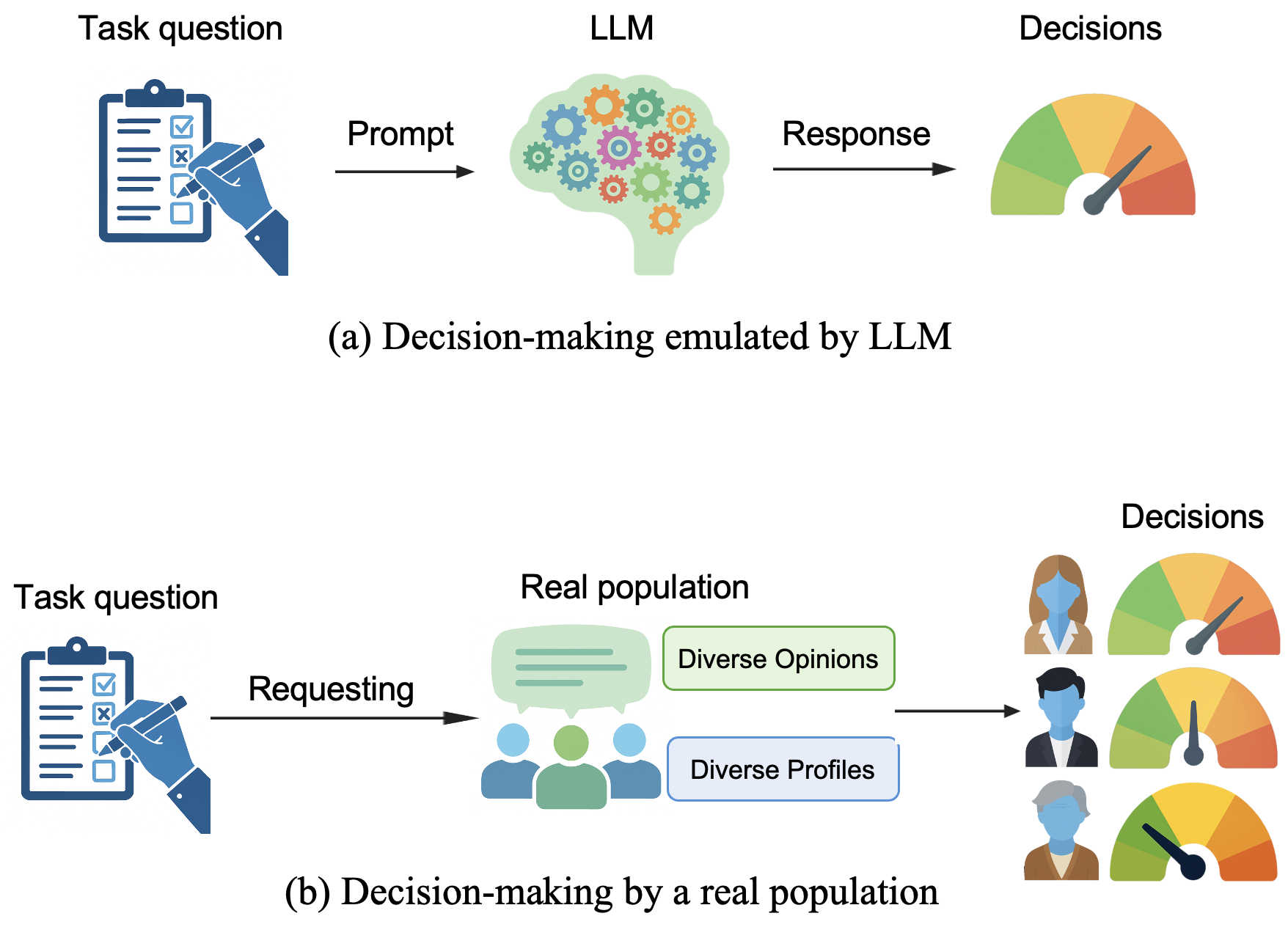}
        \label{fig1a}
    \end{subfigure}%
    \hfill
    \begin{subfigure}{0.45\textwidth} 
        \centering
        \includegraphics[width=\textwidth]{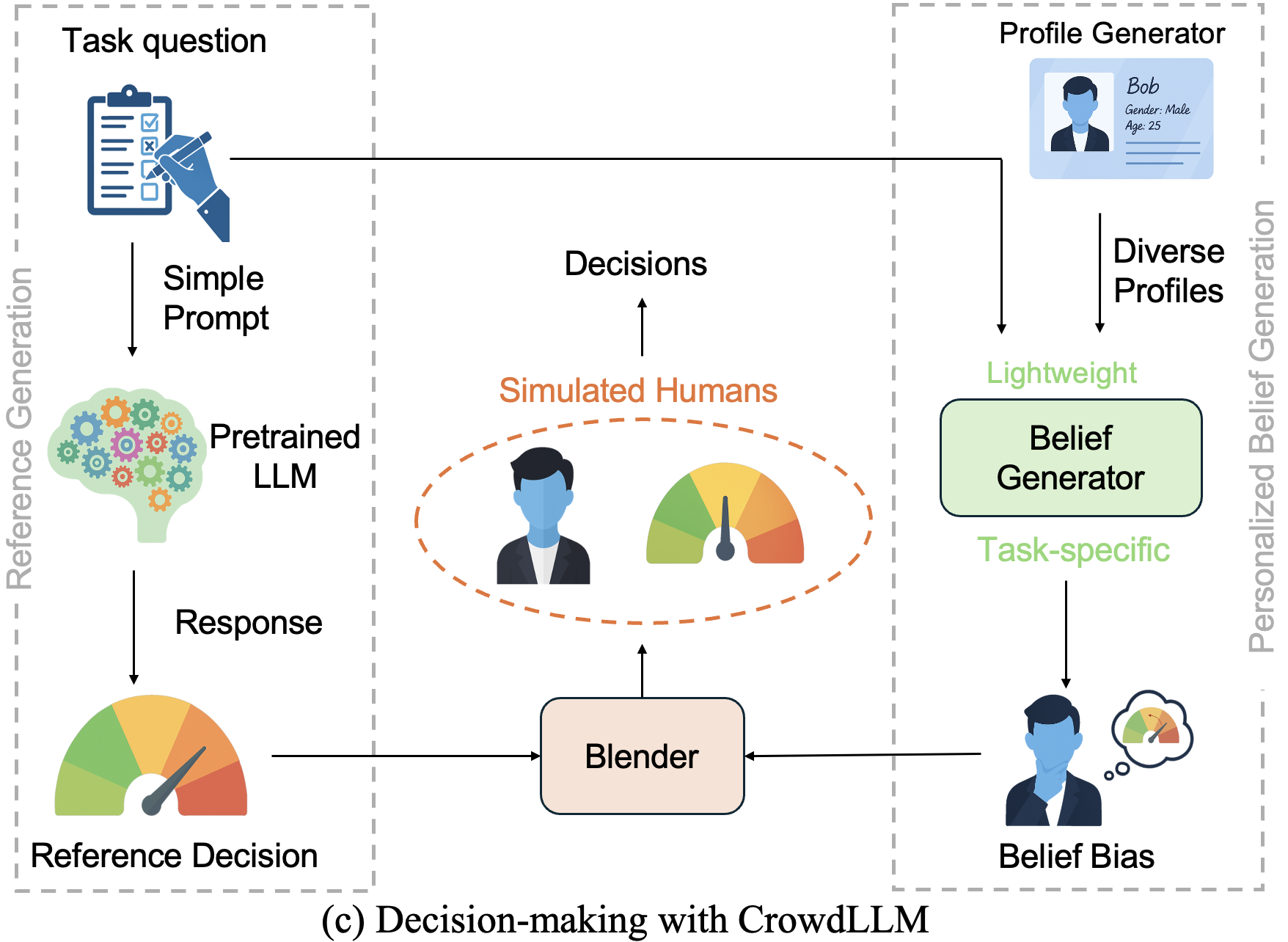}
        \label{fig1b}
    \end{subfigure}%
    \caption{A comparison of different decision-making workflows. (a) LLM: Decisions are purely made by LLM through the input of prompts. (b) Real population: Diverse decisions are made by a population of humans with diverse profiles. (c) CrowdLLM: Diverse decisions are made by simulated humans. Each simulated human's decision is a blend of a reference decision generated by a pretrained LLM and the personal belief bias generated by a belief generator. The simulated humans are sampled probabilistically by a profile generator.}
\end{figure}


\section{Related Work} 
\subsection{The Promises and Pitfalls of LLMs in Simulating Humans} LLMs have been used to simulate human behavior \citep{lu2025beyond, karten2025llm}, decision-making processes \citep{eigner2024determinants} and complex social interactions \citep{leng2023llm, bui2025mixture}. For example, through a series of Trust Games grounded in behavioral economics and modeled with Belief-Desire-Intention reasoning, \cite{xie2024can} show that GPT-4 agents exhibit strong behavioral alignment with humans in both actions and underlying rationales. Agent-based modeling with LLMs also shows great promise for large-scale social simulations. Frameworks such as AgentSociety \citep{piao2025agentsociety} and SocioVerse \citep{zhang2025socioverse} exemplify this potential, demonstrating simulations involving tens of thousands of LLM-driven agents or drawing upon millions of real users to inform agent behavior. These platforms aim to model complex societal dynamics, simulate millions of interactions, and study collective responses to events like policy changes or natural disasters. However, general-purpose LLMs can exhibit low accuracy on specific behavioral simulations. \cite{lu2025beyond} shows that fine-tuned LLMs \citep{binz2024centaur} on behavioral data enriched with synthesized reasoning traces substantially improve the accuracy of action generation compared to training on actions alone. While many works demonstrated the promises of LLMs, there have also been many evidences that pointed out their pitfalls in generating human-grade data. For example, \cite{gao2025take} use economic games to demonstrate that LLM behaviors are not consistent with humans and fine-tuned LLMs may only mimic specific patterns or contexts with reduced diversity even in simple scenarios. Beyond these inconsistencies, LLMs have also been observed to be associated with diminished output diversity \citep{padmakumardoes, chen2025dlcrec, zhang2025noveltybench}. 

To address the limited diversity and reliability in LLM-generated simulation outputs, \cite{dong2024can} propose the LLM-as-a-Personalized-Judge framework and reveal that integrating verbal uncertainty estimation improves alignment with human judgments. \cite{wang2025multilingual} propose a multilingual prompting strategy to increase diversity by activating cultural knowledge embedded in model training data. Similarly, \cite{shypula2025evaluating} introduce methods for evaluating and mitigating representational bias in LLM-driven outputs, ensuring that outputs better reflect a wide range of demographic and cultural perspectives. In addition, \cite{liu2025cultural} emphasizes the importance of refining training datasets to reduce biases and improve both accuracy and fairness. Moreover, \cite{mai2024improving} highlight the role of hybrid human-LLM teaming to enhance model performance, demonstrating that human feedback can mitigate errors in complex simulations. These approaches aim to tackle issues of diversity and accuracy of LLMs, making them more reliable and representative for practical applications.


\subsection{LLM-Based Digital Populations} 

In this subsection, we review existing works that create digital populations/synthetic crowds for applications such as voting, crowdsourcing, and product reviews. For example: (1) \textit{Crowdsourcing}: Crowdsourcing \citep{howe2006rise} leverages the collective intelligence of workers who are usually non-experts to perform tasks such as labeling, classification, and data verification. The quality of crowdsourced data is often a challenge due to worker inconsistency, spammers, and labeling noise. Recruiting workers and ensuring response quality is time-consuming and costly. LLM-based agents could circumvent many of these issues. Costabile et al. \citep{costabile2025assessing} suggest that an LLM-based crowd might outperform human crowds in fact-checking tasks by exhibiting less bias and higher consistency. Moskovskiy et al. \citep{moskovskiy2024llms} find that with techniques like activation patching, LLMs can generate parallel data with quality rivaling human-annotated corpora. However, Veselovsky et al. \citep{veselovsky2025prevalence} suggest that only using LLM-based agents may be problematic in crowdsourcing scenarios considering their limitations in capturing the full range of human preferences and viewpoints.  Wu et al. \citep{wu2023llms} investigate LLMs as workers in complex human-computational algorithms, observing variable success but highlighting potential for LLMs to handle sub-tasks within larger pipelines. To solve challenges in tasks requiring aligned and nuanced rewriting, Zeng et al. \citep{zeng2024combining} propose hybrid aggregation strategies that combine LLM and crowd judgments for misinformation detection. These studies suggest LLM-based agents alone are insufficient, so some researchers explore different ways of human-LLM collaboration. E.g., Creator-Aggregator Multi-Stage \citep{li2024human}, where LLMs and humans team up, aims to leverage mutual strengths by having humans come up with initial drafting and use LLMs, humans, and models to generate text answer aggregation. Li \citep{li2024comparative} shows that selective integration of LLM annotations can enhance overall annotation quality in both full and few-crowd settings. Tamura et al. \citep{tamurasimulation} uses simulation-based approaches to understand optimal aggregation strategies in a human+AI crowd. (2) \textit{Synthetic users in recommendation systems}: LLM-powered user simulators have become an important tool for recommendation systems by generating high-fidelity and interpretable synthetic interaction data that alleviates data sparsity and reduces the cost of online exploration. Recent advances take this idea in different directions: Agent4Rec \citep{zhang2024generative} emphasizes population diversity by initializing agents with heterogeneous traits; RecAgent \citep{wang2025user} prioritizes behavioral fidelity through cognitive components such as memory, reflection, and planning; SUBER  \citep{corecco2024suber} focuses on controllability and reproducibility for long-horizon evaluation, and \citep{zhang2025llm} enhances transparency by explicitly modeling user preference logic and mitigating hallucination through an ensemble of logical and statistical components. Together, these simulators offer interpretable and adaptable user behavior models for evaluating recommendation policies, but they also face shared limitations, including the high computational cost of cognitively rich agents and the challenge of balancing population diversity with stable, non-drifting within-agent preferences. (3) \textit{Voting}: LLMs have been explored in electoral contexts, but their use raises concerns regarding consistency, fairness, and reliability in collective decision-making. Studies reveal issues across different elections: \citep{cen2025large} observed biases and inconsistencies in LLM responses during the 2024 U.S. presidential election, while \citep{von2024united} reported failures in LLM-based predictions of the 2024 European Parliament elections, particularly in handling diverse national and linguistic contexts. Approaches such as fair voting aggregation have been proposed to mitigate these effects \citep{majumdar2024generative}. Despite these interventions, these observations highlight broader limitations of general-purpose LLMs in voting scenarios, particularly their limited capacity to capture human variability and their reliance on overly simplistic decision aggregation mechanisms. First, LLMs demonstrate a lack of diversity in synthetic outputs. \citep{ball2025human} observed that LLM-generated data fails to replicate the variance seen in real human responses, with limited differentiation in persona-to-party mappings. Consistently, \citep{yang2024llm} showed that LLM outputs produce less diverse collective outcomes in simulated voting scenarios, and their data is used in our experiments, where our method improves both consistency and diversity. Second, LLM-based collective decision-making systems exhibit limitations in decision mechanisms, often relying on simplistic aggregation methods such as plurality or dictatorial voting, which constrains collective reasoning and robustness \citep{zhao2024electoral}. 

\subsection{Generative Models}
Generative modeling methods aim to learn the underlying data distribution to capture complex latent structures and variability, enabling models to generalize across diverse scenarios. Over time, this goal has driven the development of several major paradigms. Variational Autoencoders (VAEs) \citep{kingma2013auto} introduced a stable, likelihood-based framework in which data are encoded into a latent distribution and sampled through the reparameterization trick, enabling efficient conditional generation. Although VAEs may produce slightly smoothed outputs and fewer extreme samples due to the variational approximation, they remain tractable, stable to train, and easily conditioned on input variables. Generative Adversarial Networks (GANs) \citep{goodfellow2014generative} enhance sample fidelity by training a generator against a discriminator, but this adversarial setup introduces instability, mode collapse, and high sensitivity to hyperparameters, making controlled diversity difficult. Diffusion models \citep{ho2020denoising} achieve strong distributional coverage through iterative denoising, yet they require heavy computation, large datasets, and slow sampling, limiting their practicality in low-dimensional or conditional settings. Optimal transport–based models \citep{li2023dpm} and normalizing flows provide exact likelihoods through invertible mappings but impose structural constraints that restrict flexibility in conditional scenarios. While each method addresses certain weaknesses, they also introduce trade-offs in stability, controllability, or computational cost. For generating structured latent variations, a VAE provides a stable and tractable probabilistic framework, making it the natural choice for the belief-generation component of CrowdLLM.

\section{CrowdLLM}

Our target applications involve a group of human participants to solicit their decision-making powers to provide solutions to a task, and then aggregate their solutions to solve tasks. Before formally presenting our framework, let's first provide an analytic characterization of these applications.  For a specific decision-making task, we consider a set of $T$ problems $\mathcal{T}=\{1,2,\cdots, T\}$. Each problem $t (t\in \mathcal{T})$ is associated with a description $\boldsymbol{x}_t\in\mathcal{X}$, where $\mathcal{X}$ is the problem space. Given $\boldsymbol{x}_t$, one needs to give their individual response. For example, in a choice-making scenario, they need to make a choice $y_t$ from the set of $M_t$ alternatives $\mathcal{Y}_t=\{1,2,\cdots, M_t\}$. The ultimate goal of decision-making is to find an optimal rule $\psi:\mathcal{X}\to\mathcal{Y}$ to give the decision $\hat{y}=\psi(\boldsymbol{x}_t)$. Suppose we have $N$ participants, and each is with a profile vector $\boldsymbol{v}_i (i=1,\cdots, N)$, i.e., which includes their demographics or user characteristics. Each problem will be assigned to all the participants, but they can choose whether to perform the task or not. Thus, for each problem $t$, we will only collect responses from a set of $N_t$ participants, denoted by $Y_t=\Bigl\{y_{t,i_n}\in \mathcal{Y}_t|n=1,\cdots, N_t\Bigr\}$. With an aggregation function $h(\bigcup_{i=1}^N\{y_{t,i}\})$, the final decision for the problem can be represented by $\hat{y}_t=h(Y_t)$. As a result, the decision-making rule is de facto an ensemble of personalized decision-making rules. Both personalization and aggregation are important in our target applications. While aggregation synthesizes the collective wisdom, personalization emphasizes the diversity of participants in both their profiles and opinions, as illustrated in Figure 1(b). 




The overall framework of CrowdLLM is shown in Figure 1(c). Different from a pure LLM-based model shown in Figure 1(a), in CrowdLLM, the LLM-emulated participants are augmented with a generative model to mimic the task-specific behaviors of real human participants. Note that rather than building numerous personalized LLMs tailored for each individual (pure LLM agents), CrowdLLM allows all these virtual individuals to share one single pretrained LLM as their engine, which is a more cost-effective solution. A full description of CrowdLLM is shown in the box below, and more details are given in the rest of this section.
\begin{mybox}{CrowdLLM: A Synthetic Crowd of Human Participants}
\textbf{Input:} Candidate pool $\mathcal{S}$; Recruitment budget $N$; Task-specific problem set $\mathcal{T}=\{1,\cdots, T\}$, each problem $t$ with its description $\boldsymbol{x}_t$, requirements $\mathcal{R}_t$ and context $\mathcal{C}_t$; Frozen LLM $\mathcal{M}$.
\begin{enumerate}[leftmargin=*]
\item \textbf{Virtual Participant Recruitment:} Produce a set of $N$ participants with qualified profiles $\boldsymbol{v}_1,\cdots, \boldsymbol{v}_T$ from the candidate pool for problem $t$. For each problem $t$, assign the problem to the participants and ask them to make decisions. 
\item \textbf{Reference Generation:} Instruct the LLM $\mathcal{M}$ with problem-specific prompt $\mathcal{P}=f(\boldsymbol{x}_t, \mathcal{R}_t, \mathcal{C}_t)$ to generate reference decisions $y_{ref}\sim\pi_{\mathcal{M}}(y|\mathcal{P})$ for the problem $t$.
\item \textbf{Belief Generation:} For the $i$-th participant, if their participation status $\varphi_{t,i}=1$, generate their belief bias over the problem as $\boldsymbol{\delta}_{i,t}=G_{belief}(\boldsymbol{x}_t, \boldsymbol{v}_i)$. 
\item \textbf{Personalized Decision-Making:} For the $i$-th participant, the personalized decision is made by a blending of the reference decisions and personalized belief bias which follows a probabilistic model $\hat{y}_{i,t}\sim \pi(y|B_{\sigma}(y_{ref}, \boldsymbol{\delta}_{i,t}), \boldsymbol{x}_t, \boldsymbol{v}_i)$ where $B_{\sigma}(y_{ref}, \boldsymbol{\delta}_{i,t})$ is the blender.
\item \textbf{Decision Aggregation:} Depending on the task, we can aggregate the decisions for any problem $t$ by a function $h$ through $y=h(Y_t), \text{where}\,\, Y_t=\{\hat{y}_{t,i}|\varphi_{t,i}=1\}$. 
\end{enumerate}
\end{mybox}
\subsection{Details of Each Component in CrowdLLM}

%


\paragraph{Reference generation.} Recall that for any given problem/task $\boldsymbol{x}_t$, any decision CrowdLLM generates combines two inputs, i.e., as illustrated in Figure 1(c), the reference decision and the personal belief. To produce the reference decision for problem $t$, we leverage LLM, in particular, a pretrained LLM $\mathcal{M}$, since it is computationally cheaper and imposes fewer requirements on specialized hardware compared to fine-tuning \citep{seedat2024curated}. For a single LLM-emulated participant, with $\mathcal{P}=f(\boldsymbol{x}_t, \mathcal{R}_t, \mathcal{C}_t)$ as context, i.e., recall that each problem $t$ has its description $\boldsymbol{x}_t$, requirements $\mathcal{R}_t$ and context $\mathcal{C}_t$ (see an example of $\mathcal{R}_t, \mathcal{C}_t$ in Appendix), we prompt $\mathcal{M}$ to generate several decisions, which we call reference decisions. This can be viewed as sampling from a reference distribution $\pi_{M}$ over $\mathcal{Y}$, i.e., $y_{ref}\sim \pi_{\mathcal{M}}(y|\mathcal{P})$. To ensure the reliability of the reference decision, we perform $K$ times of generation, which yields a set of decisions $\{y'_1,\cdots, y_K'\}$. In summary, the reference decision can be expressed as an aggregated decision:
\vspace{0.3cm}
\begin{align*}
\begin{aligned}
&y_{ref}=h_{\mathcal{M}}(y'_1,\cdots, y'_K), \\
&y_k'\sim\pi_{\mathcal{M}}(y|\mathcal{P}),\quad k=1,\cdots, K,  
\end{aligned}
\end{align*}
where $h_{\mathcal{M}}(\cdot)$ is an aggregation function, e.g., mean or majority voting. Though as a good common sense respondent, it is known that LLM-based agents often fail to generate differentiated decisions but instead follow the same common sense \citep{veselovsky2025prevalence, shypula2025evaluating, xu2024echoes}, even when we vary the ways of prompting (multi-persona prompting) and temperature settings. Our experiments in Section 4 also show that the LLM-emulated participants lack diversity compared with real humans' decisions.


\paragraph{Belief generation.}  To ensure the LLM-emulated participants can make diverse decisions as humans, we introduce a belief generator $G_{belief}(\cdot)$ to generate personalized belief biases. The generator can be implemented as a lightweight generative network that can adapt to a specific task. It takes both the participant's profile $\boldsymbol{v}_i$ and the problem description $\boldsymbol{x}_t$ as the input, and encodes them into a belief bias. The generation of the belief bias follows an inference model:
\vspace{0.3cm}
\begin{align}
\begin{aligned}
\boldsymbol{\delta}_{i,t}\sim p(\boldsymbol{\delta}|g_x(\boldsymbol{x}_t), g_z(\boldsymbol{v}_i))=\mathcal{N}\Bigl(\boldsymbol{\mu}\Bigl(g_x(\boldsymbol{x}_t), g_z(\boldsymbol{v}_i)\Bigr), \mathbf{\Sigma}\Bigl(g_x(\boldsymbol{x}_t), g_z(\boldsymbol{v}_i)\Bigr)\Bigr),
\end{aligned}
\label{eq3}
\end{align}
where $g_x(\cdot)$ and $g_z(\cdot)$ are embedding functions parameterized with $\boldsymbol{\beta}$. When fixing the participants' profiles $\boldsymbol{v}_i$ as the context, this naturally leads to a variational autoencoder (VAE) conditioning on the profiles. But it should be also noted that if the profile generator is not frozen and $\boldsymbol{v}_i$ can vary with the noise $\boldsymbol{\varepsilon}_i$ that generates the profiles, $p(\boldsymbol{\delta}|g_x(\boldsymbol{x}_t), g_z(\boldsymbol{v}_i))$ is not necessarily Gaussian after marginalization and thus can result in a semi-implicit variational autoencoder which is able to accommodate non-Gaussian distributions through a hierarchy of stochastic layers \citep{yin2018semi}. To simplify the problem, we directly go with the VAE structure without considering such hierarchical inference. The reconstruction of the problem description is performed by a decoder $D(\cdot)$ through $\boldsymbol{x}_i=D(\boldsymbol{\delta}_{i,t}, g_z(\boldsymbol{v}_i))$. And the generated belief bias $\boldsymbol{\delta}_{i,t}$ is then fed into the blender (shown in Figure 1(c)) to produce the final decision from this virtual participant.


\paragraph{Personalized decision-making.} To finalize the decision of a single participant, we need to blend the reference decision generated by LLM with the personal belief bias that is sampled from \eqref{eq3} as a latent vector. Since \eqref{eq3} is a probabilistic model, to offset the impact of its randomness, we generate the final decision of the participant as an expected decision:
\begin{align*}
\begin{aligned}
\hat{y}_{t,i}=\mathbb{E}_{\boldsymbol{\delta}_{i,t}\sim p(\boldsymbol{\delta})}\Bigl[B_{\sigma}(y_{ref}, \boldsymbol{\delta}_{i,t})\Bigr]\approx\frac{1}{J}\sum_{j=1}^JB_{\sigma}(y_{ref}, \boldsymbol{\delta}^{(j)}_{i,t}).
\end{aligned}
\end{align*}

In practice, we can generate the personal belief bias $J$ times and approximate the expectation by the sample average. Here, $B_{\sigma}(\cdot)$ is the blender parameterized by $\sigma$. In our framework, the blender follows
\begin{align}
\tilde{y}_{t,i}= B_{\sigma}(y_{ref}, \boldsymbol{\delta}_{i,t})\sim \mathcal{F}(y_{ref}+\boldsymbol{\delta_i}, \sigma^2),
\label{blender}
\end{align}
where $\mathcal{F}$ is a preset distribution depending on the task. For example, if the decision to make is a continuous variable, $\mathcal{F}$ can be a normal distribution. The variance $\sigma^2$ reflects the noise level.

\paragraph{Crowd-level decision aggregation.} In the final step, for each problem $t$, the participants' decisions are aggregated through an aggregation function $h:\mathcal{Y}^N\to\mathcal{Y}$. Specifically, the aggregated response for problem $t$ can be written as 
\begin{align*}
y=h(Y_t), \text{where}\,\, Y_t=\{\hat{y}_{t,i}|\varphi_{t,i}=1\}.
\end{align*}
where $\varphi_{t,i}=1$ means participant $i$ responds to problem $t$. Various aggregation functions can be used, such as mean score, majority voting, Dawid-Skene model, etc. When the problems do not have a ground truth solution, the consensus or the decision distribution of human workers can be the gold standard to evaluate the performance of CrowdLLM.  



\paragraph{Virtual participant recruitment.} Last but not least, recruitment of LLM-emulated participants is realized through a random profile generator $G_u(\cdot)$, i.e., this profile generator should be responsible for generating the information of a participant and selecting participants from a pool of qualified profiles denoted by $\mathcal{S}$. $\mathcal{S}$ can be built based on task-specific prior knowledge (see an example in Figure~\ref{figs0}). Formally, the $i$-th participant's profile is expressed as 
\begin{align}
    \boldsymbol{v}_i = G_u(\mathcal{S},\boldsymbol{\varepsilon}_i; \boldsymbol{\theta}),
    \label{eq:zi}
\end{align}
where $\boldsymbol{\varepsilon}_i\sim q(\boldsymbol{\varepsilon})$ is random noise encouraging profile diversity and $\boldsymbol{\theta}$ is the generator's parameter. Once we obtain the profile of a participant, we can simulate one's decision-making behaviors through the generation components of CrowdLLM (i.e., from reference generation to decision aggregation). We also consider that in reality not all participants participate in all problems. We assume the participation of the $i$-th participant on problem $t$, $\varphi_{t,i}$, satisfies a Bernoulli distribution $\varphi_{i,t}\sim Bernoulli (p_{i,t})$, where $p_{i,t}$ is the probability of participation which can be defined by prior knowledge. 

\subsection{Model Training}

Training of CrowdLLM will only require a small set of real human data, since the LLM-emulated participants are built on a frozen pre-trained LLM and only the generators and the blender need to be trained. The training data includes multiple problems/instances and the decisions of a set of real human participants for each problem/instance. To train CrowdLLM, the loss functions are:
\begin{align}
\begin{aligned}
\mathcal{L}_1=&\frac{1}{N}\sum_{i=1}^N\frac{1}{T_i}\sum_{t=1}^{T_i}\varphi_{i,t}\Bigl\{\mathbb{KL}\Bigl(\mathbb{E}_{\Omega\sim q_{\phi}(\Omega|\boldsymbol{x}_t,\boldsymbol{v}_i)}[q_{\boldsymbol{\beta}}(\boldsymbol{\delta}_{i,t}|\Omega)]\Big\Vert p(\boldsymbol{\delta}_{i,t})\Bigr)\\
&-\mathbb{E}_{\Omega\sim q_{\phi}(\Omega|\boldsymbol{x}_t,\boldsymbol{v}_i)}\Bigl[\mathbb{E}_{\boldsymbol{\delta}_{i,t}\sim q_{\boldsymbol{\beta}}(\boldsymbol{\delta}_{i,t}|\Omega)}\Bigl[\log p(\boldsymbol{x}_t|\boldsymbol{\delta}_{i,t}, \boldsymbol{v}_i) \Bigr]\Bigr]\Bigr\},\\
\mathcal{L}_2=&\frac{1}{N}\sum_{i=1}^N\frac{1}{T_i}\sum_{t=1}^{T_i}\varphi_{i,t}\ell(\hat{y}_{i,t}, y_{i,t}),
\label{eq:loss}
\end{aligned}
\end{align}
where $T_i=\sum_{t=1}^T\varphi_{i,t}$ and $q(\Omega|\boldsymbol{x}_i,\boldsymbol{v}_i)$ is an implicit prior distribution. Here, $\mathcal{L}_1$ follows the semi-implicit VAE style \citep{yin2018semi} and ensures the model sufficiently represents the personal belief of the human participants, while $\mathcal{L}_2$ ensures the final accuracy of CrowdLLM, i.e., the final decision made by a virtual human participant should be close to its real human counterpart’s. The specific form of $\ell(\cdot, \cdot)$ in $\mathcal{L}_2$ depends on the task scenario. For regression-type continuous or ordinal judgment problems, $\ell(\hat{y}_{i,t}, y_{i,t})=||\hat{y}_{i,t}-y_{i,t}||_2^2$ is a common choice; For classification-type choice-making problems, $\ell(\hat{y}_{i,t}, y_{i,t})=\mathbb{I}[\hat{y}_{i,t}=y_{i,t}]$ is widely adopted. The overall loss function is $\mathcal{L}=\mathcal{L}_1+\lambda\mathcal{L}_2$, where $\lambda>0$ is a regularizer. By minimizing the loss function $\mathcal{L}$, we can optimize the parameters of CrowdLLM.

\section{Theoretical Analysis}\label{theory}
In this section, we perform theoretical analysis to further reveal the underlying mechanisms of CrowdLLM about why it can create realistic digital populations. Specifically, we ask the following questions: (Q1) Is CrowdLLM able to generate a target population with envisioned profile characteristics? (Q2) How does the diversity of the digital population generated by CrowdLLM affect the decision-making performance (e.g., accuracy)? And (Q3) How is the decision-making performance impacted by the quality of the LLM backbone and the generative models in CrowdLLM? All the proofs are in the Appendix.

To answer Q1, we can readily extend the theoretical results of generative models in the literature \citep{dahal2022deep, aamari2019estimating}. Specifically, the following theorem shows that it is possible to generate a diverse population of a target profile distribution $\mathcal{T}$ through the profile generator in CrowdLLM:
\begin{theorem}
Suppose the profiles are $d$-dimensional bounded vectors following a target mixed-type distribution $\mathcal{T}$. Consider $\rho$ as an easy-to-sample distribution taken to be uniform on $(0,1)^{d+1}$. For any $\varepsilon\in(0,1)$, there exists a profile generator $G$ building on a generative model that satisfies
\begin{align*}
    W_1(G_{\sharp}\rho, \mathcal{T})< (1+ \sqrt{\frac{2}{\pi}}\eta d)\varepsilon.
\end{align*}
Here, $W_1$ is the Wasserstein-1 distance, $G_{\sharp}\rho$ is the pushforward of $\rho$ by $G_{\sharp}$ which represents the resulting distribution transferred from $\rho$ to the generated profile space, and $\eta$ is a constant.
\end{theorem}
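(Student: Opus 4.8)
The plan is to reduce the mixed-type target $\mathcal{T}$ to an absolutely continuous distribution on a bounded set, invoke the quantitative deep-generator approximation theory of \citep{dahal2022deep, aamari2019estimating} for that regularized target, and pay for the reduction with an explicit extra term in $W_1$. Throughout I take the ground metric on $\mathbb{R}^d$ to be $\ell_1$, so that for $Z\sim\mathcal{N}(\boldsymbol 0, I_d)$ one has $\mathbb{E}\|Z\| = d\,\mathbb{E}|Z_1| = d\sqrt{2/\pi}$. \emph{Step 1 (Gaussian smoothing).} Fix a bandwidth $\sigma = \eta\,\varepsilon$ and set $\mathcal{T}_\sigma := \mathcal{T}*\mathcal{N}(\boldsymbol 0, \sigma^2 I_d)$. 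The pair $(X, X+\sigma Z)$ with $X\sim\mathcal{T}$ and $Z\sim\mathcal{N}(\boldsymbol 0, I_d)$ independent is a coupling of $\mathcal{T}$ and $\mathcal{T}_\sigma$, hence
\[
W_1(\mathcal{T},\mathcal{T}_\sigma)\ \le\ \mathbb{E}\|\sigma Z\|\ =\ \sqrt{\tfrac{2}{\pi}}\,\eta\, d\,\varepsilon .
\]
Unlike $\mathcal{T}$, the smoothed law $\mathcal{T}_\sigma$ is absolutely continuous with a smooth, everywhere-positive density; and since $\mathcal{T}$ has bounded support, for a suitably large ball $B$ the mass of $\mathcal{T}_\sigma$ outside $B$ is negligible, so truncating and renormalizing $\mathcal{T}_\sigma$ to $B$ perturbs it by an arbitrarily small amount in $W_1$. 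Denote the result $\widetilde{\mathcal{T}}_\sigma$, an absolutely continuous law on the bounded set $B$.

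\emph{Step 2 (Exact representation and network approximation).} Because $\widetilde{\mathcal{T}}_\sigma$ is absolutely continuous on a bounded set, the Knothe--Rosenblatt (triangular) rearrangement yields a measurable map $T:(0,1)^{d}\to B$ with $T_{\sharp}\,\mathrm{Unif}((0,1)^d)=\widetilde{\mathcal{T}}_\sigma$; moreover, since the conditional c.d.f.'s of $\widetilde{\mathcal{T}}_\sigma$ are smooth and strictly increasing, $T$ is Lipschitz on compacts. Embedding into the $(d+1)$-cube by letting the generator ignore (or, in a variant, exploit) its last input coordinate, the quantitative rates of \citep{dahal2022deep, aamari2019estimating} produce a generator network $G:(0,1)^{d+1}\to\mathbb{R}^d$ with $\mathbb{E}_{U\sim\rho}\|G(U)-T(U)\| < \varepsilon$, this sub-budget also absorbing the small truncation gap from Step 1. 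Since this expectation is an admissible transport cost, $W_1(G_{\sharp}\rho,\mathcal{T}_\sigma) < \varepsilon$, and the triangle inequality with Step 1 gives
\[
W_1(G_{\sharp}\rho,\mathcal{T})\ \le\ W_1(G_{\sharp}\rho,\mathcal{T}_\sigma)+W_1(\mathcal{T}_\sigma,\mathcal{T})\ <\ \varepsilon + \sqrt{\tfrac{2}{\pi}}\,\eta\, d\,\varepsilon\ =\ \Bigl(1+\sqrt{\tfrac{2}{\pi}}\,\eta\, d\Bigr)\varepsilon ,
\]
as claimed.

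The crux is handling the discrete coordinates of the mixed-type $\mathcal{T}$: the raw inverse-c.d.f. map of a categorical coordinate is a step function, so no continuous generator matches $\mathcal{T}$ exactly, and a naive piecewise-linear surrogate incurs $W_1$-error of order the jump size near each discontinuity, which does not vanish. The Gaussian convolution is precisely the device that removes this — it smooths every conditional c.d.f. into a $C^\infty$ strictly increasing function, rendering $T$ Lipschitz and hence efficiently approximable — at the provable, and essentially unavoidable, cost $\sqrt{2/\pi}\,\eta d\varepsilon$; here $\eta$ is the smallest bandwidth-to-$\varepsilon$ ratio still compatible with $\varepsilon$-approximating $T$ by a network of the allotted size, which is why it survives as a constant. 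The remaining work is to verify that $\widetilde{\mathcal{T}}_\sigma$ satisfies the hypotheses of the cited generator-approximation theorems (bounded support, regular density), which holds by construction, together with the routine estimates — the Gaussian tail bound controlling the truncation, the Lipschitz constant of $T$ as a function of $\sigma$, and the induced network width and depth — which I omit here.
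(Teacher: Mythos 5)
Your proposal is correct and follows essentially the same route as the paper: regularize the mixed-type target into a continuous one at a $W_1$ cost of $\sqrt{2/\pi}\,\eta d\,\varepsilon$ (using $\mathbb{E}\lvert\mathcal{N}(0,\eta^2\varepsilon^2)\rvert=\sqrt{2/\pi}\,\eta\varepsilon$ per coordinate), invoke the continuous-distribution generator theorem of Dahal et al.\ to get the remaining $\varepsilon$, and combine via the triangle inequality. The only cosmetic difference is that the paper smooths just the $\lvert S_2\rvert\le d$ discrete coordinates one at a time with Gaussian mixtures, whereas you convolve the whole vector; your Step 2 also supplies Knothe--Rosenblatt and truncation details the paper leaves inside the cited black box.
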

Theorem 1 indicates that we can build a profile generator to generate meaningful profiles of a target population. However, generating a diverse profile of the digital population doesn't guarantee CrowdLLM's good performance, as the virtual human participants, despite having a diverse profile, may still give similar responses on the same task. Thereby in CrowdLLM we further have the belief generation component to ensure that human participants can generate different responses as they have different profiles. 

\begin{figure}[!b]
    \centering
    \includegraphics[width=0.7\linewidth]{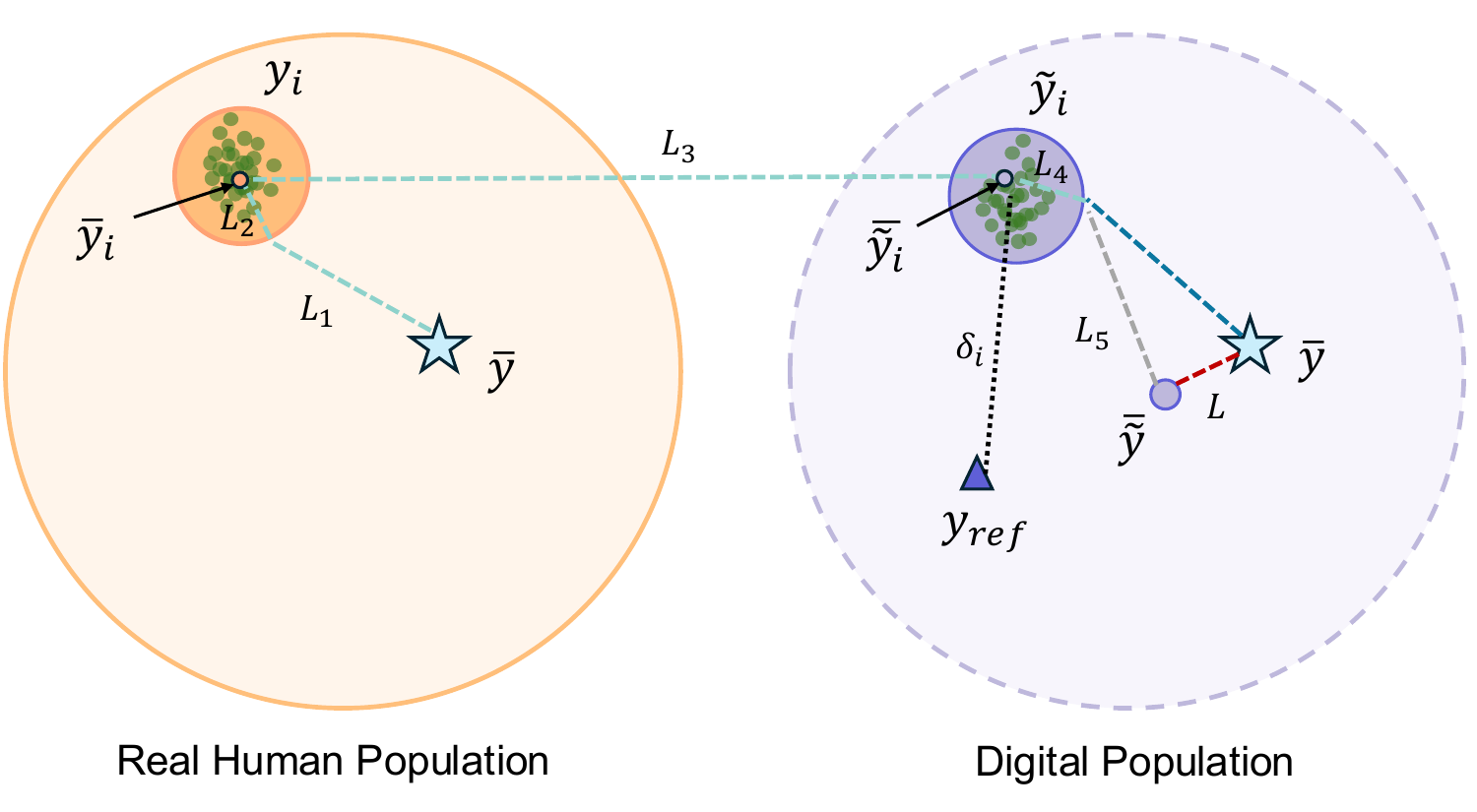}
    \caption{An illustration of the risk decomposition for a specific problem $\boldsymbol{x}$. The yellow circle represents the sample human population $U$ while the purple dashed circle represents their digital counterpart. The balls in the circles represent a physical human individual $u_i$ and their digital counterpart $\tilde{u}_i$. $\overline{y}_i$ and $\overline{\tilde{y}}_i$ are the expected responses of the human individual and the digital individual, respectively. $y_i$ and $\tilde{y}_i$ are their corresponding noisy observations. The empirical mean of the individual noisy responses $\tilde{y}_i$ across the whole digital population is represented by the light purple point $\overline{\tilde{y}}$. The dark purple triangle $y_{ref}$ is the reference response generated by the LLM. The star $\overline{y}$ represents the average response of the sample population, adopted as a ``ground truth". The five components $L_1$ to $L_5$ are explained in Theorem 2.}
\label{fig:fig2}
\end{figure}

Now we answer Q2. Consider a task that is characterized by the problem description $\boldsymbol{x}$. The Bayes-optimal response on this task is the conditional mean response given by the target human population as $y^*=\mathbb{E}_{\boldsymbol{v}\sim\mathcal{T}}\mathbb{E}_{y\sim\mathcal{Y}|\boldsymbol{x},\boldsymbol{v}}[y]$.  With a slight abuse of notation, we can write it as $y^*(\boldsymbol{x})$ interchangeably (similar simplification will be used in the rest of the paper without causing confusion). In practice, however, we typically have no access to this ground truth response. Instead, we rely on a finite sample population $U=\{u_i\}_{i=1,\cdots,N}$ whose profiles $\boldsymbol{v}_1,\cdots, \boldsymbol{v}_N$ are drawn from the target distribution $\mathcal{T}$. Each individual $u_i$ provides a response $y_i$. If we know  $\overline{y}_i=\mathbb{E}_{y_i\sim\mathcal{Y}|\boldsymbol{x},\boldsymbol{v}_i}[y_i]$ which is considered as their rational decision since the operator $\mathbb{E}$ averages out the randomness of their decisions, we can adopt the average of these expected responses across the sampled population $U$, i.e., $y^{**}=\frac{1}{N}\sum_{i=1}^N\overline{y}_i$, as a gold-standard approximation of $y^*$. However, the real human individuals' responses are typically noisy and can be expressed as $y_i=\overline{y}_i+\varepsilon_i$, where $\varepsilon_i\sim\mathcal{N}(0,\eta_i^2)$ captures the individual randomness. Ideally, if we could collect individual responses repeatedly many times, we could accurately estimate the individual-level expected response $\overline{y}_i$, and consequently, obtain an accurate estimate of $y^{**}$. Nevertheless, in practice, an individual typically provides only a single noisy response to a specific problem, which hinders the estimation of $\overline{y}_i$. Therefore, we can only rely on the noisy response $y_i$, and substitute $y^{**}$ with empirical mean $\overline{y}=\frac{1}{N}\sum_{i=1}^Ny_i$. $\overline{y}$ provides an unbiased estimate of $y^{**}$, which provides a ground truth (see the blue star in Figure \ref{fig:fig2}). Recall that to build a digital population, CrowdLLM generates virtual individuals $\tilde{U}=\{\tilde{u}_i\}_{i=1,\cdots, N}$ that mirror the real human individuals $U$. Suppose the digital counterpart of the individual $u_i$, denoted by $\tilde{u}_i$, is generated by CrowdLLM with the same profile $\boldsymbol{v}_i$. Their individual responses, either the expected $\overline{\tilde{y}}_i$ or the noisy $\tilde{y}_i$, can be linked to those of $u_i$, i.e., $\overline{y}_i$ or $y_i$, despite the deviations caused by any model's inherent limitations. Such a physical-digital pair is illustrated in Figure \ref{fig:fig2}. Following the decision-making process of CrowdLLM, we can simplify the notations and express the response to the problem $\boldsymbol{x}$ of the individual $u_i$ generated by CrowdLLM as
\begin{align*}
\tilde{y}_i=y_{ref}+\delta(\boldsymbol{x},\boldsymbol{v}_i)+\tilde{\varepsilon}_i,
\end{align*}
where $y_{ref}=\mathbb{E}[\Phi(\boldsymbol{x})]$ is the reference decision generated by the LLM backbone $\Phi$, $\delta(\cdot)$ denotes the belief generator, and $\tilde{\varepsilon}_i$ is the noise with $\mathbb{E}[\tilde{\varepsilon}_i]=0$ and $Var[\tilde{\varepsilon}_i]=\tilde{\eta}_i^2$ which represents the inherent uncertainty of individual $i$. For simplicity, following Eq. \eqref{blender}, we only consider the blender is additive and $\mathcal{F}$ is normal. Then, we compare these responses with real humans' responses. We only consider the analysis of the average response for a specific problem $\boldsymbol{x}$ and compare $\overline{\tilde{y}}$ with the ground truth $\overline{y}$. Their discrepancy can be measured by a loss function $\ell(\cdot,\cdot)$ as $\ell(\overline{\tilde{y}}, \overline{y})$, e.g., here we focus on the squared loss to conduct our theoretical inquiry. The same proof strategies can be extended to other loss functions, such as KL-divergence. Inspired by the unified theory of diversity \citep{wood2023unified}, we can prove the following theorem:
\begin{theorem}
Consider a digital population generated by CrowdLLM, i.e., $\tilde{U}=\{\tilde{u}_i\}_{i=1,\cdots,N}$ with profiles $\mathcal{Z}=\{\boldsymbol{v}_i\}_{i=1}^N\sim\mathcal{T}$, and their real human counterparts $U=\{u_i\}_{i=1,\cdots,N}$. Suppose the overall expected risk is $L=\mathbb{E}_{\mathcal{T}}\Bigl[\mathbb{E}_{\mathcal{D}}\bigl[\mathbb{E}_{\boldsymbol{x}\sim\mathcal{X},y\sim\mathcal{Y}}[\ell(\overline{y},\overline{\tilde{y}})]\bigr]\Bigr]$. Given the training data $\mathcal{D}=\bigcup_{i=1}^N\mathcal{D}_i$ where $\mathcal{D}_i$ is the data contributed by individual $u_i$, we have the following decomposition over the risk $L$:  
\begin{align*}
L=&\mathbb{E}_{\mathcal{X}}\Bigl[\underbrace{\mathbb{E}_{\mathcal{Z}\sim\mathcal{T}}\bigl[\frac{1}{N}\sum_{i=1}^N\mathbb{E}_{y_i\sim\mathcal{Y}|\mathcal{X},\boldsymbol{v}_i}[\ell(\overline{y},y_i)]\bigr]}_{L_1:\text{Average Human Bias}}+\underbrace{\mathbb{E}_{\mathcal{T}}\bigl[\frac{1}{N}\sum_{i=1}^N\eta_i^2\bigr]}_{L_2:\text{Human Individual Noise}}+\underbrace{\mathbb{E}_{\mathcal{T},\mathcal{D}}\bigl[\frac{1}{N}\sum_{i=1}^N\ell(\overline{y}_i,\overline{\tilde{y}}_i)\bigr]}_{L_3:\text{Twin Discrepancy}}\\ 
&+\underbrace{\mathbb{E}_{\mathcal{T}}\bigl[\frac{1}{N}\sum_{i=1}^N\tilde{\eta}_i^2\bigr]}_{L_4:\text{Allowed Individual Uncertainty}}-\underbrace{{E}_{\mathcal{T}, \mathcal{D}}\bigl[\frac{1}{N}\sum_{i=1}^N\ell(\overline{\tilde{y}},\tilde{y}_i)\bigr]}_{L_5:\text{Digital Population Diversity}}\Bigr]
\end{align*}
\label{thm2}
\end{theorem}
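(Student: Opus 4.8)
The plan is to keep the squared loss $\ell(\overline{y},\overline{\tilde{y}})=(\overline{y}-\overline{\tilde{y}})^2$ exact for as long as possible by stacking two elementary identities: the Krogh--Vedelsby ambiguity identity applied to the digital ensemble $\{\tilde{y}_i\}_{i=1}^N$ (whose arithmetic mean is $\overline{\tilde{y}}$), followed by a telescoping expansion of the resulting per-individual residual along the chain of anchors $\tilde{y}_i\to\overline{\tilde{y}}_i\to\overline{y}_i\to y_i\to\overline{y}$. The ambiguity identity gives, for the fixed comparison point $\overline{y}$ and purely algebraically,
\begin{align*}
(\overline{\tilde{y}}-\overline{y})^2=\frac{1}{N}\sum_{i=1}^N(\tilde{y}_i-\overline{y})^2-\frac{1}{N}\sum_{i=1}^N(\tilde{y}_i-\overline{\tilde{y}})^2 .
\end{align*}
Taking $\mathbb{E}_{\mathcal{T},\mathcal{D}}$ together with the expectations over the human and digital responses, the second sum is exactly $L_5$; this is why ``digital population diversity'' enters with a minus sign, reflecting the ensemble fact that, at a fixed level of average individual error, more dispersion among the digital responses lowers the error of their mean, as formalized in the unified diversity theory \citep{wood2023unified}.

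It then remains to show $\mathbb{E}[\tfrac{1}{N}\sum_i(\tilde{y}_i-\overline{y})^2]=L_1+L_2+L_3+L_4$. Using $\tilde{y}_i=\overline{\tilde{y}}_i+\tilde{\varepsilon}_i$ and $y_i=\overline{y}_i+\varepsilon_i$, I would split the increment into four pieces,
\begin{align*}
\tilde{y}_i-\overline{y}=\underbrace{\tilde{\varepsilon}_i}_{\text{digital noise}}+\underbrace{(\overline{\tilde{y}}_i-\overline{y}_i)}_{\text{twin discrepancy}}+\underbrace{(-\varepsilon_i)}_{\text{human noise}}+\underbrace{(y_i-\overline{y})}_{\text{within-sample dispersion}},
\end{align*}
square it, and average over $i$. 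The four ``diagonal'' contributions become, after all outer expectations, $\mathbb{E}_{\mathcal{T}}[\tfrac1N\sum_i\tilde{\eta}_i^2]=L_4$, $\mathbb{E}_{\mathcal{T},\mathcal{D}}[\tfrac1N\sum_i(\overline{y}_i-\overline{\tilde{y}}_i)^2]=L_3$, $\mathbb{E}_{\mathcal{T}}[\tfrac1N\sum_i\eta_i^2]=L_2$, and $\mathbb{E}[\tfrac1N\sum_i(\overline{y}-y_i)^2]=L_1$; here $L_3$ plays the role of an ``average member bias'' and $L_4$ that of an ``average member variance'', while $L_1$ and $L_2$ come from unpacking the noisy target $\overline{y}$ itself. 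Among the cross terms, every one containing $\tilde{\varepsilon}_i$ vanishes because $\tilde{\varepsilon}_i$ is zero-mean and independent of the belief generator, of the human responses, and of the profiles; and the term pairing $\varepsilon_i$ with the twin discrepancy vanishes because $\overline{y}_i$ and $\overline{\tilde{y}}_i$ are deterministic given $\boldsymbol{v}_i$ and $(\mathcal{D},\boldsymbol{x})$ whereas $\varepsilon_i$ is still zero-mean.

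The hard part is the two cross terms that do not obviously vanish: $\varepsilon_i$ against the dispersion $y_i-\overline{y}$, and the twin discrepancy against that dispersion. The obstruction is that $\overline{y}=\tfrac1N\sum_j y_j$ literally contains the summand $\varepsilon_i/N$, so $\varepsilon_i$ and $y_i-\overline{y}$ are not exactly uncorrelated, and the twin discrepancy need not be exactly orthogonal to the dispersion of the sampled population about its own mean. I would close this gap the same way the bias--variance--diversity trichotomy is turned into an identity rather than a bound \citep{wood2023unified}: take $\overline{y}$ as the population-level reference so that the individual fluctuations are centered and mutually uncorrelated, and assume the digital twins are built to mirror the human individuals rather than to correlate with the sampling noise of the population mean -- equivalently, absorb the $O(1/N)$ covariance corrections. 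Under these hypotheses the two residual cross terms drop, and collecting everything under the outer $\mathbb{E}_{\mathcal{X}}$ yields $L=\mathbb{E}_{\mathcal{X}}[L_1+L_2+L_3+L_4-L_5]$; since both identities used are instances of the Bregman-divergence ambiguity decomposition, the same route covers KL-divergence once the arithmetic mean is replaced by the appropriate dual mean. Verifying that CrowdLLM's generative pipeline genuinely satisfies these orthogonality hypotheses -- so that the decomposition is more than formal -- is the delicate step.
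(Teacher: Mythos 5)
Your proposal follows essentially the same route as the paper: the Krogh--Vedelsby ambiguity identity to peel off $L_5$, followed by an expansion of $\frac{1}{N}\sum_{i}\ell(\overline{y},\tilde{y}_i)$ along the chain $\tilde{y}_i\to\overline{\tilde{y}}_i\to\overline{y}_i\to y_i\to\overline{y}$ (the paper performs this as three sequential bias--variance splits, you as a single four-way expansion; the same collection of cross terms arises either way). Your explicit flagging of the two cross terms involving the dispersion $y_i-\overline{y}$ --- which do not vanish exactly, since $\overline{y}$ contains the summand $\varepsilon_i/N$ --- is in fact more careful than the paper's own proof, which silently drops the corresponding $O(1/N)$ cross terms at each splitting stage, so the only remaining gap in your argument is one the published proof shares.
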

Theorem 2 decomposes the expected risk of CrowdLLM into five parts that correspond to the average human bias, the human individual noise, the discrepancy between the individuals of the physical and digital populations, the allowed individual uncertainty, and the diversity of the digital population. When other components ($L_1$ to $L_4$) are fixed, Theorem 2 shows greater diversity of the digital population ($L_5$) will reduce the expected risk. We can similarly compute the decision-making risk for pure LLMs as:
\begin{proposition}
With the same population $U=\{u_i\}_{i=1,\cdots,N}$ as in Theorem 2, pure LLM-based decision-making with zero-shot prompting yields the following decomposition over its risk $L'=\mathbb{E}_{\mathcal{T},\mathcal{X},\mathcal{Y}}[\ell(\overline{y}, y_{ref})]$:  
\begin{align*}
&L'=L_1+L_2+\mathbb{E}_{\mathcal{T}}\bigl[\frac{1}{N}\sum_{i=1}^N\ell(\overline{y}_i,y_{ref})\bigr]+\eta_{\Phi}(t),
\end{align*}    
where $\eta_{\Phi}(t)\geq0$ measures the randomness of the LLM outputs under temperature $t$.
\label{prop1}
\end{proposition}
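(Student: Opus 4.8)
The plan is to obtain Proposition~\ref{prop1} as the degenerate limit of Theorem~\ref{thm2}, reusing the same decomposition machinery. Under zero-shot prompting the pure-LLM pipeline has no belief generator ($\boldsymbol{\delta}\equiv 0$), no training stage (so the average over $\mathcal{D}$ drops out), and no profile-dependent personalization: every virtual respondent $\tilde{u}_i$ returns the one reference output $y_{ref}$ produced by a single query to the frozen backbone $\Phi$. Unlike Theorem~\ref{thm2}, where $y_{ref}=\mathbb{E}[\Phi(\boldsymbol{x})]$ is the temperature-averaged decision, here $y_{ref}$ is a single random draw, and this is exactly what produces the extra $\eta_{\Phi}(t)$ term. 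Since $\tilde{y}_i=y_{ref}$ for all $i$ we have $\overline{\tilde{y}}=y_{ref}$, so the risk collapses to $L'=\mathbb{E}_{\mathcal{T},\mathcal{X},\mathcal{Y}}[\ell(\overline{y},y_{ref})]$, which is the object to be decomposed.

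First I would write $y_{ref}=\mu_{\Phi}(\boldsymbol{x})+\varepsilon_{\Phi}$ with $\mu_{\Phi}(\boldsymbol{x})=\mathbb{E}[y_{ref}]$, $\mathbb{E}[\varepsilon_{\Phi}]=0$ and $\mathbb{E}[\varepsilon_{\Phi}^{2}]=:\eta_{\Phi}(t)$, the temperature-dependent sampling variance of the LLM, taken independent of all human-side randomness; a bias--variance split over this single noise source peels off the additive $\eta_{\Phi}(t)$ and leaves $\mathbb{E}[\ell(\overline{y},\mu_{\Phi})]$. Next I would apply the squared-loss ambiguity identity that underlies the proof of Theorem~\ref{thm2}, namely $\ell\bigl(a,\tfrac{1}{N}\sum_{i}b_{i}\bigr)=\tfrac{1}{N}\sum_{i}\ell(a,b_{i})-\tfrac{1}{N}\sum_{i}\ell\bigl(\tfrac{1}{N}\sum_{j}b_{j},b_{i}\bigr)$, with $\{b_{i}\}$ the human responses $\{y_{i}\}$, and expand each per-individual term through $y_{i}=\overline{y}_{i}+\varepsilon_{i}$ ($\varepsilon_{i}$ zero-mean, variance $\eta_{i}^{2}$, independent across $i$ and of $\Phi$). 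Carrying out the same bookkeeping as in Theorem~\ref{thm2}, the human-side pieces reproduce the average human bias $L_{1}$ and the individual noise $L_{2}=\mathbb{E}_{\mathcal{T}}[\tfrac{1}{N}\sum_{i}\eta_{i}^{2}]$, while the cross quantity between the LLM output and each individual's rational decision $\overline{y}_{i}$ reproduces the term $\mathbb{E}_{\mathcal{T}}[\tfrac{1}{N}\sum_{i}\ell(\overline{y}_{i},y_{ref})]$. Finally I would note that the two CrowdLLM-specific terms of Theorem~\ref{thm2} degenerate: the diversity term $L_{5}=\mathbb{E}[\tfrac{1}{N}\sum_{i}\ell(\overline{\tilde{y}},\tilde{y}_{i})]$ vanishes identically because every $\tilde{y}_{i}$ equals $\overline{\tilde{y}}=y_{ref}$, and the allowed individual uncertainty $L_{4}=\mathbb{E}[\tfrac{1}{N}\sum_{i}\tilde{\eta}_{i}^{2}]$ collapses onto the single shared $\eta_{\Phi}(t)$ already extracted. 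Re-assembling the pieces and restoring the outer $\mathbb{E}_{\mathcal{X}}$ yields the claimed identity; the absence of any $-L_{5}$ contribution is precisely the quantitative statement that a single zero-shot LLM forgoes the diversity dividend that CrowdLLM obtains from its belief generator.

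I expect the main obstacle to be the careful bookkeeping of the nested expectations over $\mathcal{T}$, $\mathcal{X}$ and $\mathcal{Y}$ together with the verification that every cross-term vanishes, which rests on the independence of the LLM sampling noise $\varepsilon_{\Phi}$ from the human responses and on the zero-mean property of $\varepsilon_{\Phi}$ and of each $\varepsilon_{i}$, exactly as in the proof of Theorem~\ref{thm2}. A related subtlety is that the ground truth $\overline{y}=\tfrac{1}{N}\sum_{j}y_{j}$ itself contains each $y_{i}$, so the split of $\ell(\overline{y},y_{i})$ must be carried out in the same conditioned fashion used for Theorem~\ref{thm2}; this is the only place where the derivation is not a one-line calculation, and it is why the leading term appears in the ``average human bias'' form $\tfrac{1}{N}\sum_{i}\ell(\overline{y},y_{i})$ rather than as a variance of the $\{\overline{y}_{i}\}$. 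It is also worth being explicit that ``zero-shot'' here is taken to mean a single LLM query whose output is reused for all virtual respondents, since it is this modelling choice --- one shared draw instead of $N$ independent ones --- that simultaneously forces $L_{5}=0$ and sends $L_{4}$ to the lone $\eta_{\Phi}(t)$.
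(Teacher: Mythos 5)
First, a point of reference: the paper never prints a proof of Proposition~\ref{prop1} --- it is stated and then invoked directly inside the proof of Theorem~\ref{thm3} --- so the only available comparison is against the derivation implied by the proof of Theorem~\ref{thm2}. Your high-level plan (specialize the Theorem~\ref{thm2} machinery to the degenerate case $\boldsymbol{\delta}\equiv 0$ and $\tilde y_i\equiv y_{ref}$, peel off the LLM sampling variance via $y_{ref}=\mu_\Phi+\varepsilon_\Phi$ to produce the additive $\eta_\Phi(t)$, and note that the diversity term $L_5$ vanishes because all digital members coincide) is exactly the intended reading, and those parts are fine.

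The gap is in the middle step. You propose to apply the ambiguity identity with the \emph{human} responses $\{y_i\}$ as the ensemble, i.e. $\ell(\mu_\Phi,\overline y)=\frac{1}{N}\sum_i\ell(\mu_\Phi,y_i)-\frac{1}{N}\sum_i\ell(\overline y,y_i)$. But the subtracted term is precisely the quantity whose expectation is $L_1$ (the paper defines $L_1=\mathbb{E}[\frac{1}{N}\sum_i\ell(\overline y,y_i)]$), and in the ambiguity identity it enters with a \emph{minus} sign, since it plays the role of the ensemble's diversity. Expanding $y_i=\overline y_i+\varepsilon_i$ in the remaining average then gives $\mathbb{E}[\ell(\overline y,\mu_\Phi)]=\mathbb{E}[\frac{1}{N}\sum_i\ell(\overline y_i,y_{ref})]+L_2-L_1$, so your route terminates at $L'=-L_1+L_2+\mathbb{E}[\frac{1}{N}\sum_i\ell(\overline y_i,y_{ref})]+\eta_\Phi(t)$, which is not the stated decomposition. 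In the paper's scheme the ambiguity decomposition is only ever applied to the \emph{digital} ensemble (where it is vacuous here, every $\tilde y_i$ being equal to $y_{ref}$), and $L_1$ instead arises from the chain of additive bias--noise insertions used on the first term in the proof of Theorem~\ref{thm2}: write $\mathbb{E}[\ell(\overline y,\mu_\Phi)]=\mathbb{E}[\frac{1}{N}\sum_i\ell(\overline y,\overline y_i)]+\mathbb{E}[\frac{1}{N}\sum_i\ell(\overline y_i,y_{ref})]$ and then split $\mathbb{E}[\frac{1}{N}\sum_i\ell(\overline y,\overline y_i)]=L_1+L_2$ exactly as in that proof; this is what places $L_1$ with a plus sign. (Whether those Pythagorean splits are exact, given that $\overline y$ contains $y_i$, is a looseness already present in the proof of Theorem~\ref{thm2}; but to reproduce the proposition as stated you must follow that chain rather than the ambiguity identity over the human crowd.)
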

With Theorem \ref{thm2} and Proposition \ref{prop1}, we can further provide a sufficient condition under which CrowdLLM outperforms pure LLM-based decision-making. Interestingly, this sufficient condition is built on the quality of the LLM backbones.
\begin{assumption}[Quality of the LLM backbone]
Given any specific task $\boldsymbol{x}$, for $\alpha\in(0,1)$ and $\gamma\in (0,\alpha)$, there exists a constant $\kappa_\alpha$, such that the deviation of the response given by the LLM backbone $\Phi$ from the gold-standard response $y^{**}(\boldsymbol{x})$ given by the human population is bounded with probability at least $1-\alpha$, i.e., $P(|\mathbb{E}[\Phi(\boldsymbol{x})]-y^{**}(\boldsymbol{x})|\leq \kappa_\alpha)\geq 1-\alpha$. 
\end{assumption}
With a guarantee on the quality of the LLM backbone, we have the following theorem: 
\begin{theorem}
Consider a digital population generated by CrowdLLM $\tilde{U}=\{\tilde{u}_i\}_{i=1,\cdots,N}$. For a specific problem $\boldsymbol{x}$, assume the belief biases of $\tilde{U}$ is $\delta_1,\cdots,\delta_N$, with a mean $\mu_{\delta}=\frac{1}{N}\sum_{i=1}^N\delta_i$ and the second moment $\varepsilon^2_\delta=\frac{1}{N}\sum_{i=1}^N\delta_i^2$. Suppose the deviation between the gold-standard response and the reference decision given by the LLM backbone is $\Delta=y^{**}(\boldsymbol{x})-y_{ref}$. We can construct an interval 
\begin{align*}
\mathcal{B}_\alpha(\Delta)=\Bigl[\Delta - h_{\Delta}(\kappa_\alpha), \Delta + h_{\Delta}(\kappa_\alpha)\Bigr],    
\end{align*}
where
\begin{align*}
&h_{\Delta}(\kappa_\alpha)=
\left\{\begin{aligned}
&h_1, \quad   \text{when }\frac{N-2}{N}\sqrt{\frac{(N-2)\varepsilon_{\delta}^2+N\eta(t)}{2}}\geq \kappa_{\alpha}\\
\\
&h_2, \quad \text{when }\frac{N-2}{N}\sqrt{\frac{(N-2)\varepsilon_{\delta}^2+N\eta(t)}{2}}< \kappa_{\alpha},\\
\end{aligned}\right.\\
&h_1=\frac{\sqrt{N^2\kappa_{\alpha}^2+2(N-1)[(N-2)\varepsilon_{\delta}^2+N\eta(t)]}-(N-2)\kappa_{\alpha}}{2(N-1)},\\
&h_2=\frac{\sqrt{2[(N-2)\varepsilon_{\delta}^2+N\eta(t)]}}{N},
\end{align*}
such that when $\mu_{\delta}\in \mathcal{B}_\alpha(\Delta)$, with probability at least $1-\alpha$, CrowdLLM leads to a smaller expected risk than its pure LLM-based counterpart, i.e., $L\leq L'$.
\label{thm3}
\end{theorem}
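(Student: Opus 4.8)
The plan is to prove $L\le L'$ by subtracting the decomposition of Proposition~\ref{prop1} from that of Theorem~\ref{thm2}. The ``average human bias'' term $L_1$ and the ``human individual noise'' term $L_2$ occur identically in both expressions and cancel, so the claim reduces to a comparison of the CrowdLLM-specific terms $L_3+L_4-L_5$ against the pair $\tfrac1N\sum_i\ell(\overline y_i,y_{ref})$ and $\eta(t)$ coming from the pure-LLM side (I write $\eta(t)$ for the LLM-randomness quantity $\eta_\Phi(t)$ of Proposition~\ref{prop1}, up to the normalization carried through below). Throughout I fix the problem $\boldsymbol x$ and condition on the realized profiles $\boldsymbol v_1,\dots,\boldsymbol v_N$ and belief biases $\delta_1,\dots,\delta_N$, as in the hypothesis of Theorem~\ref{thm3}; the only remaining randomness is the per-individual response noise together with the LLM sampling noise. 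The target is to show that the inequality is equivalent to a one-variable quadratic condition on $\mu_\delta$ centered at $\Delta=y^{**}(\boldsymbol x)-y_{ref}$, and then to invoke the stated quality assumption on the LLM backbone to control the one piece of that quadratic that involves unobservables.

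First I would substitute the closed forms implied by the additive-normal blender \eqref{blender} and the response model $\tilde y_i=y_{ref}+\delta_i+\tilde\varepsilon_i$. The twin discrepancy is $L_3=\tfrac1N\sum_i(\overline y_i-y_{ref}-\delta_i)^2$, the allowed uncertainty is $L_4=\tfrac1N\sum_i\tilde\eta_i^2$, and $L_5$ is the expected empirical variance of the digital responses, $L_5=(\varepsilon_\delta^2-\mu_\delta^2)+\tfrac{N-1}{N}\cdot\tfrac1N\sum_i\tilde\eta_i^2$. Writing $r_i:=\overline y_i-y_{ref}$, so that $\tfrac1N\sum_i r_i=\Delta$ by definition of $y^{**}$, expanding the square in $L_3$ and collecting terms, the $\varepsilon_\delta^2$ produced by $L_3$ cancels against the $\varepsilon_\delta^2$ inside $-L_5$, the term $\tfrac1N\sum_i r_i^2$ of $L_3$ cancels against the analogous term in $L'$, and the noise pieces of $L_4$ and $L_5$ combine into a single $O(1/N)$ residual. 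What survives is $L'-L=-(\mu_\delta-\Delta)^2+\Delta^2+2\,\mathrm{cov}(r,\delta)+\eta(t)-\tfrac1{N^2}\sum_i\tilde\eta_i^2$, where $\mathrm{cov}(r,\delta)=\tfrac1N\sum_i(r_i-\Delta)(\delta_i-\mu_\delta)$. Hence $L\le L'$ is equivalent to $(\mu_\delta-\Delta)^2\le\Delta^2+2\,\mathrm{cov}(r,\delta)+\eta(t)-\tfrac1{N^2}\sum_i\tilde\eta_i^2$, whose right side contains exactly one unobservable ingredient, namely the $r_i$'s.

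The next step invokes the quality of the LLM backbone. Only the mean $\tfrac1N\sum_i r_i=\Delta$ is determined by the model; the individual $r_i$'s, hence $\mathrm{cov}(r,\delta)$, are not. By the stated quality assumption, $|\Delta|\le\kappa_\alpha$ holds with probability at least $1-\alpha$, and on this event I would replace the right-hand side by its worst case over all $\{r_i\}$ compatible with the mean constraint and with $|\Delta|\le\kappa_\alpha$. This is a constrained extremization whose solution concentrates the ``dispersion'' of the $r_i$ onto a few coordinates; carrying it out produces the factors $N-1$ and $N-2$ and the leading $2$, and turns the target inequality into $(N-1)s^2+(N-2)\kappa_\alpha s-\kappa_\alpha^2-\tfrac12\big((N-2)\varepsilon_\delta^2+N\eta(t)\big)\le0$ with $s:=|\mu_\delta-\Delta|\ge0$. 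The left side is an upward parabola in $s$ with a unique nonnegative root, which is exactly $h_1$, so the inequality reads $s\le h_1$. When $\kappa_\alpha$ is small enough that the constraint $|\Delta|\le\kappa_\alpha$ is not active in the worst case --- precisely when $\tfrac{N-2}{N}\sqrt{\tfrac{(N-2)\varepsilon_\delta^2+N\eta(t)}{2}}\ge\kappa_\alpha$ --- the binding bound degenerates to the linear-term-free inequality $N^2 s^2\le 2\big((N-2)\varepsilon_\delta^2+N\eta(t)\big)$, i.e.\ $s\le h_2$. A direct computation (squaring $h_1=h_2$) shows the two expressions coincide at the threshold, so $h_\Delta(\kappa_\alpha)$ is well defined and continuous, and $s\le h_\Delta(\kappa_\alpha)$ is the same statement as $\mu_\delta\in\mathcal B_\alpha(\Delta)$. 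Intersecting with the probability-$(1-\alpha)$ event from the quality assumption gives $L\le L'$ with probability at least $1-\alpha$.

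The main obstacle is the constrained worst case in the third step: the decomposition unavoidably leaves a term depending on the unobservable individual human means $\overline y_i$ --- only their average $y^{**}$, equivalently $\Delta$, is identified --- so one must bound $\mathrm{cov}(r,\delta)$ (and implicitly the residual spread $\tfrac1N\sum_i(r_i-\Delta)^2$) using nothing beyond the mean constraint and the high-probability bound $|\Delta|\le\kappa_\alpha$. A crude Cauchy--Schwarz estimate here would leak the unknown spread into the final statement; it is the tight, attainable bound that produces the exact coefficients $N,\,N-1,\,N-2$, the factor $2$, and the two-regime form of $h_\Delta$. Everything before it --- matching and cancelling $L_1,L_2$, inserting the closed forms of $L_3,L_4,L_5$ --- is routine bookkeeping, and everything after it is solving a single quadratic.
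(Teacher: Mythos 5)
Your skeleton is right --- subtract Proposition~\ref{prop1} from Theorem~\ref{thm2}, cancel $L_1,L_2$, reduce to a quadratic in $\mu_\delta$ centered near $\Delta$, and handle the two regimes by whether a stationary point falls inside $|\Delta|\le\kappa_\alpha$ --- and you have even reverse-engineered the correct final quadratic $(N-1)s^2+(N-2)\kappa_\alpha s-\kappa_\alpha^2-\tfrac12[(N-2)\varepsilon_\delta^2+N\eta(t)]\le 0$. But the middle step, which you yourself flag as ``the main obstacle,'' is a genuine gap, not a technicality. Because you take the twin discrepancy as $L_3=\tfrac1N\sum_i(\overline y_i-y_{ref}-\delta_i)^2$, a cross term $\mathrm{cov}(r,\delta)=\tfrac1N\sum_i(r_i-\Delta)(\delta_i-\mu_\delta)$ survives, and this quantity is \emph{unbounded below} given only the mean constraint $\tfrac1N\sum_i r_i=\Delta$: take $r_i=\Delta-c(\delta_i-\mu_\delta)$ and let $c\to\infty$. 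The ``worst case over all $\{r_i\}$ compatible with the mean constraint'' is therefore $-\infty$, the sufficient condition it yields is vacuous, and no ``tight, attainable bound'' producing the coefficients $N-1$ and $N-2$ exists along this route. Those coefficients do not come from an extremization over the human residuals at all.

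The paper's proof avoids the covariance term entirely by evaluating the twin discrepancy at $\overline{\tilde y}_i=y_{ref}+\tfrac1N\sum_j\delta_j$, so that $L_3-\tfrac1N\sum_i\ell(\overline y_i,y_{ref})=\mu_\delta^2-2\mu_\delta\Delta$ depends on the $r_i$ only through their mean $\Delta$; the factors $N-1$ and $N-2$ then arise from the identity $\tfrac1N\sum_i(\delta_i-\mu_\delta)^2=\varepsilon_\delta^2-\mu_\delta^2$ together with a Cauchy--Schwarz bound $\tfrac1{N^2}\sum_i\mathbb{E}^2[\delta_i]\ge\tfrac1N\mu_\delta^2$ applied to $\mathbb{E}[P^2]$ with $P=\tfrac1N\sum_i\delta_i$. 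This gives $L-L'\le\tfrac{2(N-1)}{N}\mu_\delta^2-2\mu_\delta\Delta-\tfrac{N-2}{N}\varepsilon_\delta^2-\eta(t)$, whose root interval in $\mu_\delta$ is asymmetric about $\Delta$ with half-widths $\Delta_L,\Delta_U$; the two-regime form of $h_\Delta(\kappa_\alpha)$ comes from minimizing $\min(\Delta_L,\Delta_U)$ over $|\Delta|\le\kappa_\alpha$ (interior stationary point at $\pm\Delta_0$ versus endpoint $|\Delta|=\kappa_\alpha$), not from whether a residual constraint ``is active.'' To repair your argument you would either have to adopt the paper's form of $\overline{\tilde y}_i$ (with the interpretive cost that the digital twin's expected response is the population-average belief rather than the individual one), or introduce an explicit additional assumption bounding $\mathrm{cov}(r,\delta)$ or the spread $\tfrac1N\sum_i(r_i-\Delta)^2$, which the theorem as stated does not provide.
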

Theorem \ref{thm3} indicates that to ensure that CrowdLLM outperforms the LLM backbone, the mean belief bias $\mu_{\delta}$ should not be too far away from $\Delta$ (i.e., $\Delta$ quantifies the deviation of the reference decision by the LLM backbone to the ground truth). It is easy to see that the interval width merely relies on $h_{\Delta}(\kappa_\alpha)$. When the size of digital population $N$ is small, $h_{\Delta}(\kappa_\alpha)=h_2$ does not depend on $\kappa_\alpha$, but is instead directly affected by $N$. When $N$ is large enough, $h_{\Delta}(\kappa_\alpha)=h_1$ is monotonically decreasing in $\kappa_\alpha$, which means that if $\kappa_\alpha$ turns larger, we need a tighter interval that covers the belief bias to ensure $L\leq L'$. It suggests that if CrowdLLM is built with a lower-quality LLM backbone, the mean belief bias $\mu_{\delta}$ needs to be closer to $\Delta$, whereas a higher-quality LLM backbone will provide greater tolerance to guarantee CrowdLLM's performance. Moreover, the diversity of the digital population, reflected in $\varepsilon^2_\delta$ and $\eta_{\Phi}(t)$, can somehow alleviate these constraints and offer even more capacity for CrowdLLM to surpass the LLM backbone. This theoretical analysis also reveals why LLM itself can't generate the needed diversity, since it is not just statistical derivations from a mean but needs to be productive in a specific context (e.g., which corresponds to a diverse distribution of participants' profiles). We know that an LLM can balance coherence and novelty in its responses by adjusting the temperature $t$, but from Theorem 3, we see that increasing $t$ to improve novelty in responses can actually enlarge the gap between the LLM and CrowdLLM since it leads to more incoherence and noise of the LLM backbone. In contrast, in CrowdLLM, more diversity associated with a larger $\varepsilon^2_\delta$ might also result in an increase in $h_{\Delta}(\kappa_\alpha)$, which allows $\mu_{\delta}$ to deviate more from $\Delta$ while still ensuring the superiority of CrowdLLM. This answers Q3 as well.

We can further develop a confidence interval for CrowdLLM to cover the ground truth $y^*$. First, we restate the Theorem 1 in \citep{angelopoulos2023prediction} in the context of our problem as follows:
\begin{theorem}
Given any specific task $\boldsymbol{x}$, suppose $\theta^*\triangleq y^{*}=\mathbb{E}[y|\boldsymbol{x}]$ is the population mean response to be estimated. Consider a model $f$ learned from the data to predict $y$. With $\alpha\in(0,1)$ and $\gamma\in(0,\alpha)$ fixed, suppose that for any possible $\theta$, we can construct confidence sets $B^1_{\gamma}(\theta)$ and $B^2_{\alpha-\gamma}(\theta)$ satisfying $ P(\mathbb{E}_{\boldsymbol{v}\sim\mathcal{T}}[f(\boldsymbol{x},\boldsymbol{v})-y]\in B^1_{\gamma}(\theta))\geq 1-\gamma$ and $P(\mathbb{E}_{\boldsymbol{v}\sim\mathcal{T}}[\theta-f(\boldsymbol{x},\boldsymbol{v})]\in B^2_{\alpha-\gamma}(\theta))\geq 1-(\alpha-\gamma)$. Let $B^{y}_\alpha=\{\theta|\exists \theta_1\in B^1_{\gamma}(\theta),\theta_2\in B^2_{\alpha-\gamma}(\theta)\;\; s.t.\;\; \theta_1+\theta_2=0\}$. Then, we have $P(\theta^*\in B^{y}_\alpha)\geq 1-\alpha$.
\end{theorem}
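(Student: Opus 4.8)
The plan is to reproduce the rectifier-decomposition argument that underlies prediction-powered inference. First I would isolate the two deterministic population quantities at play for the fixed task $\boldsymbol{x}$: the model's mean prediction $\mu_f \triangleq \mathbb{E}_{\boldsymbol{v}\sim\mathcal{T}}[f(\boldsymbol{x},\boldsymbol{v})]$ and the rectifier (expected prediction bias) $\Delta_f \triangleq \mathbb{E}_{\boldsymbol{v}\sim\mathcal{T}}[f(\boldsymbol{x},\boldsymbol{v})-y]$, where the inner randomness of $y$ given $(\boldsymbol{x},\boldsymbol{v})$ is averaged. Since $\theta^* = \mathbb{E}[y\mid\boldsymbol{x}] = \mathbb{E}_{\boldsymbol{v}\sim\mathcal{T}}\mathbb{E}[y\mid\boldsymbol{x},\boldsymbol{v}]$, linearity of expectation yields the elementary identity $\Delta_f = \mu_f - \theta^*$, equivalently $\Delta_f + (\theta^* - \mu_f) = 0$. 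This identity is the whole engine of the proof: the true target satisfies a linear constraint relating the two error quantities the two confidence sets are built to cover.

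Next I would instantiate the two assumed coverage guarantees at the true value $\theta=\theta^*$ (which is legitimate because the assumptions hold for every possible $\theta$). Put $\theta_1^\star \triangleq \mathbb{E}_{\boldsymbol{v}\sim\mathcal{T}}[f(\boldsymbol{x},\boldsymbol{v})-y] = \Delta_f$; the first assumption gives $P\bigl(\theta_1^\star \in B^1_\gamma(\theta^*)\bigr) \ge 1-\gamma$. Put $\theta_2^\star \triangleq \mathbb{E}_{\boldsymbol{v}\sim\mathcal{T}}[\theta^* - f(\boldsymbol{x},\boldsymbol{v})] = \theta^* - \mu_f = -\Delta_f$; the second assumption gives $P\bigl(\theta_2^\star \in B^2_{\alpha-\gamma}(\theta^*)\bigr) \ge 1-(\alpha-\gamma)$. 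By the identity from the previous step, $\theta_1^\star + \theta_2^\star = 0$ holds deterministically, so $(\theta_1^\star,\theta_2^\star)$ is exactly the kind of witness pair appearing in the definition of $B^y_\alpha$.

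Then I would close the argument with a union bound. On the event $E \triangleq \{\theta_1^\star \in B^1_\gamma(\theta^*)\} \cap \{\theta_2^\star \in B^2_{\alpha-\gamma}(\theta^*)\}$, which has $P(E) \ge 1 - \gamma - (\alpha-\gamma) = 1-\alpha$, the pair $(\theta_1,\theta_2) = (\theta_1^\star,\theta_2^\star)$ certifies that there exist $\theta_1 \in B^1_\gamma(\theta^*)$ and $\theta_2 \in B^2_{\alpha-\gamma}(\theta^*)$ with $\theta_1 + \theta_2 = 0$; by the definition of $B^y_\alpha$ this means $\theta^* \in B^y_\alpha$. Hence $P(\theta^* \in B^y_\alpha) \ge P(E) \ge 1-\alpha$, which is the claim.

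There is no genuinely difficult step here; the points I would state explicitly rather than the main obstacle are: (i) the confidence sets are permitted to depend on the hypothesized $\theta$, so the coverage statements must be invoked precisely at $\theta=\theta^*$, not at a generic or data-dependent value; (ii) the split of the miscoverage budget into $\gamma$ and $\alpha-\gamma$ is exactly what makes the union bound close at level $\alpha$, so the choice of $\gamma$ is a free tuning parameter that does not affect validity; and (iii) the existential "$\exists\,\theta_1,\theta_2$ with $\theta_1+\theta_2=0$" in the definition of $B^y_\alpha$ is satisfied by the canonical witnesses $\theta_1 = \Delta_f$, $\theta_2 = -\Delta_f$, so the lower bound requires no optimization over witnesses. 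If a measurability discussion is wanted, I would additionally remark that $B^y_\alpha = \{\theta : 0 \in B^1_\gamma(\theta) \oplus B^2_{\alpha-\gamma}(\theta)\}$ is a Minkowski-type combination of the two confidence sets, so measurability of the event $\{\theta^*\in B^y_\alpha\}$ follows from that of the constituent sets.
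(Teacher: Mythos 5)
Your proposal is correct and follows essentially the same route as the paper's proof: both instantiate the two coverage guarantees at $\theta=\theta^*$, apply a union (Bonferroni) bound to get the intersection event with probability at least $1-\alpha$, and use the identity $\mathbb{E}[\theta^*-y|\boldsymbol{x}]=0$ to exhibit the witness pair $(\Delta_f,-\Delta_f)$ certifying $\theta^*\in B^y_\alpha$. Your version merely makes the witnesses explicit where the paper phrases the same fact as $0\in B^1_{\gamma}(\theta^*)+B^2_{\alpha-\gamma}(\theta^*)$.
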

Inspired by this result, we can further show the following theorem:
\begin{theorem} 
Consider CrowdLLM with an LLM backbone $\Phi$ and a belief generator $\delta$ under an additive blender. Assume $\Phi$ has the randomness $\eta(t)$ that can only be changed through adjusting the temperature $t$. Fix $\alpha\in(0,1)$ and $\gamma\in(0,\alpha)$. Given any specific task $\boldsymbol{x}$, suppose we have $n$ real human responses $y_1,\cdots, y_n$ taking numeric values for this task in the training data. Consider a digital population of size $N$ generated by CrowdLLM, $\tilde{U}=\{u_1,\cdots, u_N\}$ with profiles $\boldsymbol{v}_1,\cdots, \boldsymbol{v}_N$. Here, we assume $\frac{n}{N}\to p\in(0,1)$. Suppose their decisions are $\tilde{y}_1,\cdots, \tilde{y}_N$, where $\tilde{y}_i=\Phi^{(i)}(\boldsymbol{x})+\delta_i$ with $\Phi^{(i)}(\boldsymbol{x})$ being the $i$th response sampled from $\Phi$ and $\delta_i=\delta(\boldsymbol{x},\boldsymbol{v}_i)$ being the belief biases. Define $\sigma^2_{\delta}=\frac{1}{N}\sum_{i=1}^N(\delta_i-\overline{\delta})^2$ and $\sigma^2_{r}=\frac{1}{n}\sum_{i=1}^n(r_i-\overline{r})^2$ where $\overline{\delta}=\frac{1}{N}\sum_{i=1}^N\delta_i$, $r_i=y_i-\tilde{y}_i$ and $\overline{r}=\frac{1}{n}\sum_{i=1}^nr_i$. Suppose the training error can be bounded by a small tolerance $\varepsilon_0^2$, i.e., $\frac{1}{n}\sum_{i=1}^n(y_i-\tilde{y}_i)^2\leq \varepsilon_0^2$. Then, we can build a confidence interval centered on the aggregated decision $\overline{\tilde{y}}=\frac{1}{N}\sum_{i=1}^N\tilde{y}_i$:
\begin{align*}
    B^y_\alpha=\Bigl\{y:|y-\overline{\tilde{y}}|\leq \varepsilon_0+z_{1-\frac{\alpha}{2}}\sqrt{\frac{\eta(t)}{N}+\frac{\sigma^2_{\delta}}{N}+\frac{\sigma^2_{r}}{n}}\Bigr\},
\end{align*}
where $z_{1-\frac{\alpha}{2}}$ is the $1-\frac{\alpha}{2}$ quantile of the standard normal distribution, such that the ground truth $y^*$ satisfies
\begin{align*}
   \mathop{\lim\inf}_{n,N\to\infty} P(y^*\in B^y_{\alpha})\geq 1-\alpha.
\end{align*}
\end{theorem}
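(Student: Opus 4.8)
The strategy is to view $B^y_\alpha$ as an inflation, by the deterministic training-error term $\varepsilon_0$, of a prediction-powered confidence interval (in the spirit of the restated theorem just above) built around the bias-corrected estimator
\[
\hat\theta \;=\; \overline{\tilde y} + \overline r \;=\; \frac1N\sum_{i=1}^N \tilde y_i \;+\; \frac1n\sum_{i=1}^n (y_i-\tilde y_i),
\]
where $\overline{\tilde y}$ is the ``prediction over the digital population'' and $\overline r$ is the rectifier estimated from the $n$ labeled human-twin pairs. First I would check exact unbiasedness: since the profiles are i.i.d.\ from $\mathcal T$ and each LLM draw $\Phi^{(i)}(\boldsymbol x)$ is independent of its profile, $\mathbb E[\overline{\tilde y}] = \mathbb E[\Phi(\boldsymbol x)] + \mathbb E_{\boldsymbol v\sim\mathcal T}[\delta(\boldsymbol x,\boldsymbol v)]$ while $\mathbb E[\overline r] = \mathbb E[y] - \mathbb E[\tilde y] = y^* - \mathbb E[\overline{\tilde y}]$, so $\mathbb E[\hat\theta] = y^*$.

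Second, I would pin down the asymptotic law of $\hat\theta$. Write $\hat\theta - y^* = (\overline{\tilde y} - \mathbb E\overline{\tilde y}) + (\overline r - \mathbb E\overline r)$; each summand is an average of i.i.d.\ mean-zero, finite-variance terms, with $\mathrm{Var}(\tilde y_i) = \eta(t) + \mathrm{Var}_{\boldsymbol v\sim\mathcal T}(\delta)$ (no cross term, since $\Phi^{(i)}$ is independent of $\boldsymbol v_i$). Treating the digital sample feeding $\overline{\tilde y}$ and the $n$ twin pairs feeding $\overline r$ as independent (the usual data-splitting convention of prediction-powered inference), the central limit theorem together with $N/n\to 1/p$ gives $\sqrt N(\hat\theta - y^*)\xrightarrow{d}\mathcal N(0,V)$ with $V = \bigl(\eta(t) + \mathrm{Var}_{\boldsymbol v\sim\mathcal T}(\delta)\bigr) + p^{-1}\mathrm{Var}(r)$. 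By the strong law $\sigma^2_\delta \to \mathrm{Var}_{\boldsymbol v\sim\mathcal T}(\delta)$ and $\sigma^2_r \to \mathrm{Var}(r)$, so the plug-in normalizer obeys $N\bigl(\tfrac{\eta(t)}{N} + \tfrac{\sigma^2_\delta}{N} + \tfrac{\sigma^2_r}{n}\bigr)\xrightarrow{p} V$; by Slutsky the studentized statistic converges to $\mathcal N(0,1)$, hence $P\bigl(|\hat\theta - y^*|\le z_{1-\alpha/2}\sqrt{\tfrac{\eta(t)}{N} + \tfrac{\sigma^2_\delta}{N} + \tfrac{\sigma^2_r}{n}}\bigr)\to 1-\alpha$.

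Third, I would pay for re-centering the interval at $\overline{\tilde y}$ rather than at $\hat\theta$ with the training-error budget: by Cauchy--Schwarz and $\tfrac1n\sum_i(y_i-\tilde y_i)^2\le\varepsilon_0^2$ we have $|\hat\theta - \overline{\tilde y}| = |\overline r|\le\bigl(\tfrac1n\sum_i r_i^2\bigr)^{1/2}\le\varepsilon_0$, so $|y^* - \overline{\tilde y}|\le|y^* - \hat\theta| + \varepsilon_0$ and hence $\bigl\{|y^*-\hat\theta|\le z_{1-\alpha/2}\sqrt{\tfrac{\eta(t)}{N} + \tfrac{\sigma^2_\delta}{N} + \tfrac{\sigma^2_r}{n}}\bigr\}\subseteq\{y^*\in B^y_\alpha\}$; combined with the second step this yields $\liminf_{n,N\to\infty}P(y^*\in B^y_\alpha)\ge 1-\alpha$. (One could also route the argument through the restated prediction-powered theorem by taking $B^1_\gamma$ a CLT interval for the rectifier $\mathbb E[\tilde y - y]$ from the labeled pairs and $B^2_{\alpha-\gamma}$ a CLT interval for $\theta - \mathbb E[\tilde y]$ from the digital population, then inflating by $\varepsilon_0$; the single-quantile, combined-variance form in the statement is the sharper direct-CLT version.)

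The step I expect to be the real obstacle is not any individual estimate but the rectifier bookkeeping underlying the first two steps: recognizing the prediction-powered decomposition, seeing that the deterministic $\varepsilon_0$ is precisely what licenses centering the interval at the raw aggregate $\overline{\tilde y}$ instead of at the fully debiased $\hat\theta$, and --- most delicately --- controlling the dependence between the labeled residuals $r_1,\dots,r_n$ and $\overline{\tilde y}$ when the $n$ twin pairs overlap the digital population. In the overlapping case the extra covariance is the $O(1/N)$ quantity $\tfrac{2}{N}\bigl(\mathrm{Cov}(\tilde y,y) - \mathrm{Var}(\tilde y)\bigr)$, which one must argue is non-positive whenever the twin's variance dominates its covariance with the real response, so that the advertised normalizer remains asymptotically valid --- indeed conservative. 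The remaining ingredients, namely finite second moments from bounded profiles and responses for the CLT and consistency of $\sigma^2_\delta$ and $\sigma^2_r$ (with $\eta(t)$ treated as fixed), are routine.
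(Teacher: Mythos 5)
Your proposal is correct and follows essentially the same route as the paper's proof: both build on the prediction-powered-inference decomposition of $y^*-\overline{\tilde y}-\overline r$ into the LLM fluctuation, the belief-bias fluctuation, and the rectifier fluctuation, apply the CLT to each with $n/N\to p$, and then absorb $|\overline r|\le\varepsilon_0$ (via Cauchy--Schwarz) to re-center the interval at $\overline{\tilde y}$. The only difference is cosmetic framing (you name the debiased estimator $\hat\theta=\overline{\tilde y}+\overline r$ explicitly) plus your added care about the independence/overlap of the $n$ labeled pairs with the digital population, a point the paper's proof silently assumes away.
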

The above theorem indicates the effectiveness of CrowdLLM under asymptotic settings. From the interval built in this theorem, we notice the uncertainty in estimating $y^*$ is mainly contributed by three parts: the randomness of LLM backbone, the variance of belief biases, and the variance of the residuals $\sigma^2_{r}$. Since the LLM backbone is frozen and will only show a certain randomness controlled by the temperature $t$, the uncertainty from this component is fully contributed by $\eta(t)$. When $N$ grows, the average converges and the uncertainty is gradually reduced. The uncertainty contributed by the belief generator is rooted in the belief biases. $\frac{\sigma^2_{\delta}}{N}$ measures the variability of the average belief bias. However, this uncertainty will not be explosively increasing, since the average belief bias will gradually converge to the true deviation of LLM's decision from the ground truth decision. As $N$ grows, the diversity matters less and $\frac{\sigma^2_{\delta}}{N}$ is less influential. Beyond the first two sources of uncertainty, the quality of the belief generator determines how large the uncertainty will be. If the belief generator can well characterize the real human data, the individual residuals $r_i$ should be small, which indicates the decisions given by CrowdLLM can well approximate the real human data, and $\sigma_r^2$ should be small when the individual residuals are consistently small. If $n$ is large, the real human data used for training is large, which shrinks this variance and leads to a more accurate model.

\section{Case Studies}\label{experiments}
In this section, we conduct thorough experiments to evaluate the ability of CrowdLLM to generate data of human-grade quality across different applications, including crowdsourcing \citep{vaughan2018making}, collecting product reviews from users \citep{isinkaye2015recommendation}, and voting \citep{yang2024designing}. Each application entails a different kind of human data, but all involve a set of decision-making tasks that are attributed to a population of workers (in crowdsourcing), users (in recommendation systems), or voters (in politics). We evaluate CrowdLLM and its vanilla versions and some other benchmark methods, including LLM-based and non-LLM methods based on a range of performance metrics such as accuracy, diversity, fidelity to real human data, sample efficiency, and cost. We also thoroughly study different configurations of CrowdLLM and its various components (i.e., how prompting strategies are employed to generate the LLM-based virtual human participants) and study its sample efficiency (i.e., how much training data is needed to reach a superior performance). In what follows, we introduce details of our experimental evaluations and main findings.

\subsection{Experiment Setup}

\subsubsection{Overall Design and Evaluation Method.}

Recall that CrowdLLM generates a collection of decisions from the synthetic participants for some given tasks, such as voting for alternatives, rating products, and evaluating texts. To evaluate how well its outcome matches the data collected in a real human population, we assess its performance in both aggregated decision (accuracy) and the distribution of decisions (diversity). Specifically, we use \textit{Average Wasserstein Distance (Avg. WD)} to measure the average of the distributional deviation from the gold standard across each test problem. A lower value of this metric indicates better distributional similarity and a more effective characterization of population diversity. For the evaluation of aggregated decisions, we consider evaluation metrics such as \textit{Mean Absolute Error (MAE)}, \textit{Root Mean Square Error (RMSE)}, and \textit{Cosine Similarity (CS)}. The first two metrics emphasize the average performance over different tasks, while CS focuses on the general comparison with the gold standard (i.e., real human data) on the whole test set. For Crowdsourcing tasks, we follow prior work \citep{veselovsky2025prevalence} and adopt the common practice of considering various aggregation approaches that include mean for all the case studies, and also more specialized ones for crowdsourcing applications such as the majority voting (MV), Dawid-Skene (DS), and Generative model of Labels, Abilities, and Difficulties (GLAD) \citep{whitehill2009whose}. Majority voting is a basic label aggregation approach that selects the label chosen by most participants. The DS improves on this by learning how accurate each participant is from their labeling history, giving more weight to reliable ones. GLAD goes further by also considering task difficulty, using a probability model to combine participant's ability and task difficulty for more robust label estimation. 

\subsubsection{Implementation of CrowdLLM.}
Unless otherwise specified, we use Gemma3-12B \citep{team2025gemma} as the LLM backbone of CrowdLLM due to its strong performance, its reliable capabilities across diverse tasks, and manageable size which facilitates extensive experimentation. In each of our case studies, we also conduct a comparison of Gemma3-12B with three other prominent LLMs, Deepseek-Distill-R1-Llama-8B (Deepseek R1) \citep{guo2025deepseek}, Llama3-8B-lexi-uncensored \citep{dubey2024llama}, and Qwen3-8B \citep{yang2025qwen3technicalreport}, and found Gemma3-12B indeed outperforms others. In all the experiments, we set the temperature to 0 for CrowdLLM, and set the reference generation parameter $K$ to $8$, the offset parameter $J$ for personalized decision-making in training to $10$, and the regularization controller $\lambda$ to $1$. These settings are based on our extensive empirical experiments across datasets which consistently provide good performances of CrowdLLM. We use Adam \citep{kingma2014adam} as the optimizer with a learning rate $0.001$ to train CrowdLLM with Eq. \eqref{eq:loss}. Another important aspect of implementing CrowdLLM on a particular application is the profile generation. Recall that the profile generator aims to generate diverse profiles for the synthetic human participants. The individual profile will be used as the contextual information which is further fed into the belief generator of CrowdLLM. For a given application, one can either obtain summary statistics of the human participants or design an ideal distribution of the profile variables that are representative of a real crowd. Figure~\ref{figs0} shows an example of such a distribution of participants' profiles based on 5 variables: \textit{Gender}, \textit{Age}, \textit{Race}, \textit{Occupation}, \textit{Education} that we used in Case Study I. 

\begin{figure}[!htbp]
    \centering
    \includegraphics[width=0.5\textwidth]{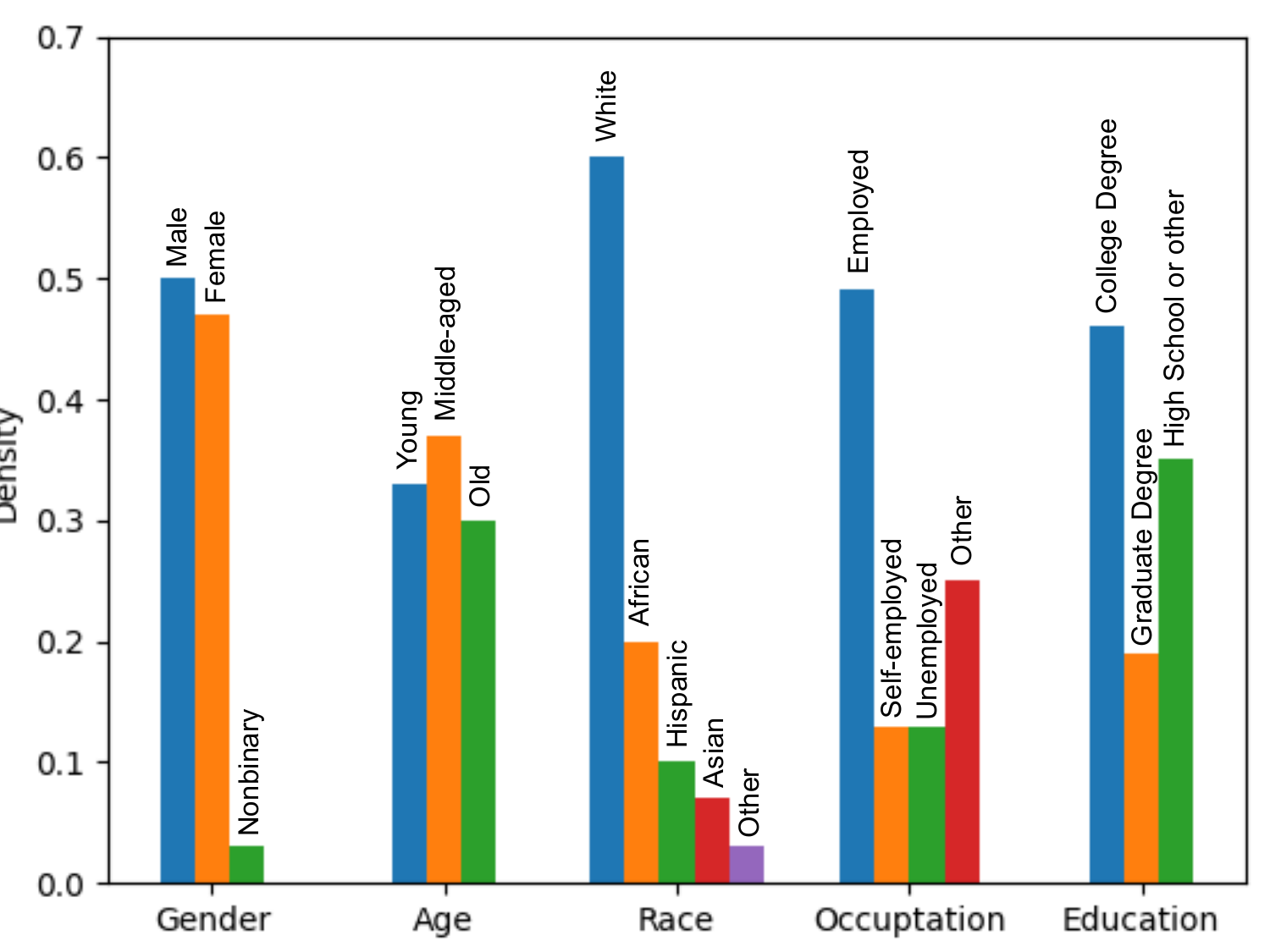}
    \caption{An example of the distribution of participants' profiles.}
    \label{figs0}
\end{figure}

\subsubsection{Baselines.} We adopt a variety of prompting strategies to create representative baselines of LLM-based synthetic crowd. This includes the zero-shot (direct) prompting, referred as LLM (zero-shot), which prompts the LLM to generate decisions on problems with the most basic information. We also include multi-persona prompting \citep{li2025llm, hu2024quantifying} and self-consistency (SC) \citep{wangself}. Multi-persona prompting involves a collaboration of multiple LLMs prompted with different profiles to create diverse personas. Self-consistency prompting is performed by taking the majority vote of multiple decisions generated by LLM (i.e., with temperature set at 0.5) following the practice in \citep{wangself, liu2024dellma}. More specifically, the zero-shot prompt provides LLM with problem description $\boldsymbol{x}_t$, specific requirements $\mathcal{R}_t$ on the decisions, and context $\mathcal{C}_t$. Here, $\boldsymbol{x}_t$ describes the problem that the participants need to make decisions on. $\mathcal{R}_t$ describes what kind of decisions need to be made, e.g., if the decision is a score, the scale of the score should be included in $\mathcal{R}_t$. $\mathcal{C}_t$ encodes information about the decision-making scenario. An example of the zero-shot prompt is shown in the Appendix. For multi-persona prompting, i.e., using multiple personas in prompting for self-collaboration as shown in \cite{olea2024evaluating, hu2024quantifying}, we build multiple personas with additional context on their profiles. We use similar prompts as zero-shot prompting for making decisions, but change the context $\mathcal{C}_t$ to assign persona. An example of a multi-persona prompt is shown in the Appendix. For self-consistency prompting, we follow the strategy adopted in \cite{wangself, liu2024dellma} by taking the majority vote of multiple decisions generated by LLM with temperature 0.5. For the generation of each decision, we use the same prompt as zero-shot prompting. We can certainly adopt more prompting strategies, such as the CoT prompting in \cite{wei2022chain}. In our experiments, we found no significant differences among the prompting strategies, and our focus is not on what the best prompting strategies in LLMs are to simulate humans, but the integration of those LLM-based frameworks with generative models, so throughout our experiments, we use zero-shot, multi-persona, and SC prompts.




\subsection{Case Study I: Crowdsourcing}

We evaluated CrowdLLM and the other baselines using two publicly available crowdsourcing datasets, \textit{Offensiveness Rating} and \textit{Question Answering Difficulty} \citep{pei-jurgens-2023-annotator}. The \textit{Offensiveness} dataset contains 13,036 instances annotated by 263 workers across 1,500 problems to identify offensive text. \textit{QA Difficulty} includes 4,576 instances annotated by 458 workers of 1,000 problems to assess question-answer pair difficulty. For each dataset, we use responses from 80\% of the distinct workers as the training set and reserve the remaining 20\% of workers’ responses for testing.






\begin{table}[!t]
\centering
\caption{Performance Comparison on Crowdsourcing: Offensiveness Rating}
\label{tab:parallel_metrics Offensiveness} 
\resizebox{\textwidth}{!}{\begin{tabular}{lcccccccccccc}
\hline
 \multirow{2}{*}{Method} & \multicolumn{5}{c}{MAE} & \multicolumn{5}{c}{RMSE} & \multirow{2}{*}{CS} & \multirow{2}{*}{Avg. WD} \\ 
\cline{2-6}\cline{7-11} 
 &Mean &Median &MV &DS &GLAD &Mean &Median &MV &DS &GLAD &&\\
\hline
Random & 1.27 & 1.61 & 1.57 & 1.73 & 1.89 & 1.44 & 1.86 & 2.05 & 2.17 & 2.31 & 0.56 & 1.31 \\
LLM (zero-shot) & 0.86 & 1.03 & 1.23 & 1.06 & 1.23 & 1.08 & 1.32 & 1.53 & 1.38 & 1.52 & 0.39 & 1.13 \\
LLM (multi-persona) & 0.65 & 0.64 & 0.67 & \textbf{0.69} & 0.68 & 0.86 & 1.00 & 1.16 & \textbf{1.15} & \textbf{1.16} & 0.69 & 0.78 \\
LLM (SC) & 0.99 & 1.26 & 1.48 & 1.22 & 1.49 & 1.16 & 1.48 & 1.68 & 1.52 & 1.69 & 0.34 & 1.24 \\
VAE   & 0.71 & 0.81 & 0.60 & 0.96 & 0.92 & 0.94 & 1.15 & 1.24 & 1.37 & 1.54 & 0.76 & 0.79  \\ 
CrowdLLM & \textbf{0.45} & \textbf{0.48} & \textbf{0.43} & 1.00 & \textbf{0.53} & \textbf{0.59} & \textbf{0.82} & \textbf{1.10} & 1.53 & 1.23 & \textbf{0.85} & \textbf{0.51} \\
\hline
\hline
\end{tabular}}
\vspace{1mm}
\parbox{0.95\linewidth}{\small \textit{LLM backbone: Gemma 3-12B.}}
\end{table}%

\begin{table}[!t]
\centering
\caption{Performance Comparison on Crowdsourcing: QA Difficulty}
\label{tab:parallel_metrics Difficulty} 
\resizebox{\textwidth}{!}{\begin{tabular}{lcccccccccccc}
\hline
 \multirow{2}{*}{Method} & \multicolumn{5}{c}{MAE} & \multicolumn{5}{c}{RMSE} & \multirow{2}{*}{CS} & \multirow{2}{*}{Avg. WD} \\ 
\cline{2-6}\cline{7-11} 
 &Mean &Median &MV &DS &GLAD &Mean &Median &MV &DS &GLAD &&\\
\hline
Random & 1.21 & 1.41 & 1.52 & 1.98 & 1.81 & 1.43 & 1.73 & 2.02 & 2.38 & 2.26 & 0.49 & 1.29 \\
LLM (zero-shot) & 0.71 & 0.81 & 1.04 & 1.02 & 1.05 & 0.90 & 1.04 & 1.24 & 1.28 & 1.26 & 0.33 & 1.06 \\
LLM (multi-persona) & 0.73 & 0.82 & 1.00 & 1.06 & 1.02 & 1.02 & 1.16 & 1.38 & 1.46 & 1.41 & 0.48 & 0.93 \\
LLM (SC) & 0.68 & 0.79 & 1.02 & \textbf{0.98} & 1.03 & 0.87 & \textbf{1.02} & 1.22 & \textbf{1.23} & \textbf{1.25} & 0.36 & 1.01 \\
VAE & 0.76 & 0.90 & 0.71 & 1.41 & 1.11 & 0.98 & 1.20 & \textbf{1.14} & 1.85 & 1.53 & 0.68 & 0.88  \\ 
CrowdLLM & \textbf{0.65} & \textbf{0.74} & \textbf{0.63} & 1.49 & \textbf{0.89} & \textbf{0.83} & 1.06 & 1.18 & 2.08 & 1.43 & \textbf{0.71} & \textbf{0.74} \\

\hline
\hline
\end{tabular}}
\vspace{1mm}
\parbox{0.95\linewidth}{\small \textit{LLM backbone: Gemma 3-12B.}}
\end{table}%

\begin{figure*}[!b]
    \centering
    \begin{subfigure}{0.5\textwidth}
        \centering
        \includegraphics[width=\linewidth]{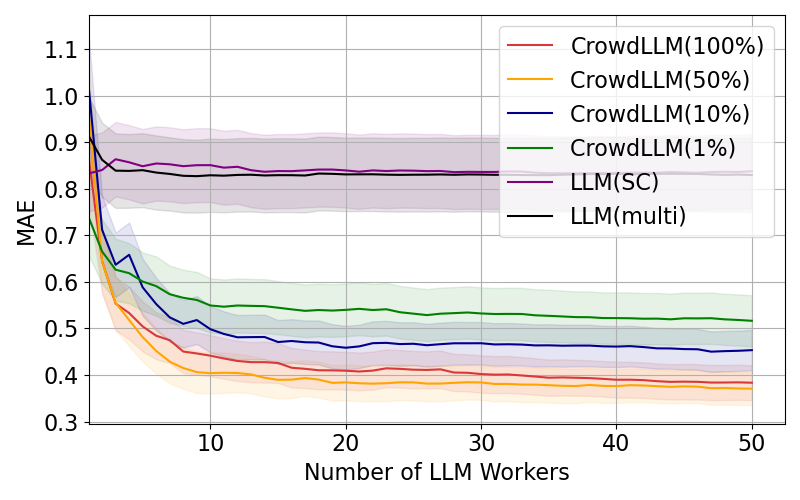}
        \caption{Offensiveness dataset} 
        \label{Offensive_MAE}
    \end{subfigure}%
    \hfill
    \begin{subfigure}{0.5\textwidth}
        \centering
        \includegraphics[width=\linewidth]{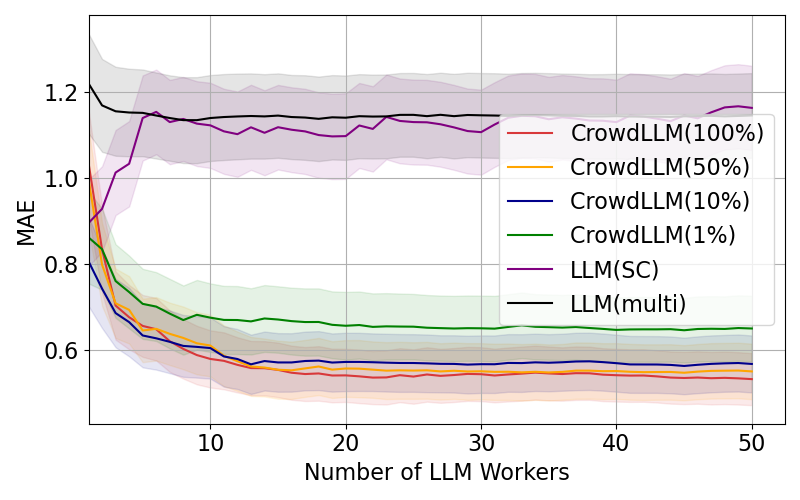}
        \caption{QA Difficulty dataset} 
        \label{Difficulty_MAE}
    \end{subfigure}%
    \caption{MAE with increasing simulated workers across training worker sizes; CrowdLLM (x\%) means the model is trained with x\% of the real human workers' data}
    \label{MAE} 
\end{figure*}
\begin{figure*}[!b]
    \centering
    \begin{subfigure}{0.5\textwidth}
        \centering
        \includegraphics[width=\linewidth]{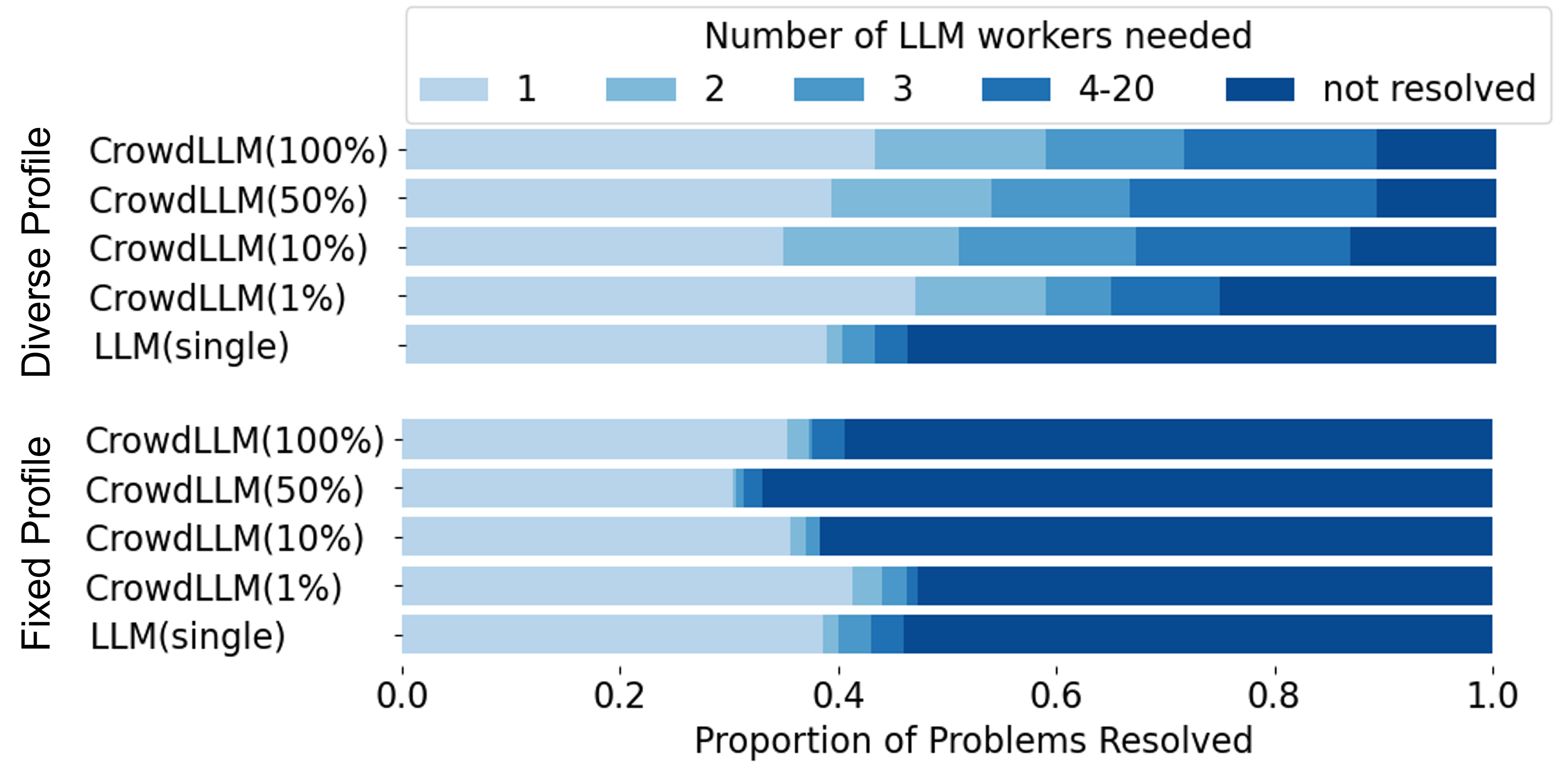}
        \caption{Offensiveness dataset} 
        \label{Offensive_cost}
    \end{subfigure}%
    \hfill
    \begin{subfigure}{0.5\textwidth}
        \centering
        \includegraphics[width=\linewidth]{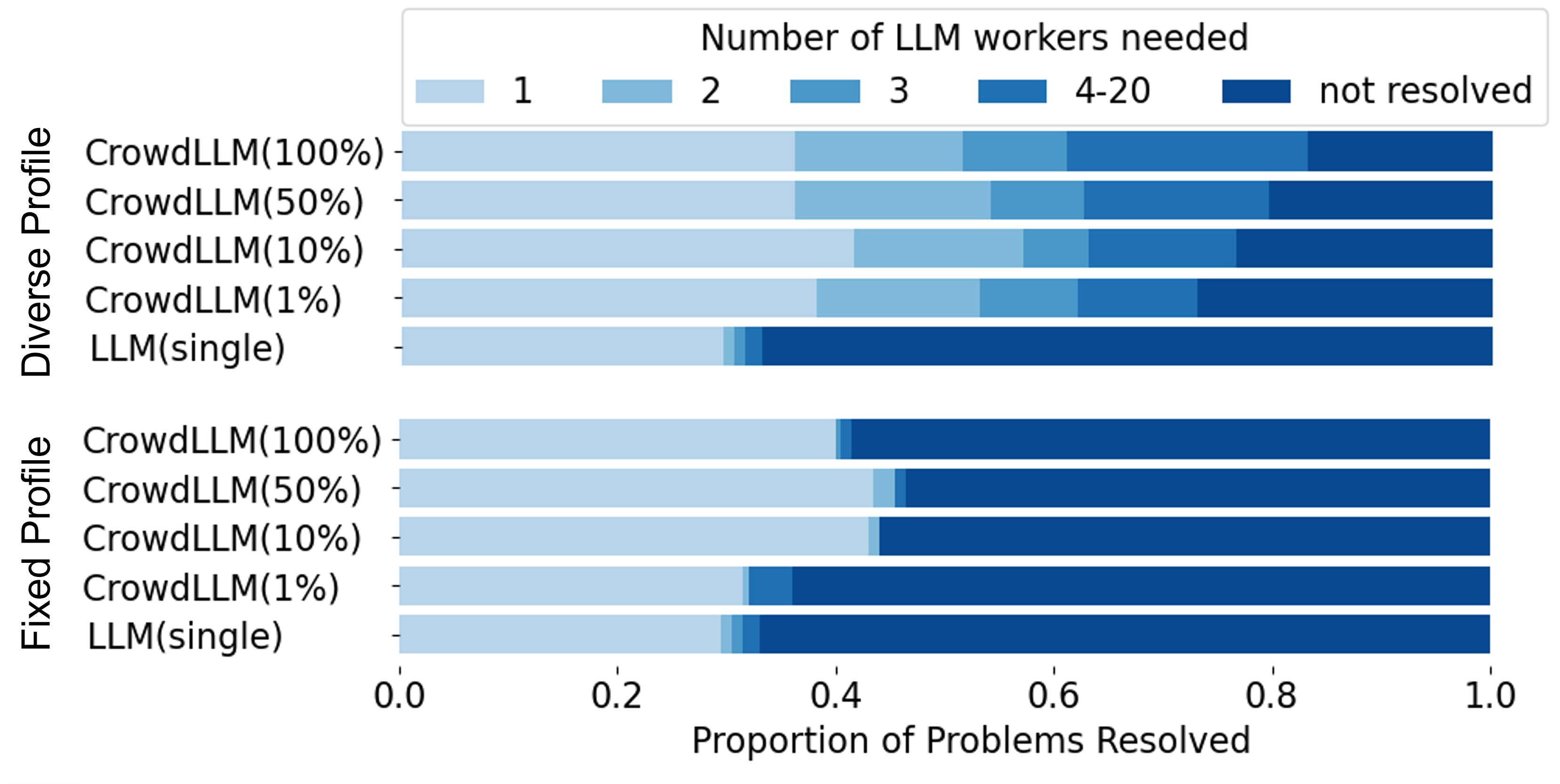}
        \caption{QA Difficulty dataset} 
        \label{Difficulty_cost}
    \end{subfigure}%
    \caption{Resolution rate with increasing simulated workers in CrowdLLM under diverse and fixed profiles; CrowdLLM (x\%) means the model is trained with x\% of the human workers' data}
    \label{cost} 
\end{figure*}

Tables~\ref{tab:parallel_metrics Offensiveness}-\ref{tab:parallel_metrics Difficulty} show that CrowdLLM consistently outperforms other baselines by achieving the lowest Avg. WD and the highest CS. This shows that our CrowdLLM can better capture real human diversity. We can also see that by incorporating diverse profile information to allow for more variability in decisions, even LLM (multi-persona) shows better performances compared to LLM (zero-shot), though the improvement is not as substantial as in CrowdLLM. And pure generative models without LLM (i.e., the VAE model) often outperform the baseline LLM (zero-shot) method in terms of Avg. WD, highlighting the power of generative models in diversifying predictions that improve CrowdLLM's alignment with real humans' responses. In terms of the aggregated decision, CrowdLLM generally offers competitive or improved MAE/RMSE compared with other approaches. On both \textit{Offensiveness} and \textit{QA Difficulty} datasets, CrowdLLM's MAEs are notably better than the baseline LLM methods under most of the aggregation methods. 

One might be interested in how much real human data is needed to train CrowdLLM well. To answer this question, we further conduct some computational experiments and show the results in Figure~\ref{MAE}. Specifically, for each task/instance in the crowdsourcing case studies, we incrementally add more workers and monitor CrowdLLM's performance.  Figure~\ref{MAE} shows that the performance of CrowdLLM (i.e., evaluated by MAE) consistently improves as the number of workers increases. Note that in Figure~\ref{MAE} the labels, CrowdLLM (x\%), mean the model is trained with x\% of the human workers' data. In contrast to the low diversity and high uncertainty exhibited by LLM baselines, it is impressive to see that, even with only 1\% of the human workers which collectively provided around 100 responses in the training data, CrowdLLM can achieve the level of performance comparable to the full-data setting where CrowdLLM is trained with all the training data (100\%), highlighting the data efficiency and cost-saving potential of CrowdLLM.

Another question one may ask is how many virtual human workers CrowdLLM needs to generate to resolve the problems (i.e., reducing absolute error below 0.5)? We conduct more experiments as well to answer this question and show the results in Figure~\ref{cost}. We can see that, across the different levels of the training size, the resolution rate increases steadily as the number of virtual workers with diverse profiles grows, whereas on the other hand, if we fix the worker profiles, it leads to a significant degradation. It underscores the value of promoting diversity among the virtual workers, which is a core strength of CrowdLLM. Recall that in the overall design of CrowdLLM, profiles serve as proxies for workers' beliefs: different profiles correspond to different backgrounds and thus to diverse beliefs about a task. When the virtual workers have diverse profiles (i.e., by generating $\epsilon_i$ following Eq. \eqref{eq:zi}), their varying beliefs allow the collective answer to be iteratively refined, and one can see in Figure~\ref{cost} that the resolution rate increases steadily as the number of virtual workers grows. In contrast, under fixed profiles (i.e., by setting $\epsilon_i$ in Eq. \eqref{eq:zi} as a fixed number across individuals), the virtual workers hold nearly identical beliefs, so adding more of them does not improve the solution. This result highlights that the performance improvement of CrowdLLM comes more from the diverse profiles rather than simply from having more virtual workers.

\subsection{Case Study II: Product Ratings by Users}


In this subsection, we showcase the effectiveness of CrowdLLM on generating product reviews that can be used to train recommendation systems. We consider \textit{Amazon Beauty} and \textit{Amazon Music}, two datasets extracted from the Amazon Reviews 2023 dataset \citep{hou2024bridging, mcauley2015image}. \textit{Amazon Beauty} is the subset of ``All Beauty" category in which each product has been reviewed by 20 to 30 distinct users. It contains 448 products, with a total of 11,154 reviews from 10,957 unique users. \textit{Amazon Music} is a filtered subset of the ``Musical Instruments" category containing products reviewed by exactly 20 distinct users. It comprises 629 products, encompassing 12,580 reviews written by 12,396 unique users. For both datasets, we perform necessary preprocessing steps and hold out 20\% of the full data as a test set based on unique problem IDs. All the experimental results are reported based on the held-out test set. The objective of CrowdLLM is to generate the distribution of ratings of each product by providing its product information to the digital population CrowdLLM creates. Results of CrowdLLM and the other methods are shown in Tables~\ref{tab:Recommendation Beauty} and ~\ref{tab:Recommendation_Music}. All methods use the same product information. LLM multi-persona additionally conditions on user-evaluation text to simulate more diverse responses of the virtual users. CrowdLLM also incorporates the user-evaluation text into its training process. We can see in Tables~\ref{tab:Recommendation Beauty} and ~\ref{tab:Recommendation_Music} that on both datasets, CrowdLLM achieves the best performance in terms of all the evaluation metrics. We further show in Figures ~\ref{Amazon beauty} and ~\ref{Amazon music} the generated distributions of the product ratings by CrowdLLM and the other methods (i.e., we randomly selected three products from each of the two datasets), together with the distribution of real human users. We can see that LLMs, even when we adjust temperature or provide user-evaluation text in multi-persona prompting, still exhibit limited diversity in their generated ratings. VAE, by contrast, produces more diverse outputs but falls short of accurately matching true human ratings. CrowdLLM achieves a better balance: it captures both diversity and accuracy, resulting in rating distributions that closely align with those of human participants.

\begin{table}[!t]
\centering
\caption{Performance Comparison on Recommendation(All Beauty)}
\label{tab:Recommendation Beauty} 
\begin{tabular}{lcccc}
\hline
 Method &  MAE &  RMSE &   CS &   WD \\
\hline
random & 1.19 & 1.30 & 0.59 & 0.14 \\
LLM (zero-shot) & 1.04 & 1.19 & 0.13 & 0.15 \\
LLM (multi-persona) & 0.87 & 0.97 & 0.24 & 0.09 \\
LLM (SC) & 1.03 & 1.16 & 0.13 & 0.14 \\
VAE & 0.88 & 1.02 & 0.28 & 0.06 \\
CrowdLLM & \textbf{0.21} & \textbf{0.26} & \textbf{0.93} & \textbf{0.04} \\
\hline
\hline
\end{tabular}
\vspace{1mm}
\parbox{0.55\linewidth}{\small \raggedright \hspace{2.5em} \textit{LLM backbone: Gemma 3-12B.}}
\end{table}%

\begin{table}[!t]
\centering 
\caption{Performance Comparison on Recommendation(Musical Instruments)}
\label{tab:Recommendation_Music} 
\begin{tabular}{lcccc}
\hline
 Method &  MAE &  RMSE &   CS &   WD \\
\hline
           random & 1.28 &  1.39 & 0.58 & 0.15 \\
  LLM (zero-shot) & 0.52 &  0.65 & 0.27 & 0.16 \\
LLM (multi-persona) & 0.47 &  0.58 & 0.37 & 0.14 \\
         LLM (SC) & 0.46 &  0.57 & 0.29 & 0.16 \\
              VAE & 0.47 &  0.61 & 0.63 & 0.07 \\
         CrowdLLM & \textbf{0.22} &  \textbf{0.27} & \textbf{0.92} & \textbf{0.05} \\
\hline
\hline
\end{tabular}
\vspace{1mm}
\parbox{0.55\linewidth}{\small \raggedright \hspace{2.5em} \textit{LLM backbone: Gemma 3-12B.}}
\end{table}%

\begin{figure*}[!b]
    \centering
    \begin{subfigure}{0.33\textwidth}
        \centering
        \includegraphics[width=\linewidth]{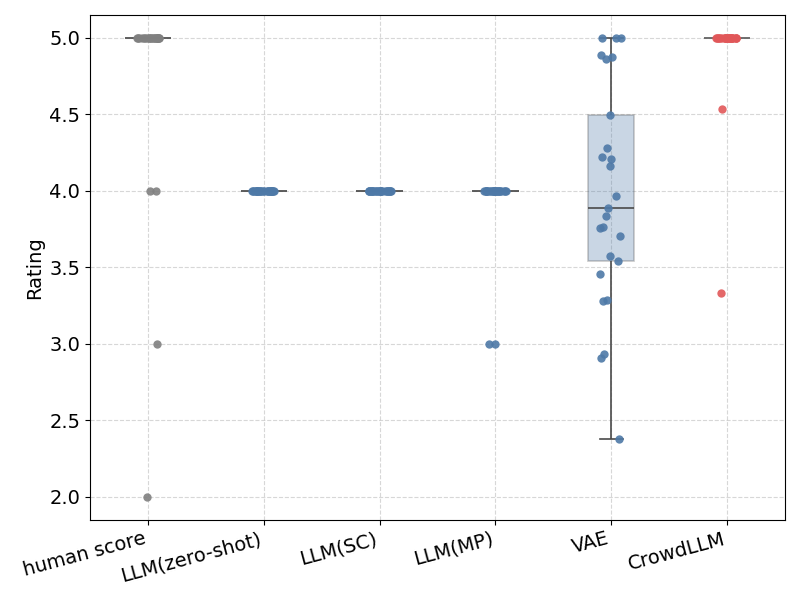}
    \end{subfigure}%
    \hfill
    \begin{subfigure}{0.33\textwidth}
        \centering
        \includegraphics[width=\linewidth]{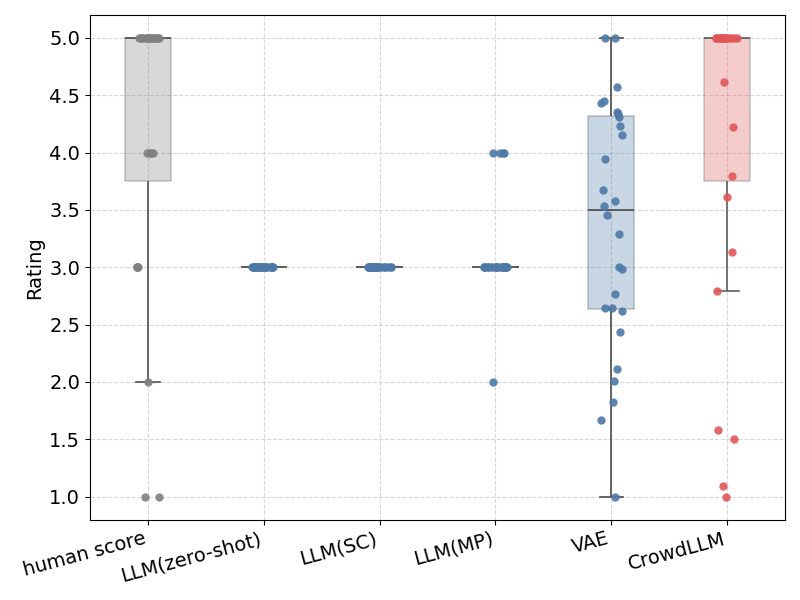}
    \end{subfigure}%
    \hfill
    \begin{subfigure}{0.33\textwidth}
        \centering
        \includegraphics[width=\linewidth]{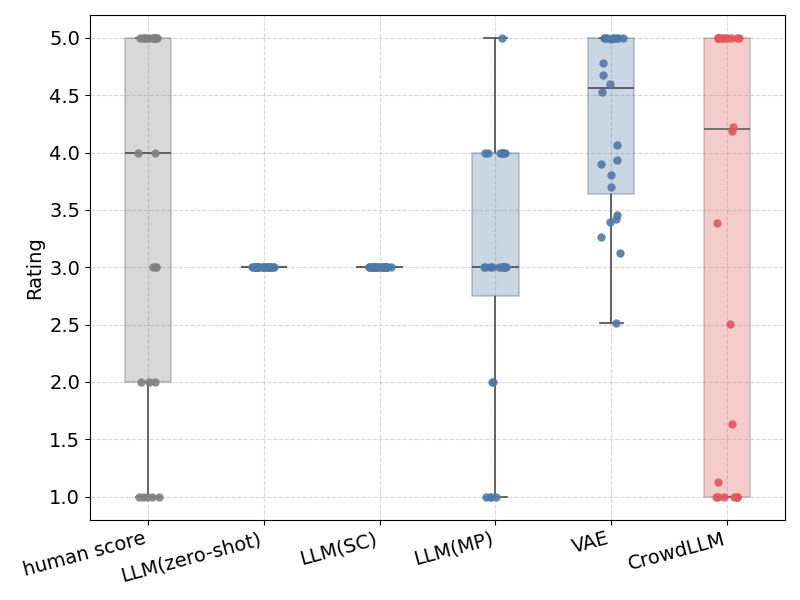}
    \end{subfigure}%
    \caption{Rating scores of three randomly selected products from Amazon Beauty}
    \label{Amazon beauty} 
\end{figure*}
\begin{figure*}[!b]
    \centering
    \begin{subfigure}{0.33\textwidth}
        \centering
        \includegraphics[width=\linewidth]{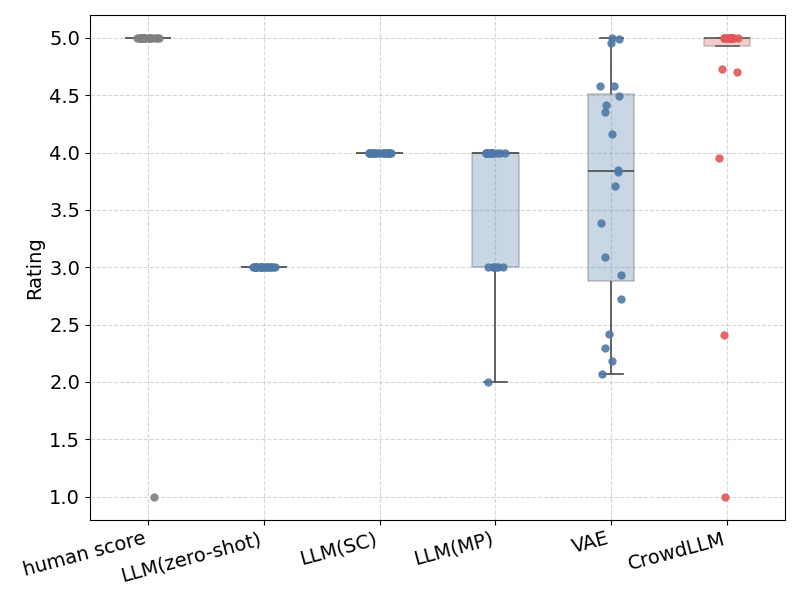}
    \end{subfigure}%
    \hfill
    \begin{subfigure}{0.33\textwidth}
        \centering
        \includegraphics[width=\linewidth]{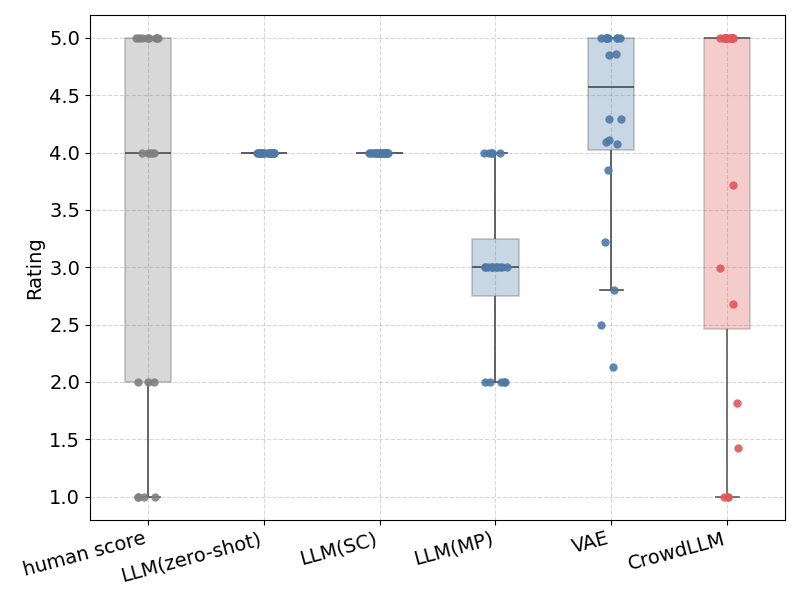}
    \end{subfigure}%
    \hfill
    \begin{subfigure}{0.33\textwidth}
        \centering
        \includegraphics[width=\linewidth]{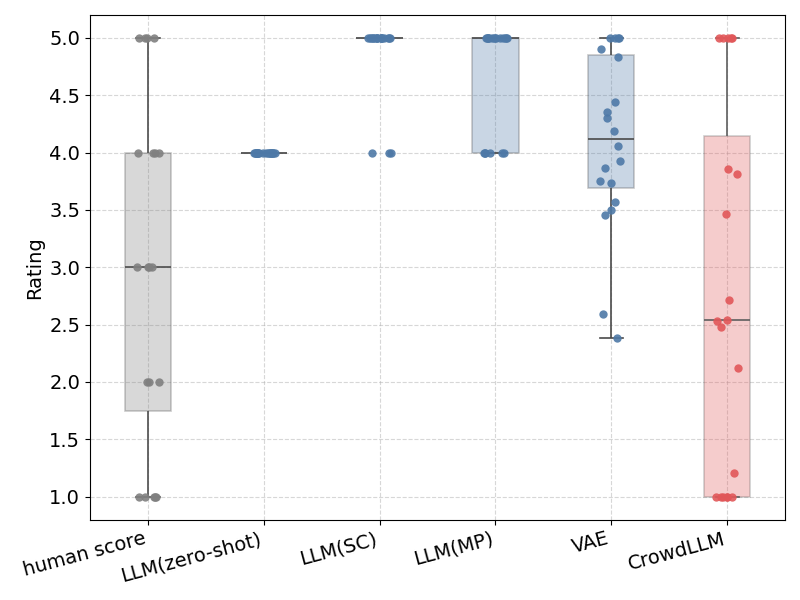}
    \end{subfigure}%
    \caption{Rating scores of three randomly selected products from Amazon Music}
    \label{Amazon music} 
\end{figure*}

\subsection{Case Study III: Voting}
\begin{table}[!t]
\centering
\caption{Performance Comparison on Voting}
\label{tab:Voting} 
\begin{tabular}{lcccc}
\hline
 Method &  MAE &  RMSE &   CS &   WD \\
\hline
           random & 5.00 & 6.55 & 0.72 & 4.00  \\
  LLM (zero-shot) & 11.92 & 15.41 & 0.38 & 9.00 \\
LLM (multi-persona) &  10.00 & 13.69 & 0.40 & 6.50 \\
         LLM (SC)  & 11.42 & 14.97 & 0.35 & 7.92 \\
              VAE & 11.67 & 14.88 & 0.32 & 7.67\\
         CrowdLLM & \textbf{4.00} & \textbf{5.02} & \textbf{0.83} & \textbf{1.67} \\
\hline
\hline
\end{tabular}
\vspace{1mm}
\parbox{0.55\linewidth}{\small \raggedright \hspace{2.5em} \textit{LLM backbone: Gemma 3-12B.}}
\end{table}%

\begin{figure}[!htbp]
    \centering
    \includegraphics[width=0.7\textwidth]{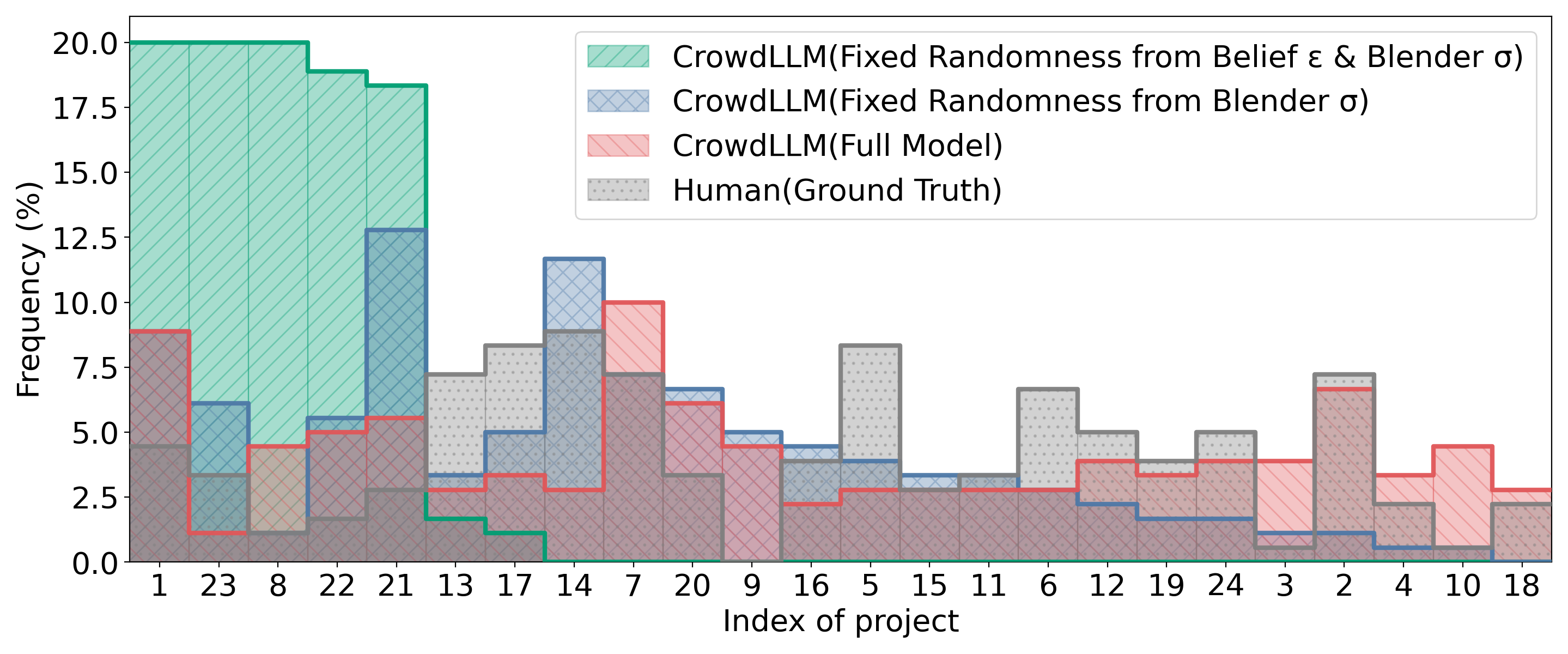}
    \caption{Voting migration across models: from fixed belief with blender to relaxed belief and full CrowdLLM — increasing diversity with subsequent accuracy refinement}
    \label{fig:Voting_belief_blender}
\end{figure}

We further evaluate CrowdLLM and other methods on a voting dataset. This Zurich PB Voting dataset \citep{yang2024llm} was conducted in March 2023 with 180 participants where each participant evaluated 24 projects and expressed preferences through several predefined preference selection methods. Existing work has developed LLM-based synthetic crowds to replicate the voting outcomes in this dataset, such as \cite{yang2024llm}. One problem found in these studies is that, despite the great promise of LLMs in generating human voting results, there is a lack of diversity as LLM-generated votes tend to concentrate heavily on only a few voting options, leaving many other alternatives with no votes at all, as reported in \cite{yang2024llm}. This is a significant shortcoming of the LLM-based virtual voters. Therefore, in this case study, we aim to evaluate CrowdLLM using the same data studied in \cite{yang2024llm}. In our experiments, we adopted the top-five selection method, where each participant selects five preferred projects. The dataset also contains participant-specific preference information, such as project location, topic, and cost, which is incorporated into our modeling. In summary, the dataset contains 180 voting records (five selections each), and we split it with 80\% for training and 20\% for testing. We use LLMs, VAE, random baseline, and CrowdLLM to create a simulated voter population. In each experiment, each virtual vote generates a set of five preferred projects. We then aggregate these generated votes into per-project vote counts and compare them with the real human vote counts, evaluating performance using MAE, RMSE, CS, and WD. Results are shown in Table~\ref{tab:Voting} which shows that CrowdLLM achieves the best results across these metrics. In terms of diversity, LLMs and VAE tend to concentrate on a few projects, with limited variation across participants’ preferences. We also conduct an ablation study to show how different components of CrowdLLM impact its result. Figure~\ref{fig:Voting_belief_blender} illustrates the voting results of different configurations of the CrowdLLM. Under fixed beliefs, i.e., CrowdLLM (fixed randomness from belief $\epsilon$ and blender $\sigma$), voters cast in their votes in the same way, which is similar to the behavior of LLMs. By allowing each voter’s belief to be randomly generated, we can see that CrowdLLM (fixed randomness from blender $\sigma$) improves diversity, yet still there are several projects with very few votes. In contrast, the full CrowdLLM model not only preserves diversity but also accurately captures the few votes for the less popular projects, leading to greater consistency with real human voting patterns.

\subsection{Performance Study Using Simulated Datasets}
\label{Sec:Discussion}

The three real-world case studies demonstrated CrowdLLM's effectiveness in generating digital populations that have high fidelity to real human populations. In this subsection, we aim to further study which factors of the training dataset mostly impact CrowdLLM's effectiveness. These factors concern the signal-to-noise level of the data, the sample size, etc. We generate datasets by different combinations of four factors: the number of participants, the number of tasks assigned to each participant, the accuracy of participants' responses, and the diversity of participants’ beliefs. We evaluate CrowdLLM's performance by varying these factors together with changing the signal-to-noise level and sample size of the training data, etc.



\subsubsection{Simulation Design.}

\begin{figure*}[!b]
    \centering    \includegraphics[width=\linewidth]{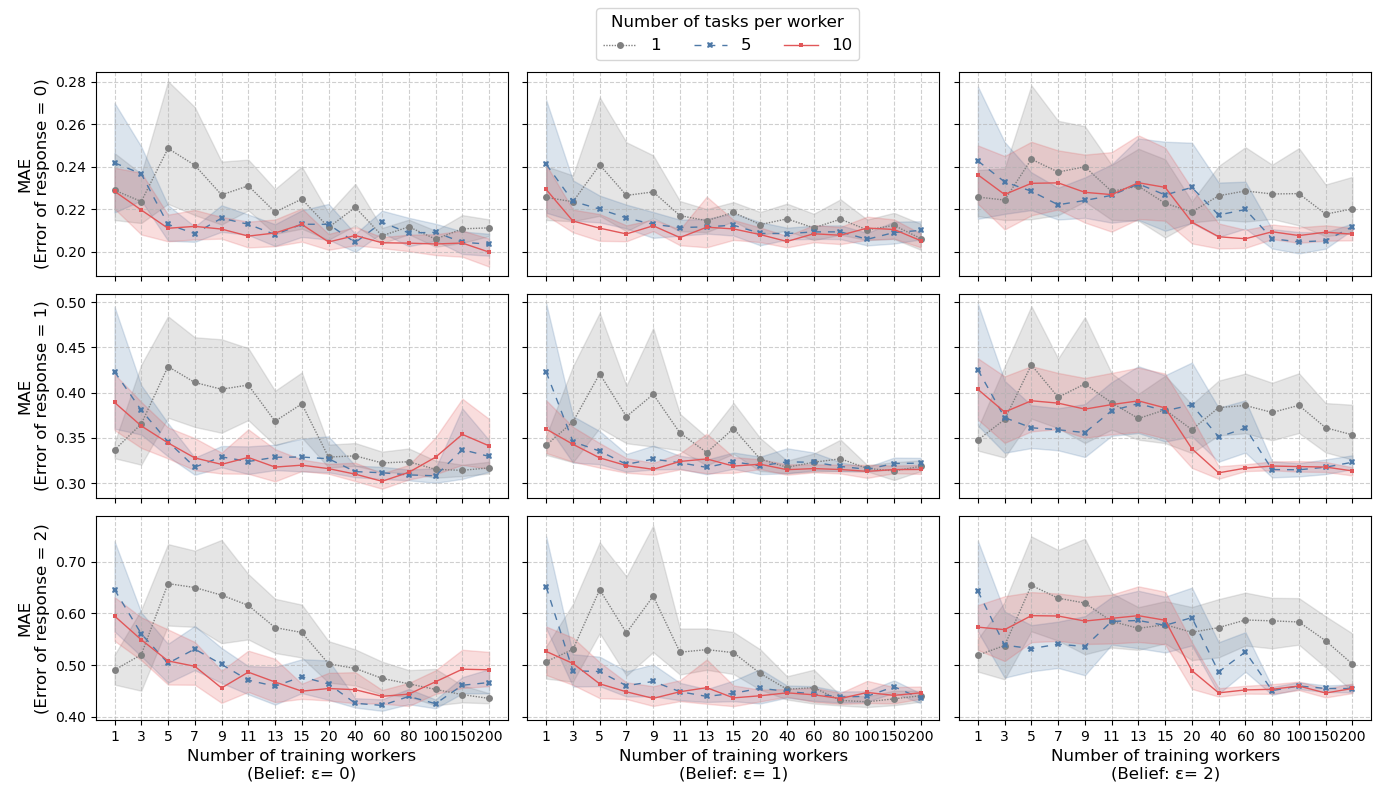}
    \caption{Simulation studies across different signal-to-noise levels of the simulated datasets}
    \label{fig:simulationall} 
\end{figure*}

Without loss of generality, we use the CrowdLLM model trained on the Offensiveness dataset in Case Study I as a ground truth model. In other words, we use the digital population created by CrowdLLM, which is trained on the Offensiveness dataset as the ``real'' population in this simulation study. In this way, we have the ground truth of the simulated datasets and can accurately evaluate the performance of CrowdLLM and other models. Then we generate datasets by different combinations of four factors: the number of participating workers, the number of tasks assigned per worker, the accuracy of worker responses, and the diversity of workers’ beliefs. Each dataset is generated according to a different design, and the dataset is randomly divided into training and testing partitions, with 20\% of the questions held out for testing. CrowdLLM is trained on the training set, and is then evaluated on the held-out test set. To account for randomness, each experiment is repeated 10 times with independent initializations. During testing, 20 virtual workers are instantiated in CrowdLLM for each question, and their responses are averaged to form the final answer. To manipulate the signal-to-noise levels of the simulated datasets by the CrowdLLM model, we further add Gaussian noise to the generated data. We consider three degrees of noise (i.e., standard deviation = 0, 1, 2, as indicated by the error of response). These three settings correspond to the first, middle, and last rows of Figure~\ref{fig:simulationall}, respectively. Within each accuracy scenario, we evaluate how the four factors, i.e., the belief diversity, the number of participating workers, and the number of questions per worker, impact the performance of CrowdLLM. The factor, belief diversity, is modeled using Gaussian distributions with standard deviations of 0, 1, and 2. We can see in Figure~\ref{fig:simulationall} how the increasing of the number of workers versus the increasing of the number of questions per worker affects the performance of CrowdLLM (i.e., evaluated by MAE), and how different levels of belief diversity shift the trade-off between accuracy and cost, as the three columns in Figure~\ref{fig:simulationall} represent different degrees of belief randomness ($\epsilon = 0, 1, 2$). For each experimental setting, the horizontal axis represents the number of workers used in the training data of CrowdLLM, while the color shade of the lines indicates the number of tasks per worker. 


\subsubsection{Key Insights from the Simulation Studies.}

In summary, we found five key insights from Figure~\ref{fig:simulationall} that clarify the roles of belief diversity, worker numbers, and workload in determining CrowdLLM's performance under varying signal-to-noise conditions of the simulated datasets. First, when the simulated datasets contain considerable noise (error of response =1 or 2), if we set belief diversity to 0, adding more workers in the training data does not improve CrowdLLM's performance. Instead, the uniformity of beliefs causes errors to reinforce one another, and the MAE increases as the number of workers increases. This is shown in the second and third rows of the first column in Figure~\ref{fig:simulationall}. Second, if datasets is noise-free (error of response =0), we should set belief diversity to 0 such that CrowdLLM achieves the best performance. However, in practice, this is an impossible scenario where everyone is perfectly aligned on the same response. This is shown in the first figure in Figure~\ref{fig:simulationall}. Third, when belief diversity is high (i.e, set to 2), CrowdLLM demands a large annotation budget to reach good performance. This is shown in the last column of Figure~\ref{fig:simulationall}, i.e., we need roughly 800 annotations (i.e., the product of the 20 virtual workers each answering 40 questions during testing) before MAE begins to stabilize. This pattern holds true across all levels of response quality, showing that high diversity systematically raises the annotation cost required to achieve reliable performance, as fewer workers require each to answer more questions to reach stability. Fourth, in the moderate diversity setting (belief diversity = 1), CrowdLLM achieves a more efficient balance between the number of workers and the number of questions answered per worker. As shown in the middle column of Figure~\ref{fig:simulationall}, roughly 10 workers (each answers 5 or 10 questions) can achieve strong performance, which is more efficient than the high-diversity setting that requires about 800 total annotations. Fifth, across all scenarios, the best performance that CrowdLLM can achieve is ultimately limited by the overall quality of the dataset. Datasets with more accurate worker responses provide better guidance for the model, leading to better performance of CrowdLLM. In summary, the quality of the dataset sets performance ceiling for CrowdLLM. To approach this ceiling efficiently, other factors must be carefully balanced. In particular, moderate belief diversity achieves a favorable trade-off between the number of workers and the number of questions each worker answers in the training data, yielding strong performance with relatively few annotations.

\section{Conclusion}\label{conclusion}
This paper introduces CrowdLLM, a novel framework that leverages lightweight generative models with LLM-based virtual crowdworkers to emulate the decision-making diversity and distributional fidelity typically observed in a range of crowd-based decision-making applications such as crowdsourcing, voting, and product review. Empirical evaluations across real-world and simulated datasets demonstrate that CrowdLLM achieves promising performance in both accuracy and distributional fidelity to human judgments. CrowdLLM outperforms strong baselines and remains robust under data scarcity. In future work, we plan to further refine CrowdLLM with more specialized mechanisms for particular application contexts, e.g., by integrating with human behavior models and choice models, or characterize the data-generating process in finer details. One possibility is to reconstruct responses from a human-centered perspective, reinforcing within-group coherence and enhancing between-group differentiation, so that simulated human participants retain population-level statistical tendencies while exhibiting individualized, human-like variability. This approach will better capture intra-group consistency and inter-group diversity, which may further improve the realism and generalizability of CrowdLLM.


\begin{APPENDIX}{Proofs of Theoretical Results}
\section{Proof of Theorem 1}
\begin{proof}
First of all, following the Theorem 1 in \cite{dahal2022deep}, we see that for any $\epsilon\in(0,1)$ and continuous distribution $\tilde{\mathcal{T}}$, there exists a generative model $G$ such that 
\begin{align*}
    W_1(G_{\sharp}\rho, \tilde{\mathcal{T}})<\epsilon,
\end{align*}
where $W_1$ is the 1-Wasserstein distance. It indicates that we can always find a generative model that approximates any target continuous distribution. Next, we extend the result to arbitrary distribution with mixed-type data. Denote two sets $S_1$ and $S_2$. For a vector $\boldsymbol{x}=(x_1,\cdots, x_d)$, let's denote that for $\forall i\in S_1$, $x_i$ is continuous, while for $\forall i\in S_2$, $x_i$ is discrete. Here $S_1\bigcup S_2=\{1,\cdots, d\}$ and $S_1\bigcap S_2=\varnothing$. Now we show how to replace all the discrete variables by continuous variables and generate new distributions. Suppose for any $j\in S_2$, $x_j$ follows a $K$-valued discrete distribution $f(x)=\sum_{k=1}^{K}p_k\delta(x-v_{k})\triangleq P_j$ where $\delta$ is a Dirac measure, $v_k$ is the $k$-th value of $x_j$ and $\sum_{k=1}^Kp_k=1$. Consider $\tilde{\boldsymbol{x}}=(x_1,\cdots,\tilde{x}_j,\cdots, x_d)$ where $\tilde{x}_j \sim \sum_{k=1}^Kp_k\mathcal{N}(v_k,\varepsilon^2\eta^2)\triangleq Q_j$ where $\eta$ is a fixed constant. Thus, based on the additive property of Wasserstein distance, we can obtain
\begin{align}
W_1(P_j,Q_j)\leq \sum_{k=1}^Kp_kW_1(\delta(v_k),Q^{(k)}_j),
\label{wsp}
\end{align}
where $Q_j^{(k)}\triangleq \mathcal{N}(v_k,\varepsilon^2\eta^2)$. It is easy to  see that
\begin{align*}
W_1(\delta(v_k),Q^{(j)}_k)=\mathbb{E}_{x\sim Q^{(j)}_k}|x-v_k|=\mathbb{E}_{x\sim \mathcal{N}(0,\varepsilon^2\eta^2)}|x| =\sqrt{\frac{2\eta^2}{\pi}}\varepsilon.
\end{align*}
Thus, with Eq.\eqref{wsp}, we have 
\begin{align*}
W_1(P_j,Q)\leq \sum_{k=1}^Kp_k \sqrt{\frac{2\eta^2}{\pi}}\varepsilon= \sqrt{\frac{2\eta^2}{\pi}} \varepsilon.
\end{align*}
where $j\in S_2$. It further gives $W_1(F(\boldsymbol{x}), F(\tilde{\boldsymbol{x}}))\leq \sqrt{\frac{2\eta^2}{\pi}}\varepsilon$. By running in all $j\in S_2$, we can build a sequence $\boldsymbol{z}_0,\boldsymbol{z}_1,\cdots, \boldsymbol{z}_{|S_2|}$, where $\boldsymbol{z}_0=\boldsymbol{x}$, $\boldsymbol{z}=\boldsymbol{z}_{|S_2|}$ and $\boldsymbol{z}_s$ replaces $x_{j_s}(j_s\in S_2)$ in $\boldsymbol{z}_{s-1}$, to gradually transform $\boldsymbol{x}$ to a fully continuous vector $\boldsymbol{z}$. As a result, we have
\begin{align*}
W(F(\boldsymbol{z}),F(\boldsymbol{x}))\leq \sum_{s=1}^{|S_2|}W(F(\boldsymbol{z}_{s}),F(\boldsymbol{z}_{s-1}))\leq |S_2|\sqrt{\frac{2\eta^2}{\pi}}\varepsilon.
\end{align*}
Let $\tilde{\mathcal{T}}=F(\boldsymbol{z})$ and denote the mixed distribution $F(\boldsymbol{x})$ by $\mathcal{T}$. Then, by the triangle inequality, it is easy to see that for the generative model $G'$, we have
\begin{align*}
 W_1(G_{\sharp}\rho, \mathcal{T})\leq W_1(G_{\sharp}\rho, \tilde{\mathcal{T}})+W_1(\tilde{\mathcal{T}},\mathcal{T})< (1+|S_2|\sqrt{\frac{2\eta^2}{\pi}})\varepsilon \leq  (1+ \sqrt{\frac{2}{\pi}}d\eta)\varepsilon.
\end{align*}
 
\end{proof}

\section{Proof of Theorem 2}
With a little abuse of the notation, we use $y(\boldsymbol{x}, \boldsymbol{z})$ to denote the response to task $\boldsymbol{x}$ given by the virtual participant with profile $\boldsymbol{z}$. When it won't cause confusion, we further simplify the notation and write it as $y(\boldsymbol{z})$. Before we go ahead to prove Theorem 2, we introduce the following lemma on ambiguity decomposition \cite{krogh1994neural, wood2023unified}:
\begin{lemma}
Given the ``ground truth" $\overline{y}$, and the noisy responses of a population $\tilde{y}_i (i=1,\cdots, N)$ with their average $\overline{\tilde{y}}=\frac{1}{N}\sum_{i=1}^N\tilde{y}_i$, we have the following ambiguity decomposition:
\begin{align*}
 \ell(\overline{y}, \overline{\tilde{y}})=\frac{1}{N}\sum_{i=1}^N\ell(\overline{y},\tilde{y}_i)- \frac{1}{N}\sum_{i=1}^N\ell(\overline{\tilde{y}},\tilde{y}_i).
\end{align*}
\end{lemma}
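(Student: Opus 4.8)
The plan is to prove the identity by a direct algebraic expansion of the squared loss, using only the single fact that $\overline{\tilde{y}}$ is the arithmetic mean of $\tilde{y}_1,\dots,\tilde{y}_N$. Writing $\ell(a,b)=(a-b)^2$, I would start from the first term on the right-hand side, $\frac{1}{N}\sum_{i=1}^N\ell(\overline{y},\tilde{y}_i)$, and insert $\overline{\tilde{y}}$ as a pivot inside each summand through $\overline{y}-\tilde{y}_i=(\overline{y}-\overline{\tilde{y}})+(\overline{\tilde{y}}-\tilde{y}_i)$, then expand the square. This produces three contributions: the constant term $(\overline{y}-\overline{\tilde{y}})^2$, a cross term proportional to $\sum_{i=1}^N(\overline{\tilde{y}}-\tilde{y}_i)$, and the residual $\frac{1}{N}\sum_{i=1}^N(\overline{\tilde{y}}-\tilde{y}_i)^2$.

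The key step is that the cross term vanishes: since $\sum_{i=1}^N\tilde{y}_i=N\overline{\tilde{y}}$ by definition of the average, we have $\sum_{i=1}^N(\overline{\tilde{y}}-\tilde{y}_i)=0$. Hence $\frac{1}{N}\sum_{i=1}^N\ell(\overline{y},\tilde{y}_i)=(\overline{y}-\overline{\tilde{y}})^2+\frac{1}{N}\sum_{i=1}^N\ell(\overline{\tilde{y}},\tilde{y}_i)=\ell(\overline{y},\overline{\tilde{y}})+\frac{1}{N}\sum_{i=1}^N\ell(\overline{\tilde{y}},\tilde{y}_i)$, and rearranging for $\ell(\overline{y},\overline{\tilde{y}})$ gives exactly the claimed decomposition. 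I would also record that the identity is pointwise — it holds for a fixed task $\boldsymbol{x}$ and fixed realizations of the $\tilde{y}_i$ — so that in the proof of Theorem 2 it can simply be inserted inside the nested expectations $\mathbb{E}_{\mathcal{T}}\mathbb{E}_{\mathcal{D}}\mathbb{E}_{\boldsymbol{x},y}[\cdot]$ without any additional justification.

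There is essentially no obstacle here for the squared-loss case; the only care needed is bookkeeping in the expansion and being explicit that $\overline{\tilde{y}}$ being the empirical mean is precisely what kills the linear term. If one wanted the lemma at the full level of generality of \cite{wood2023unified}, i.e.\ for an arbitrary Bregman divergence $\ell$ generated by a strictly convex potential rather than only the squared loss, the same pivoting argument still goes through, but the ``cross term cancels'' step is replaced by the first-order optimality condition characterizing $\overline{\tilde{y}}$ as the (arithmetic-mean) Bregman barycenter of $\tilde{y}_1,\dots,\tilde{y}_N$; I would mention this only as a closing remark, since Theorem 2 and its use of this lemma are stated for the squared loss.
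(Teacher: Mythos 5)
Your proof is correct and follows essentially the same route as the paper's: both expand $\ell(\overline{y},\tilde{y}_i)$ around the pivot $\overline{\tilde{y}}$ and use the fact that the cross term vanishes because $\overline{\tilde{y}}$ is the arithmetic mean. The only cosmetic difference is that the paper first states the bias--variance decomposition for a general expectation $\mathbb{E}_{\boldsymbol{z}\sim\mathcal{T}}$ and then specializes to the empirical population, whereas you work directly with the finite sum; your added remarks on pointwise validity and the Bregman generalization are accurate but not needed.
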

\begin{proof}
We first notice that the squared loss admits the following bias-variance decomposition: 
\begin{align*}
\mathbb{E}_{\boldsymbol{z}\sim\mathcal{T}}[\ell(\overline{y}, \tilde{y}(\boldsymbol{z}))]&=\mathbb{E}_{\boldsymbol{z}\sim\mathcal{T}}\Bigl[\ell\bigl(\overline{y}, \mathbb{E}_{\boldsymbol{z}\sim\mathcal{T}}[\tilde{y}(\boldsymbol{z})]\bigr)+\ell\bigl(\mathbb{E}_{\boldsymbol{z}\sim\mathcal{T}}[\tilde{y}(\boldsymbol{z})], \tilde{y}(\boldsymbol{z})\bigr)+2\bigl(\overline{y}-\mathbb{E}_{\boldsymbol{z}\sim\mathcal{T}}[\tilde{y}(\boldsymbol{z})]\bigr)\bigl(\mathbb{E}_{\boldsymbol{z}\sim\mathcal{T}}[\tilde{y}(\boldsymbol{z})]-\tilde{y}(\boldsymbol{z})\bigr)\Bigr]\\
&=\ell\bigl(\overline{y}, \mathbb{E}_{\boldsymbol{z}\sim\mathcal{T}}[\tilde{y}(\boldsymbol{z})]\bigr)+\mathbb{E}_{\mathcal{T}}\Bigl[\ell\bigl(\mathbb{E}_{\boldsymbol{z}\sim\mathcal{T}}[\tilde{y}(\boldsymbol{z})],\tilde{y}(\boldsymbol{z})\bigr)\Bigr].  
\end{align*}
By adopting $\mathcal{T}$ as a finite sample population $U=\{u_i\}_{i=1}^N$, we can rewrite the decomposition as
\begin{align*}
\frac{1}{N}\sum_{i=1}^N\ell(\overline{y}, \tilde{y}_i)=\ell\bigl(\overline{y}, \overline{\tilde{y}}\bigr)+\frac{1}{N}\sum_{i=1}^N\ell\bigl(\overline{\tilde{y}},\tilde{y}_i\bigr).
\end{align*}
A simple rearrangement of the terms completes the proof.
\end{proof}
With this ambiguity decomposition, by taking the expected risk of $\overline{\tilde{y}}$ we have
\begin{align*}
L&=\mathbb{E}_{\mathcal{T}}\Bigl[\mathbb{E}_{\mathcal{D}}\bigl[\mathbb{E}_{\boldsymbol{x}\sim\mathcal{X},y\sim\mathcal{Y}}[\frac{1}{N}\sum_{i=1}^N\ell(\overline{y},\tilde{y}_i)]\bigr]\Bigr]-\mathbb{E}_{\mathcal{T}}\Bigl[\mathbb{E}_{\mathcal{D}}\bigl[\mathbb{E}_{\boldsymbol{x}\sim\mathcal{X},y\sim\mathcal{Y}}[\frac{1}{N}\sum_{i=1}^N\ell(\overline{\tilde{y}},\tilde{y}_i)]\bigr]\Bigr]\\
&=\mathbb{E}_{\mathcal{X}}\Bigl[\mathbb{E}_{\mathcal{T},\mathcal{D}}\bigl[\mathbb{E}_{y\sim\mathcal{Y}|\mathcal{X},\mathcal{T}}[\frac{1}{N}\sum_{i=1}^N\ell(\overline{y},\tilde{y}_i)]\bigr]\Bigr]-\mathbb{E}_{\mathcal{T}, \mathcal{D}}\bigl[\frac{1}{N}\sum_{i=1}^N\ell(\overline{\tilde{y}},\tilde{y}_i)\bigr]\Bigr].
\end{align*}
Now we notice that the first term has the following decomposition:
\begin{align*}
\mathbb{E}_{\mathcal{T},\mathcal{D}}\Bigl[\mathbb{E}_{y,\tilde{y}\sim\mathcal{Y}|\mathcal{X},\mathcal{T}}[\frac{1}{N}\sum_{i=1}^N\ell(\overline{y},\tilde{y}_i)]\Bigr]&=\mathbb{E}_{\mathcal{T},\mathcal{D}}\Bigl[\mathbb{E}_{y,\tilde{y}\sim\mathcal{Y}|\mathcal{X},\mathcal{T}}[\frac{1}{N}\sum_{i=1}^N\ell(\overline{y},\overline{\tilde{y}}_i)]\Bigr]+\mathbb{E}_{\mathcal{T},\mathcal{D}}\Bigl[\mathbb{E}_{\tilde{y}\sim\mathcal{Y}|\mathcal{X},\mathcal{T}}[\frac{1}{N}\sum_{i=1}^N\ell(\overline{\tilde{y}}_i,\tilde{y}_i)]\Bigr]\\
&=\mathbb{E}_{\mathcal{T},\mathcal{D}}\Bigl[\mathbb{E}_{y,\tilde{y}\sim\mathcal{Y}|\mathcal{X},\mathcal{T}}[\frac{1}{N}\sum_{i=1}^N\ell(\overline{y},\overline{\tilde{y}}_i)]\Bigr]+\mathbb{E}_{\mathcal{T}}[\frac{1}{N}\sum_{i=1}^N\tilde{\eta}_i^2]\\
&=\mathbb{E}_{\mathcal{T},\mathcal{D}}\Bigl[\mathbb{E}_{y\sim\mathcal{Y}|\mathcal{X},\mathcal{T}}[\frac{1}{N}\sum_{i=1}^N\ell(\overline{y},\overline{y}_i)]\Bigr]+\mathbb{E}_{\mathcal{T},\mathcal{D}}\Bigl[\frac{1}{N}\sum_{i=1}^N\ell(\overline{y}_i,\overline{\tilde{y}}_i)\Bigr]+\mathbb{E}_{\mathcal{T}}[\frac{1}{N}\sum_{i=1}^N\tilde{\eta}_i^2].
\end{align*}
This is followed by
\begin{align*}
\mathbb{E}_{\mathcal{T},\mathcal{D}}\Bigl[\mathbb{E}_{y\sim\mathcal{Y}|\mathcal{X},\mathcal{T}}[\frac{1}{N}\sum_{i=1}^N\ell(\overline{y},\overline{y}_i)]\Bigr]&=\mathbb{E}_{\mathcal{T},\mathcal{D}}\Bigl[\mathbb{E}_{y\sim\mathcal{Y}|\mathcal{X},\mathcal{T}}[\frac{1}{N}\sum_{i=1}^N\ell(\overline{y},y_i)]\Bigr]+\mathbb{E}_{\mathcal{T},\mathcal{D}}\Bigl[\mathbb{E}_{y\sim\mathcal{Y}|\mathcal{X},\mathcal{T}}[\frac{1}{N}\sum_{i=1}^N\ell(y_i,\overline{y}_i)]\Bigr]\\
&=\mathbb{E}_{\boldsymbol{v}_1,\cdots,\boldsymbol{v}_N\sim\mathcal{T}}\Bigl[\frac{1}{N}\sum_{i=1}^N\mathbb{E}_{y_i\sim\mathcal{Y}|\mathcal{X},\boldsymbol{v}_i}[\ell(\overline{y},y_i)]\Bigr]+\mathbb{E}_{\mathcal{T}}\Bigl[\frac{1}{N}\sum_{i=1}^N\eta_i^2\Bigr].
\end{align*}
Putting all together, we can obtain the full decomposition:
\begin{align*}
 L=&\mathbb{E}_{\mathcal{X}}\Bigl[\mathbb{E}_{\mathcal{T}}\bigl[\frac{1}{N}\sum_{i=1}^N\eta_i^2\bigr]+\mathbb{E}_{\boldsymbol{v}_1,\cdots,\boldsymbol{v}_N\sim\mathcal{T}}\bigl[\frac{1}{N}\sum_{i=1}^N\mathbb{E}_{y_i\sim\mathcal{Y}|\mathcal{X},\boldsymbol{v}_i}[\ell(\overline{y},y_i)]\bigr]\\
&+\mathbb{E}_{\mathcal{T},\mathcal{D}}\bigl[\frac{1}{N}\sum_{i=1}^N\ell(\overline{y}_i,\overline{\tilde{y}}_i)\bigr]+\mathbb{E}_{\mathcal{T}}\bigl[\frac{1}{N}\sum_{i=1}^N\tilde{\eta}_i^2\bigr]-{E}_{\mathcal{T},\mathcal{D}}\bigl[\frac{1}{N}\sum_{i=1}^N\ell(\overline{\tilde{y}},\tilde{y}_i)\bigr]\Bigr].
\end{align*}

\section{Proof of Theorem 3}
By theorem 2, we have 
\begin{align*}
L=L_1+L_2+\mathbb{E}_{\mathcal{T},\mathcal{D}}\bigl[\frac{1}{N}\sum_{i=1}^N\ell(\overline{y}_i,\overline{\tilde{y}}_i)\bigr]+\mathbb{E}_{\mathcal{T}}\bigl[\frac{1}{N}\sum_{i=1}^N\tilde{\eta}_i^2\bigr]-{E}_{\mathcal{T},\mathcal{D}}\bigl[\frac{1}{N}\sum_{i=1}^N\ell(\overline{\tilde{y}},\tilde{y}_i)\bigr]\Bigr].
\end{align*}
while Proposition 3 gives us
\begin{align*}
L'=L_1+L_2+\mathbb{E}_{\mathcal{T}}\bigl[\frac{1}{N}\sum_{i=1}^N\ell(\overline{y}_i,y_{ref})\bigr]+\eta(t).
\end{align*}
Therefore, we have
\begin{align*}
L-L'=\mathbb{E}\bigl[\frac{1}{N}\sum_{i=1}^N\ell(\overline{y}_i,\overline{\tilde{y}}_i)\bigr]+\mathbb{E}\bigl[\frac{1}{N}\sum_{i=1}^N\tilde{\eta}_i^2\bigr]-{E}\bigl[\frac{1}{N}\sum_{i=1}^N\ell(\overline{\tilde{y}},\tilde{y}_i)\bigr]\Bigr]-\mathbb{E}\bigl[\frac{1}{N}\sum_{i=1}^N\ell(\overline{y}_i,y_{ref})\bigr]-\eta(t).
\end{align*}
By Eq. (8), we have $\tilde{y}_i=y_{ref}+\delta_i+\tilde{\varepsilon}_i$ where, for a specific $\boldsymbol{x}$, $y_{ref}=\Phi(\boldsymbol{x})$, $\delta_i=\delta(\boldsymbol{x},\boldsymbol{v}_i)$, $\mathbb{E}[\tilde{\varepsilon}_i]=0$ and $\mathbb{E}[\tilde{\varepsilon}_i^2]=\tilde{\eta}^2$. It further gives $\overline{\tilde{y}}_i =y_{ref}+\frac{1}{N}\sum_{i=1}^N\delta_i$. Thus, we can obtain
\begin{align*}
 L-L'=& \mathbb{E}\Bigl[\frac{1}{N}\sum_{i=1}^N\ell(\overline{y}_i,y_{ref}+\frac{1}{N}\sum_{i=1}^N\delta_i)\Bigr]+\mathbb{E}\bigl[\frac{1}{N}\sum_{i=1}^N\tilde{\eta}_i^2\bigr]-{E}\bigl[\frac{1}{N}\sum_{i=1}^N\ell(y_{ref}+\delta_i+\tilde{\varepsilon}_i,y_{ref}+\frac{1}{N}\sum_{i=1}^N\delta_i)\bigr]\Bigr]\\
 &-\mathbb{E}\bigl[\frac{1}{N}\sum_{i=1}^N\ell(\overline{y}_i,y_{ref})\bigr]-\eta(t)\\
 =&\mathbb{E}\Bigl[(\frac{1}{N}\sum_{i=1}^N\delta_i)^2\Bigr]-2\mathbb{E}\Bigl[(\frac{1}{N}\sum_{i=1}^N\delta_i)(\frac{1}{N}\sum_{i=1}^N\overline{y}_i-y_{ref})\Bigr]-{E}\bigl[\frac{1}{N}\sum_{i=1}^N(\delta_i-\frac{1}{N}\sum_{i=1}^N\delta_i)^2\bigr]\Bigr]-\eta(t)\\
 =&\mathbb{E}\Bigl[(\frac{1}{N}\sum_{i=1}^N\delta_i)^2\Bigr]-2\mathbb{E}\Bigl[(\frac{1}{N}\sum_{i=1}^N\delta_i)(y^{**}-y_{ref})\Bigr]-{E}\bigl[\frac{1}{N}\sum_{i=1}^N(\delta_i-\frac{1}{N}\sum_{i=1}^N\delta_i)^2\bigr]\Bigr]-\eta(t)\\
 =&2\mathbb{E}\Bigl[(\frac{1}{N}\sum_{i=1}^N\delta_i)^2\Bigr]-2\mathbb{E}\Bigl[(\frac{1}{N}\sum_{i=1}^N\delta_i)(y^{**}-y_{ref})\Bigr]-{E}\Bigl[\frac{1}{N}\sum_{i=1}^N\delta_i^2\Bigr]-\eta(t)\\
 =&\mathbb{E}[2P^2-2P(y^{**}-y_{ref})-Q-\eta(t)]
\end{align*}
where $y^{**}$ is the gold-standard response of the real human population that is not affected by the digital population, $P=\frac{1}{N}\sum_{i=1}^N\delta_i$ ,and $Q=\frac{1}{N}\sum_{i=1}^N\delta_i^2$. Now let $\Delta=y^{**}-y_{ref}$. Noting that $\mathbb{E}[P^2]=\mathbb{E}^2[P]+Var[P]=\mu_{\delta}^2+\frac{1}{N}\varepsilon_{\delta}^2-\frac{1}{N^2}\sum_{i=1}^N\mathbb{E}^2[\delta_i]$, $\mathbb{E}[P]=\mu_{\delta}$, and $\mathbb{E}[Q]=\varepsilon^2_{\delta}$, it yields 
\begin{align*}
 L-L'&=2\mu_{\delta}^2+\frac{2}{N}\varepsilon_{\delta}^2-\frac{2}{N^2}\sum_{i=1}^N\mathbb{E}^2[\delta_i]-2\mu_{\delta}(y^{**}-y_{ref})-\varepsilon^2_\delta-\eta(t)\\
 &\leq 2(1-\frac{1}{N})\mu_{\delta}^2-2\mu_{\delta}\Delta-(1-\frac{2}{N})\varepsilon^2_\delta-\eta(t)\\
 &=\mathbb{E}\Bigl[\frac{2(N-1)}{N}[\mu_{\delta}-\frac{N}{2(N-1)}\Delta]^2-\frac{N}{2(N-1)}\Delta^2-(1-\frac{2}{N})\varepsilon^2_\delta-\eta(t)\Bigr],
\end{align*}
For brevity, we let $A=\frac{N}{2(N-1)}, B=1-\frac{2}{N}, C=\frac{N-2}{2(N-1)}=1-A$. Obviously, when $N\geq 2$, $0<A\leq 1$ and $B,C\geq0$. The equation above gives a sufficient condition of $L-L'\leq 0$:
\begin{align*}
A\Delta-\sqrt{A^2\Delta^2+AB\varepsilon^2_\delta+A\eta(t)}\leq \mu_{\delta}\leq A\Delta+\sqrt{A^2\Delta^2+AB\varepsilon^2_\delta+A\eta(t)}
\end{align*}
It is equivalent to 
\begin{align*}
\Delta-\sqrt{A^2\Delta^2+AB\varepsilon^2_\delta+A\eta(t)}-C\Delta\leq \mu_{\delta}\leq \Delta+\sqrt{A^2\Delta^2+AB\varepsilon^2_\delta+A\eta(t)}-C\Delta  
\end{align*}
Let $\Delta_L=\sqrt{A^2\Delta^2+AB\varepsilon^2_\delta+A\eta(t)}+C\Delta$ and $\Delta_U=\sqrt{A^2\Delta^2+AB\varepsilon^2_\delta+A\eta(t)}-C\Delta$. Meanwhile, we notice the derivatives are:
\begin{align*}
&\Delta_L'=\frac{A^2\Delta}{\sqrt{A^2\Delta^2+AB\varepsilon^2_\delta+A\eta(t)}}+C\\
&\Delta_U'=\frac{A^2\Delta}{\sqrt{A^2\Delta^2+AB\varepsilon^2_\delta+A\eta(t)}}-C
\end{align*}
Let us consider $$\Delta_0=\sqrt{\frac{C^2[B\varepsilon_{\delta}^2+\eta(t)]}{A(A^2-C^2)}}=\frac{N-2}{N}\sqrt{\frac{(N-2)\varepsilon_{\delta}^2+N\eta(t)}{2}}$$ The stationary points lie at  $\Delta=-\Delta_0$ and $\Delta=\Delta_0$, respectively. Here, it is easy to see $A>C\geq 0$ for $N\geq 2$, thus $\Delta_0=\frac{C^2[B\varepsilon_{\delta}^2+\eta(t)]}{A(A^2-C^2)}\geq 0$. Therefore, we have the following observations: When $\Delta>0$, as $\Delta$ increases, $\Delta_L$ will increase, whereas $\Delta_U$ will decrease when $\Delta<\Delta_0$ and become increasing for $\Delta\geq \Delta_0$. When $\Delta<0$, as $\Delta$ increases, $\Delta_L$ will decrease until $\Delta$ reaches $-\Delta_0$ and become increasing,  whereas $\Delta_U$ will decrease. Thus, we can find the minimum of the bounds:
\begin{align*}
\Delta_L, \Delta_U\geq \frac{1}{N}\sqrt{2[(N-2)\varepsilon_{\delta}^2+N\eta(t)]}
\end{align*}
As a result, we have two cases of the bounds depending on $N$ and can build confidence interval as follows:
\begin{itemize}
    \item \textbf{$N$ is sufficiently large}. When $\Delta_0=\frac{N-2}{N}\sqrt{\frac{(N-2)\varepsilon_{\delta}^2+N\eta(t)}{2}}\geq \kappa_{\alpha}$, $\Delta_L$ and $\Delta_U$ will be monotonically increasing and decreasing, within $|\Delta|\leq \kappa_{\alpha}$. In this case, we let 
    \begin{align*}
        h_{\Delta}(\kappa_{\alpha})=\frac{\sqrt{N^2\kappa_{\alpha}^2+2(N-1)(N-2)\varepsilon_{\delta}^2+2N(N-1)\eta(t)}-(N-2)\kappa_{\alpha}}{2(N-1)}
    \end{align*}
    \item \textbf{$N$ is small}. Given $\Delta_0=\frac{N-2}{N}\sqrt{\frac{(N-2)\varepsilon_{\delta}^2+N\eta(t)}{2}}< \kappa_{\alpha}$, we let
    \begin{align*}
        h_{\Delta}(\kappa_{\alpha})= \frac{1}{N}\sqrt{2[(N-2)\varepsilon_{\delta}^2+N\eta(t)]}
    \end{align*}
\end{itemize}
We can build the interval as 
\begin{align*}
B_\alpha(\Delta)=[\Delta-h_\Delta(C), \Delta+h_\Delta(C)]    
\end{align*}
Since $|\Delta|<\kappa_{\alpha}$ with at least probability $1-\alpha$, if $\mu_{\delta}\in B_\alpha(\Delta)$, with the same probability, we can guarantee $L\leq L'$. 

\section{Proof of Theorem 4}
This proof follows \cite{angelopoulos2023prediction}. Let $\mathcal{E}_1=\{\mathbb{E}_{\boldsymbol{v}\sim\mathcal{T}}[f(\boldsymbol{x},\boldsymbol{v})-y]\in B^1_{\gamma}(\theta^*)\}$ and $\mathcal{E}_2=\{\mathbb{E}_{\boldsymbol{v}\sim\mathcal{T}}[\theta^*-f(\boldsymbol{x},\boldsymbol{v})]\in  B^2_{\alpha-\gamma}(\theta^*)\}$. From the conditions, we have $P(\mathcal{E}_1)\geq 1-\gamma$ and $P(\mathcal{E}_2)\geq 1-(\alpha-\gamma)$. Consider the event $\mathcal{E}=\mathcal{E}_1\bigcap\mathcal{E}_2$. It is easy to see $P(\mathcal{E})=1-P(\mathcal{E}_1^c\bigcup \mathcal{E}_2^c)\geq 1-P(\mathcal{E}_1^c)-P(\mathcal{E}_2^c)=P(\mathcal{E}_1)+P(\mathcal{E}_2)-1\geq 1-[\gamma+(\alpha-\gamma)]=1-\alpha$. On the event $\mathcal{E}$, we have
\begin{align*}
\mathbb{E}[\theta^*-y|\boldsymbol{x}]&=\mathbb{E}[\theta^*-y|\boldsymbol{x}]-\mathbb{E}[\theta^*-f(\boldsymbol{x},\boldsymbol{v})|\boldsymbol{x}] + \mathbb{E}[\theta^*-f(\boldsymbol{x},\boldsymbol{v})|\boldsymbol{x}]  \\
&=\mathbb{E}[f(\boldsymbol{x},\boldsymbol{v})-y|\boldsymbol{x}]+\mathbb{E}[\theta^*-f(\boldsymbol{x},\boldsymbol{v})|\boldsymbol{x}]\\
&\in B^1_{\gamma}(\theta^*)+B^2_{\alpha-\gamma}(\theta^*).
\end{align*}
Noticing $\theta^*=\mathbb{E}[y|\boldsymbol{x}]$, we have $\mathbb{E}[\theta^*-y|\boldsymbol{x}]=0$. Thus we have $0\in B^1_{\gamma}(\theta^*)+B^2_{\alpha-\gamma}(\theta^*)$, which turns to be a necessary condition. This completes the proof.

\section{Proof of Theorem 5}
We borrow the techniques from \citep{angelopoulos2023prediction} and show that $y^*\notin B^{y}_{\alpha}$ with probability at most $\alpha$ when $n,N\to\infty$. First, we denote $\theta=y^*$ and notice that
\begin{align*}
    \theta-\overline{\tilde{y}}=\theta-\frac{1}{N}\sum_{i=1}^{N}\tilde{y}_i=\theta-\frac{1}{N}\sum_{i=1}^N[\Phi^{(i)}(\boldsymbol{x})+\delta_i]=\Bigl(\mathbb{E}[\Phi(\boldsymbol{x})]-\frac{1}{N}\sum_{i=1}^N\Phi^{(i)}(\boldsymbol{x})\Bigr)+\Bigl(\theta-\mathbb{E}[\Phi(\boldsymbol{x})]-\frac{1}{N}\sum_{i=1}^N\delta_i\Bigr).
\end{align*}
Let $r_i'=-r_i$, $\Delta_\delta=\theta-\mathbb{E}[\Phi(\boldsymbol{x})]-\overline{\delta}$ and $\Delta_\Phi=\mathbb{E}[\Phi(\boldsymbol{x})]-\frac{1}{N}\sum_{i=1}^N\Phi^{(i)}(\boldsymbol{x})$. Thus, we have $\theta-\overline{\tilde{y}}=\Delta_{\delta}+\Delta_{\Phi}$. By central limit theorem, we have
\begin{align*}
&\sqrt{n}(\overline{r}'-\mathbb{E}[\overline{r}'])\xrightarrow{d}\mathcal{N}(0,\sigma_r^2),\\
&\sqrt{N}(\Delta_{\delta}-\mathbb{E}[\Delta_{\delta}])\xrightarrow{d}\mathcal{N}(0,\sigma_\delta^2),\\
&\sqrt{N}(\Delta_{\Phi}-\mathbb{E}[\Delta_{\Phi}])\xrightarrow{d}\mathcal{N}(0,\eta_{\Phi}(t)).
\end{align*}
Thus, we have
\begin{align*}
\sqrt{N}(\Delta_{\delta}+\Delta_{\Phi}+\overline{r}'-\mathbb{E}[\Delta_{\delta}+\Delta_{\Phi}+\overline{r}'])&=\sqrt{n}\cdot\sqrt{\frac{N}{n}}(\overline{r}'-\mathbb{E}[\overline{r}'])+\sqrt{N}\Bigl\{(\Delta_{\delta}-\mathbb{E}[\Delta_{\delta}])+(\Delta_{\Phi}-\mathbb{E}[\Delta_{\delta}])\Bigr\}\\
&\to \mathcal{N}(0,\frac{1}{p}\sigma_r^2+\sigma_\delta^2+\eta(t)).
\end{align*}
Let $\hat{\sigma}^2=\frac{1}{p}\hat{\sigma}_r^2+\hat{\sigma}_\delta^2+\eta(t)=\frac{N}{n}\hat{\sigma}_r^2+\hat{\sigma}_\delta^2+\eta(t)$. It is easy to see that this is a consistent estimate of the variance $\frac{1}{p}\sigma_r^2+\sigma_\delta^2+\eta(t)$. Therefore, we have
\begin{align*}
\lim_{n,N\to\infty}P(|(\Delta_{\delta}+\Delta_{\Phi}+\overline{r}')-\mathbb{E}[\Delta_{\delta}+\Delta_{\Phi}+\overline{r}']|\geq z_{1-\frac{\alpha}{2}}\frac{\hat{\sigma}}{\sqrt{N}})\leq \alpha.
\end{align*}
Since we know
\begin{align*}
\Delta_{\delta}+\Delta_{\Phi}+\overline{r}'=\Delta_{\delta}+\Delta_{\Phi}-\overline{r}=\theta-\overline{\tilde{y}}+\overline{\tilde{y}}-\overline{y}=\theta-\overline{y},
\end{align*}
we can easily obtain
\begin{align*}
  \mathbb{E}[\Delta_{\delta}+\Delta_{\Phi}+\overline{r}']=\mathbb{E}[\theta-\overline{y}]=0  .
\end{align*}
Thus, we have
\begin{align*}
\lim_{n,N\to\infty}P(|\Delta_{\delta}+\Delta_{\Phi}+\overline{r}'|\geq z_{1-\frac{\alpha}{2}}\frac{\hat{\sigma}}{\sqrt{N}})\leq \alpha,   
\end{align*}
which is equivalent to 
\begin{align*}
\lim_{n,N\to\infty}P(|\theta-\overline{\tilde{y}}-\overline{r}|\geq z_{1-\frac{\alpha}{2}}\sqrt{\frac{\eta(t)}{N}+\frac{\sigma_\delta^2}{N}+\frac{\sigma_r^2}{n}})\leq \alpha .      
\end{align*}
This results in
\begin{align*}
\lim_{n,N\to\infty}P(|\theta-\overline{\tilde{y}}|\geq|\overline{r}|+z_{1-\frac{\alpha}{2}}\sqrt{\frac{\eta(t)}{N}+\frac{\sigma_\delta^2}{N}+\frac{\sigma_r^2}{n}})\leq \alpha .      
\end{align*}
Noticing that 
\begin{align*}
|\overline{r}|^2=|\frac{1}{n}\sum_{i=1}^ny_i-\frac{1}{n}\sum_{i=1}^n\tilde{y}_i|^2\leq \frac{1}{n}\sum_{i=1}^n(y_i-\tilde{y}_i)^2\leq \varepsilon_0^2,
\end{align*}
which gives $|\overline{r}|\leq \varepsilon_0$, we get
\begin{align*}
\lim_{n,N\to\infty}P(|\theta-\overline{\tilde{y}}|\geq\varepsilon_0+z_{1-\frac{\alpha}{2}}\sqrt{\frac{\eta(t)}{N}+\frac{\sigma_\delta^2}{N}+\frac{\sigma_r^2}{n}})\leq \alpha .      
\end{align*}

\end{APPENDIX}

\begin{APPENDIX}{Experiments}
\section{Examples of Prompts}
\subsection{Zero-shot prompt}
By default, for all the LLM-based decision-making methods, we use zero-shot prompts which provides LLM with the problem description $\boldsymbol{x}_t$, the specific requirements $\mathcal{R}_t$ on the decisions and the context $\mathcal{C}_t$. $\boldsymbol{x}_t$ describes the specific problem the workers need to make decisions on. $\mathcal{R}_t$ describes what kind of decisions need to be made, e.g., if the decisions should be a score, the scale of the score should be included in $\mathcal{R}_t$. And $\mathcal{C}_t$ encodes information about the decision-making scenario and the decision-maker LLM should simulate. An example of zero-shot prompt is given below in Figure \ref{fig:prompt_zero}.

\begin{figure}[!htbp]
    \centering
    \includegraphics[width=0.7\textwidth]{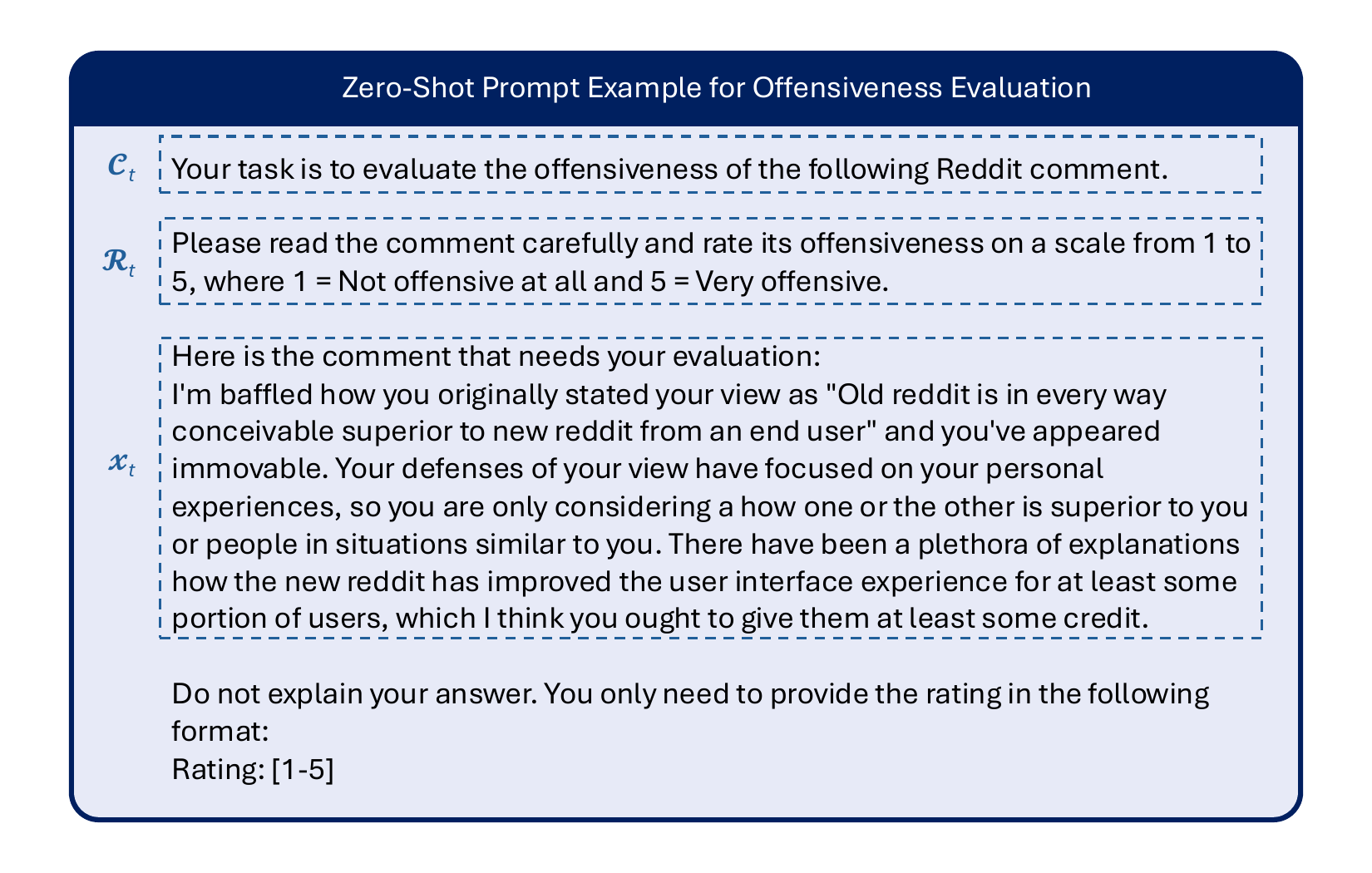}
    \caption{Example of zero-shot prompt for offensiveness evaluation.}
    \label{fig:prompt_zero}
\end{figure}

\subsection{Multi-Persona prompt}
\begin{figure}[!htbp]
    \centering
    \includegraphics[width=0.7\textwidth]{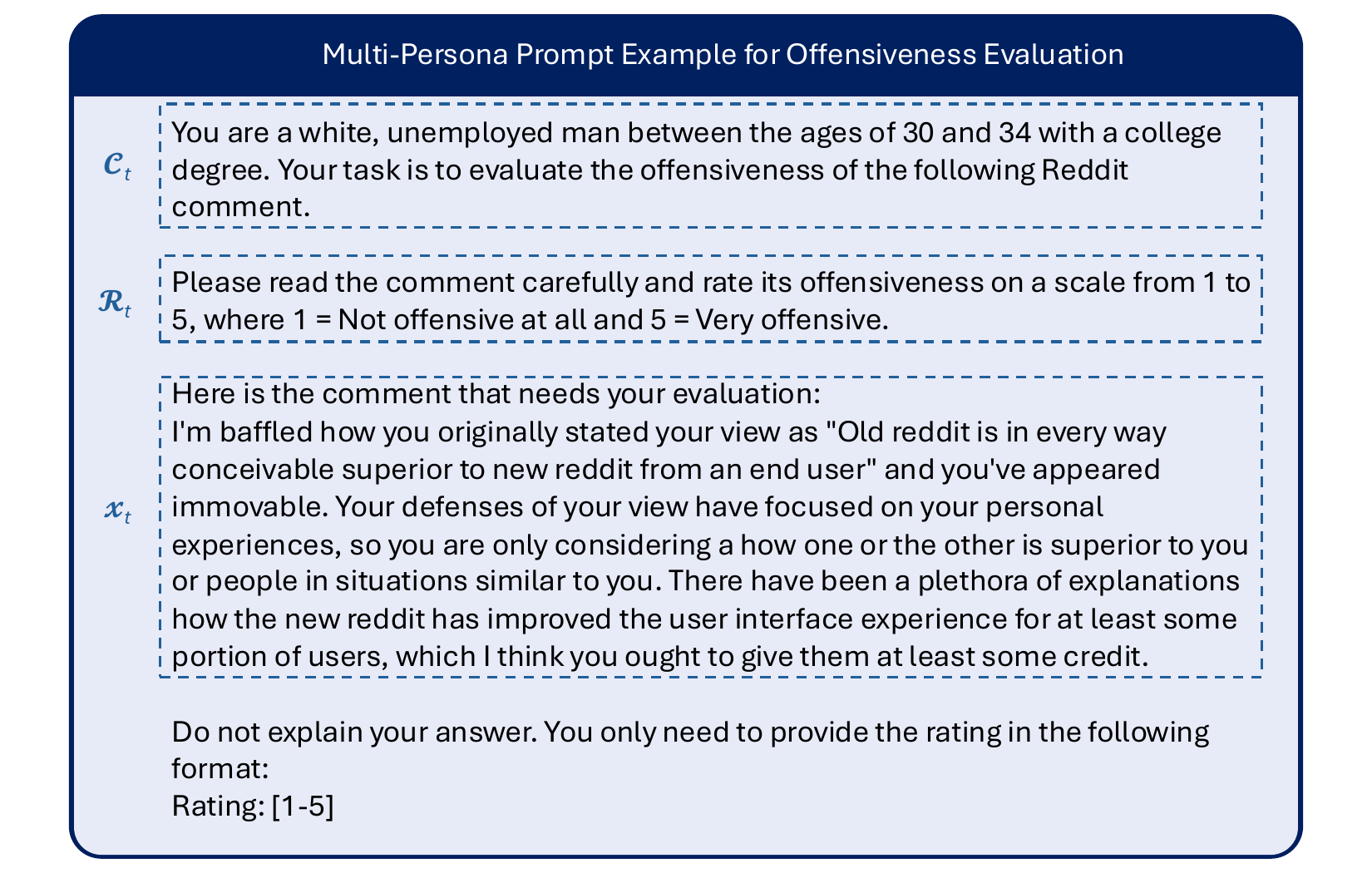}
    \caption{Example of multi-persona prompt for offensiveness evaluation.}
    \label{fig:prompt_persona}
\end{figure}

With the idea of using multiple personas in prompting for self-collaboration \cite{olea2024evaluating, hu2024quantifying}, we use LLM to simulate different individuals by building multiple personas with additional context on their profiles. We use similar prompts as zero-shot prompting for making decisions but only change the context $\mathcal{C}_t$ to assign the persona. An example of multi-persona prompt is provided below in Figure \ref{fig:prompt_persona}.

\subsection{Self-Consistency (SC) prompt}
Following \cite{wangself, liu2024dellma}, we perform self-consistency prompting by taking the majority vote of multiple decisions generated by LLM with temperature 0.5. For the generation of each decision, we use the same prompt as zero-shot prompting. 


\section{Additional Experimental Results}
\textbf{Section B.1} discusses the impact of different LLM backbones on the performance. \textbf{Section B.2-6} provides experimental results in addition to \textbf{Section 5} of the main text.

\subsection{The impact of the number of problems}

\begin{figure}[!t]
    \centering
    \begin{subfigure}{\dimexpr\textwidth/2\relax}
        \centering
        \includegraphics[width=\textwidth]{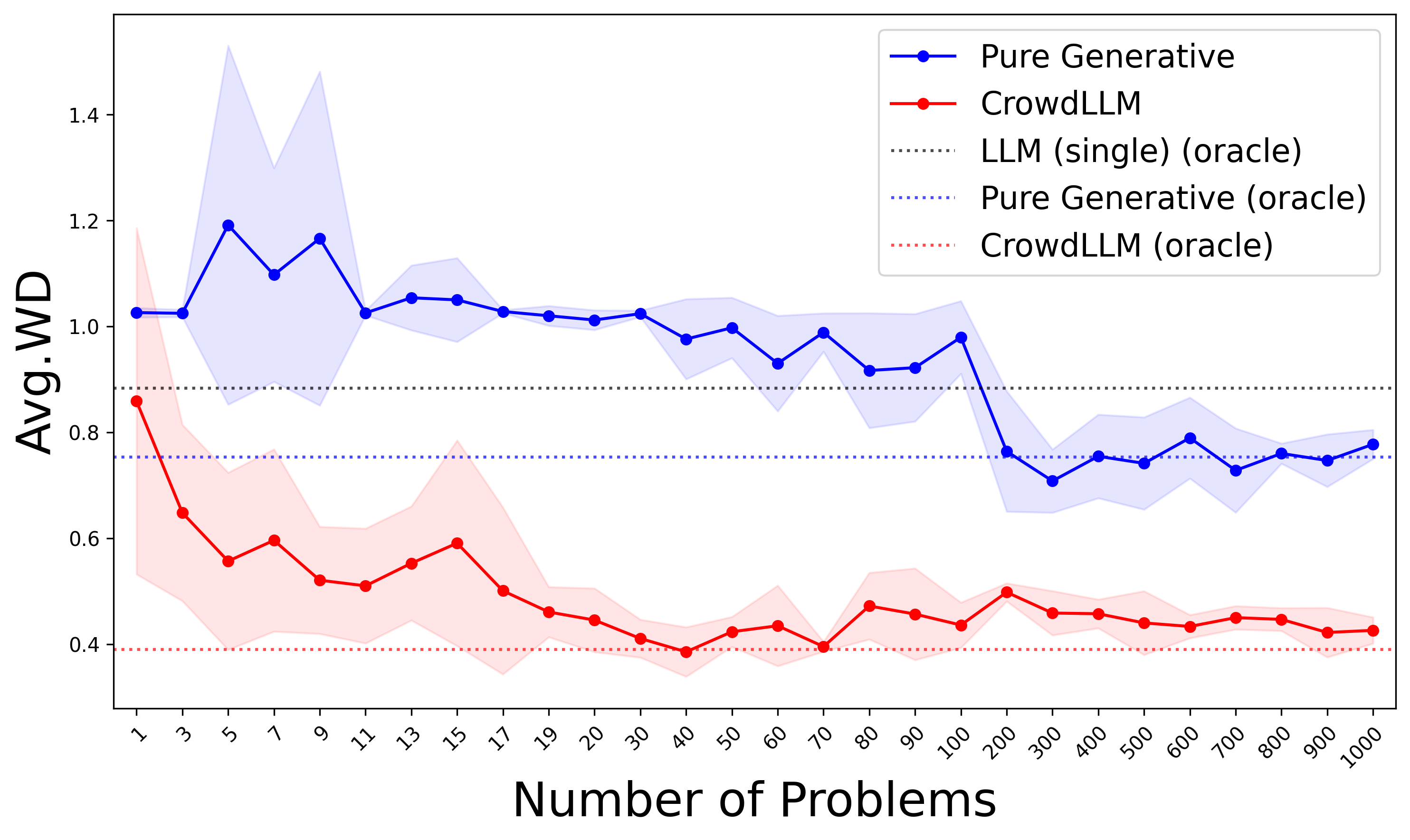}
        \caption{Offensiveness dataset}
        \label{fig:sub_offensiveness_wd_problem}
    \end{subfigure}%
    \begin{subfigure}{\dimexpr\textwidth/2\relax}
        \centering
        \includegraphics[width=\textwidth]{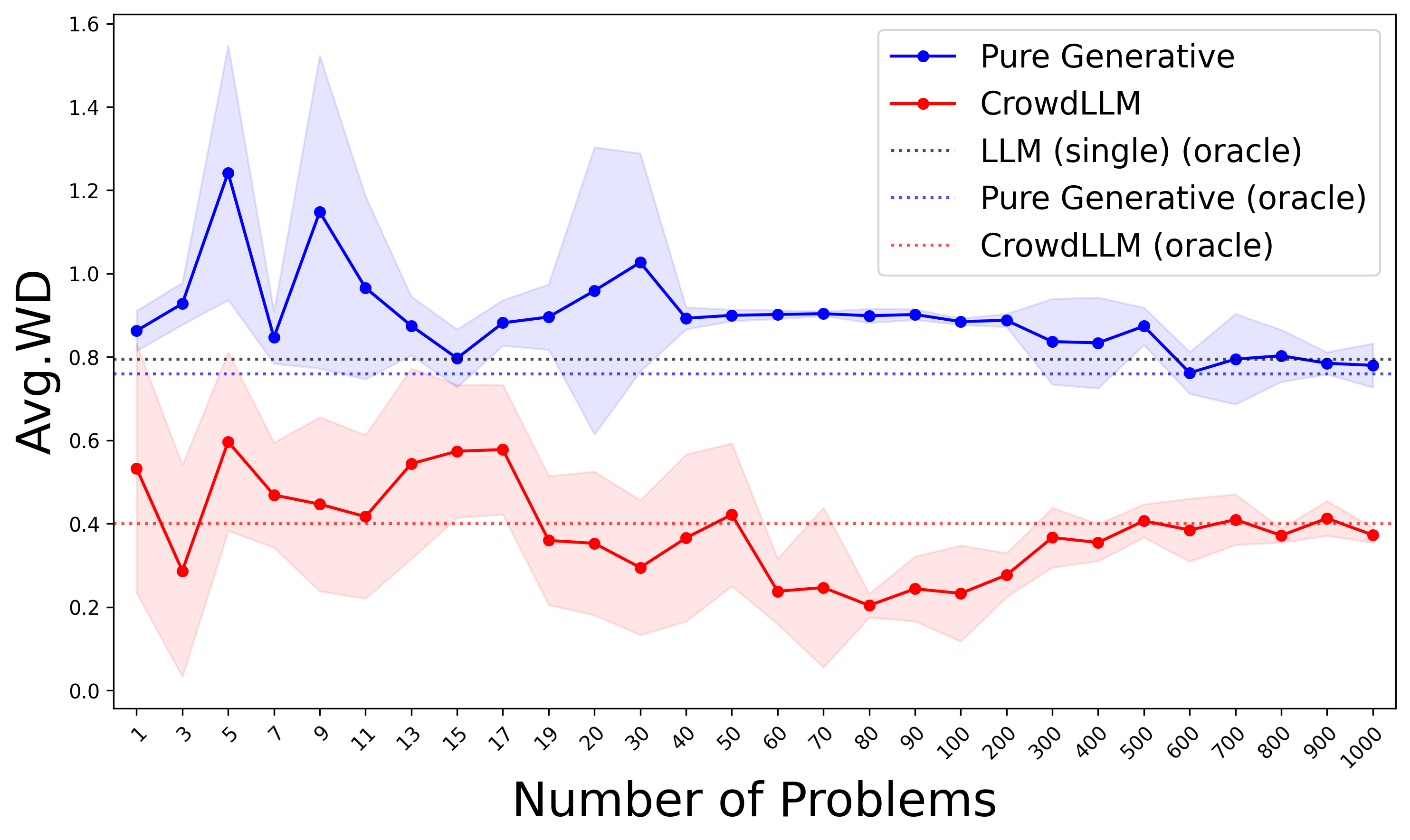}
        \caption{QA Difficulty dataset}
        \label{fig:sub_qa_difficulty_wd_problem}
    \end{subfigure}%
    \caption{Avg. WD vs. Number of Problems.}
    \label{fig:wd_vs_num_problems}
\end{figure}

\begin{figure}[!t]
    \centering
    \begin{subfigure}{\dimexpr\textwidth/2\relax}
        \centering
        \includegraphics[width=\textwidth]{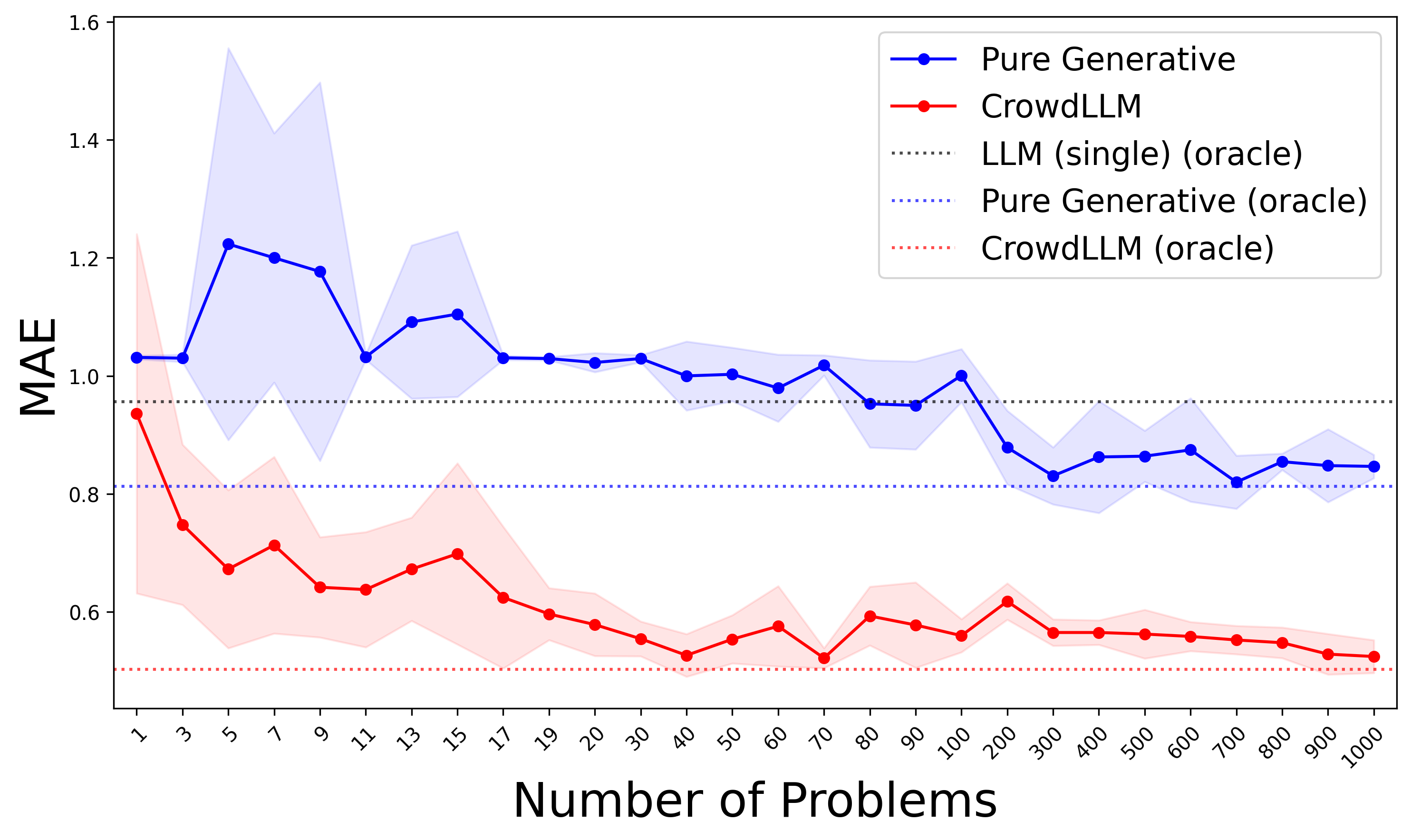}
        \caption{Offensiveness dataset}
        \label{fig:sub_offensiveness_mae_problem}
    \end{subfigure}%
    \begin{subfigure}{\dimexpr\textwidth/2\relax}
        \centering
        \includegraphics[width=\textwidth]{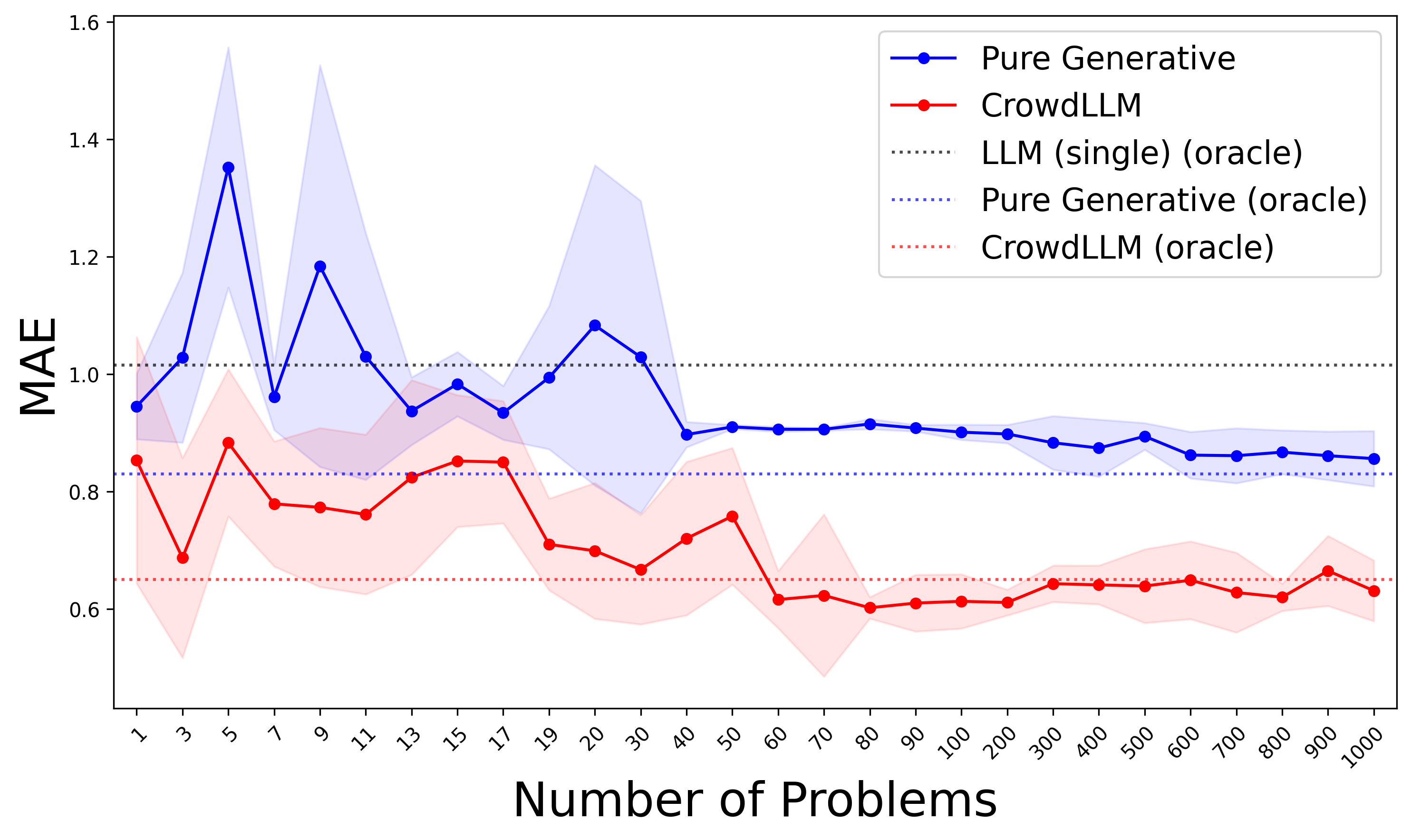}
        \caption{QA Difficulty dataset}
        \label{fig:sub_qa_difficulty_mae_problem}
    \end{subfigure}%
    \caption{MAE vs. Number of Problems.}
    \label{fig:mae_vs_num_problems}
\end{figure}

This subsection examines how the total number of problems affects model performance (MAE, RMSE, Avg. WD) for Pure Generative Model and our proposed CrowdLLM on the \textit{Offensiveness} and \textit{QA Difficulty} datasets. As Figures \ref{fig:wd_vs_num_problems}-\ref{fig:rmse_vs_num_problems} demonstrates, CrowdLLM consistently outperforms Pure Generative with lower errors across all datasets as problem numbers increase. Both models improve as problems grow from few to moderate (e.g., 1 to 100-200), after which gains diminish and metrics stabilize. Notably, CrowdLLM achieves better absolute errors and reaches stable, high-quality performance with fewer problems than Pure Generative (e.g., on \textit{Offensiveness} and \textit{QA Difficulty}, CrowdLLM stabilizes around 20-100 problems, while Pure Generative improves more gradually). This highlights CrowdLLM's data efficiency.
\begin{figure}[!t]
    \centering
    \begin{subfigure}{\dimexpr\textwidth/2\relax}
        \centering
        \includegraphics[width=\textwidth]{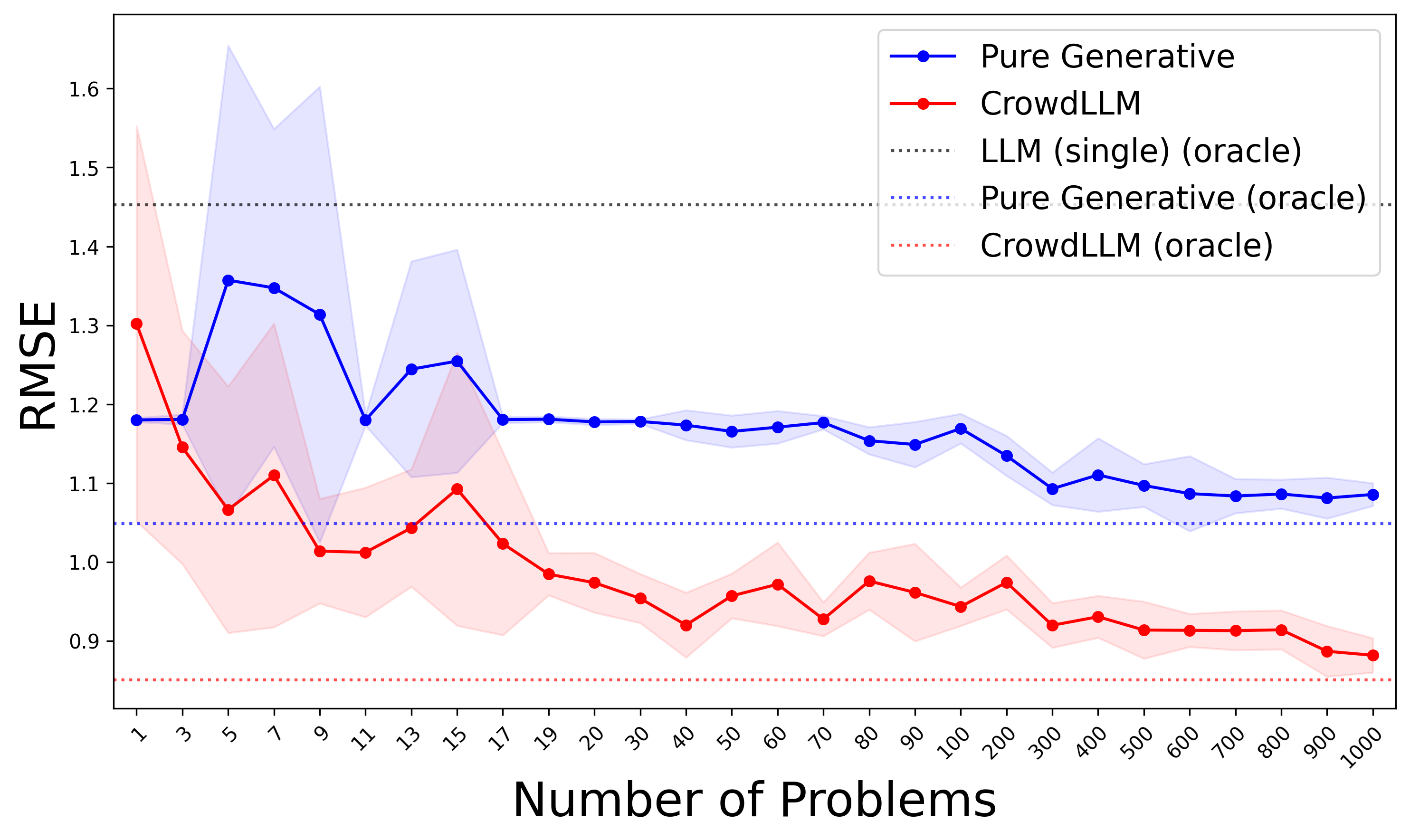}
        \caption{Offensiveness dataset}
        \label{fig:sub_offensiveness_rmse_problem}
    \end{subfigure}%
    \begin{subfigure}{\dimexpr\textwidth/2\relax}
        \centering
        \includegraphics[width=\textwidth]{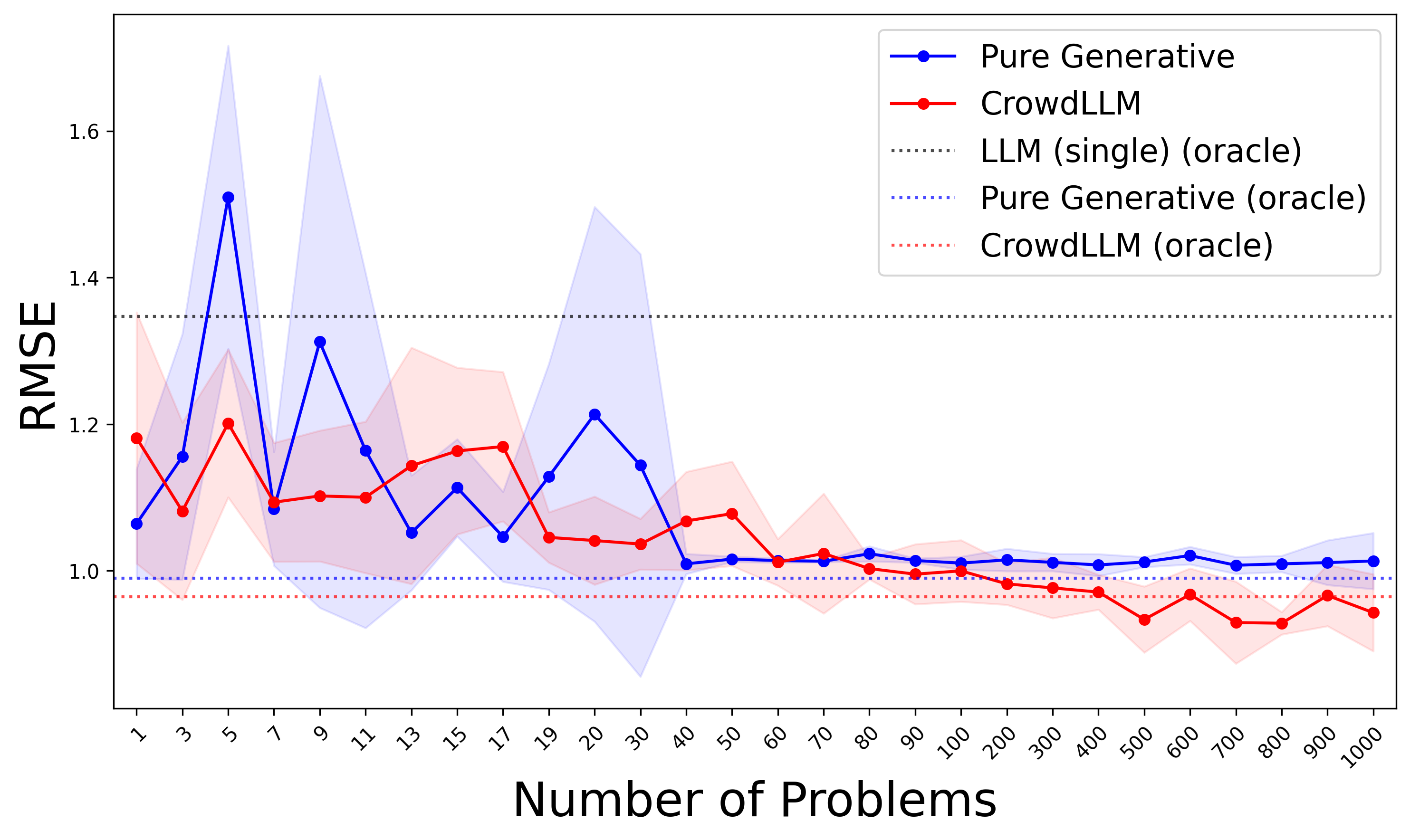}
        \caption{QA Difficulty dataset}
        \label{fig:sub_qa_difficulty_rmse_problem}
    \end{subfigure}%
    \caption{RMSE vs. Number of Problems.}
    \label{fig:rmse_vs_num_problems}
\end{figure}

\subsection{The impact of the number of unique workers}
This section analyzes how the number of unique workers affects Pure Generative Model and CrowdLLM performance (MAE, RMSE, Avg. WD) on the \textit{Offensiveness} and \textit{QA Difficulty} datasets. As Figures \ref{fig:wd_vs_num_unique_workers}-\ref{fig:rmse_vs_num_unique_workers} shows, CrowdLLM consistently shows lower error rates than Pure Generative across all metrics, regardless of unique worker count. For both models, errors substantially decrease when unique workers increase from few (e.g., 1) to moderate (e.g., 20-50), as diverse perspectives improve data quality. Beyond a point (e.g., ~50-100 workers for CrowdLLM, potentially more for Pure Generative), benefits diminish and metrics stabilize. CrowdLLM generally reaches a better performance plateau with fewer unique workers.

\begin{figure}[!t]
    \centering
    \begin{subfigure}{\dimexpr\textwidth/2\relax}
        \centering
        \includegraphics[width=\textwidth]{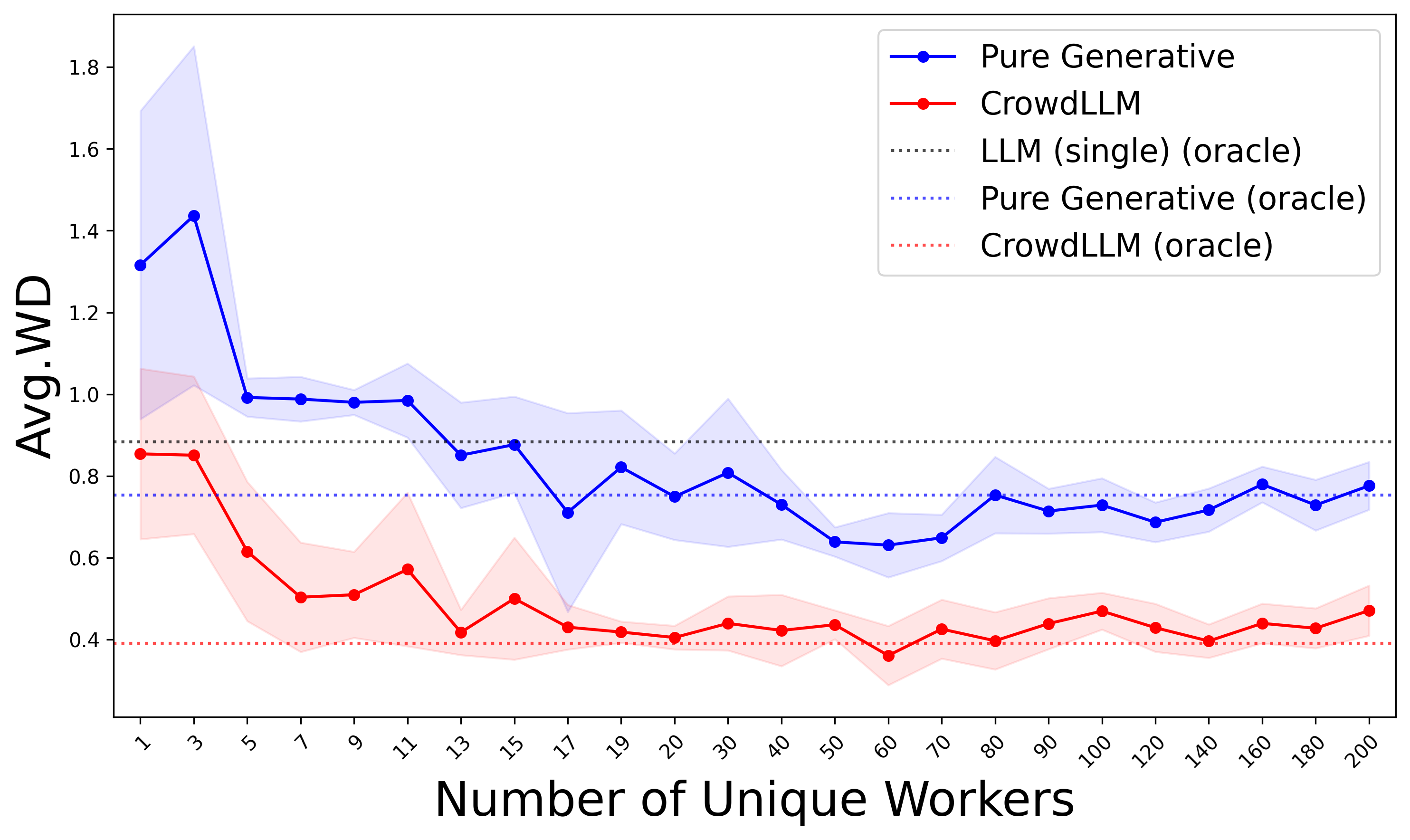}
        \caption{Offensiveness dataset}
        \label{fig:sub_offensiveness_wd_unique_workers}
    \end{subfigure}%
    \begin{subfigure}{\dimexpr\textwidth/2\relax}
        \centering
        \includegraphics[width=\textwidth]{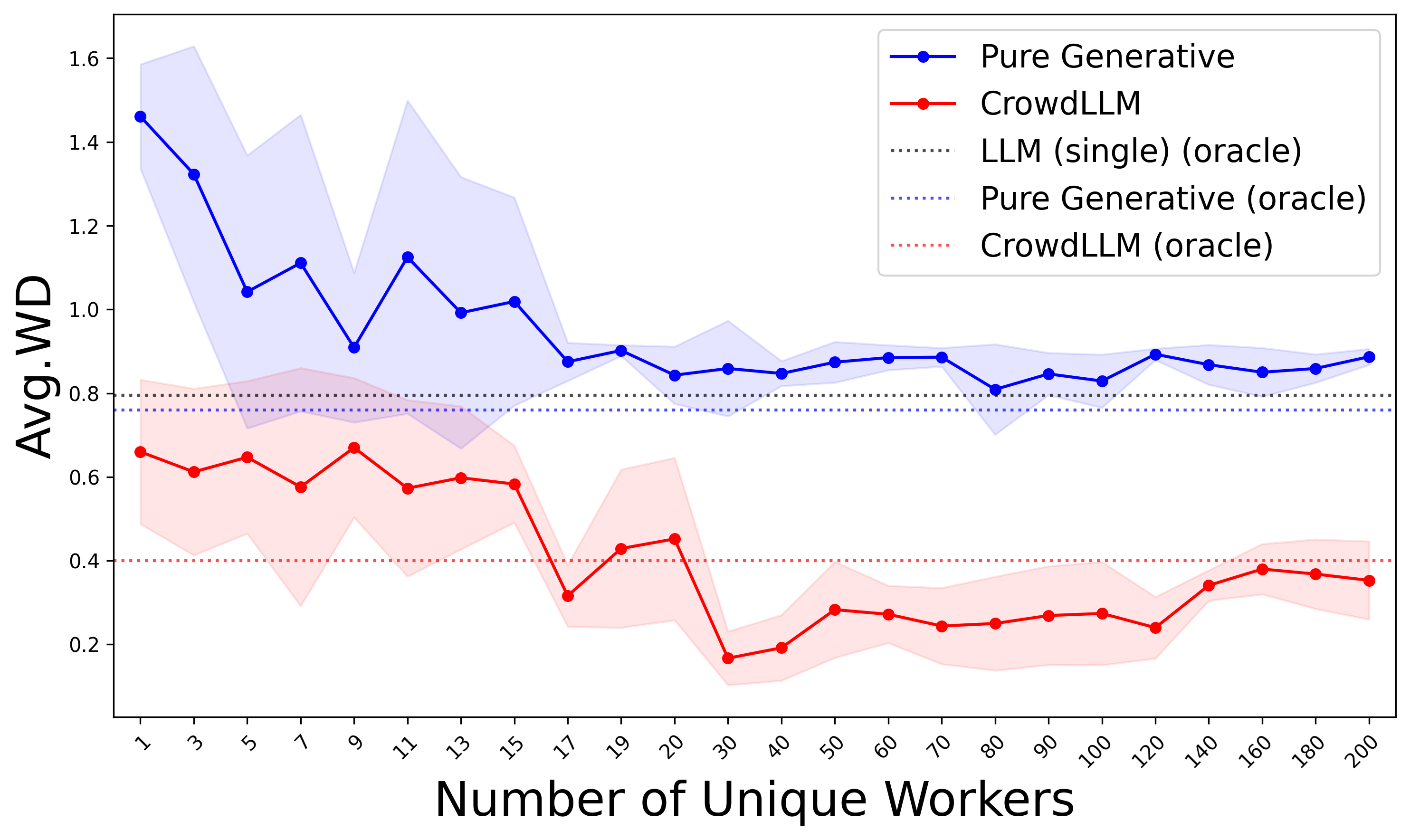}
        \caption{QA Difficulty dataset}
        \label{fig:sub_qa_difficulty_wd_unique_workers}
    \end{subfigure}%
    \caption{Avg. WD vs. Number of Unique Workers.}
    \label{fig:wd_vs_num_unique_workers}
\end{figure}

\begin{figure}[!t]
    \centering
    \begin{subfigure}{\dimexpr\textwidth/2\relax}
        \centering
        \includegraphics[width=\textwidth]{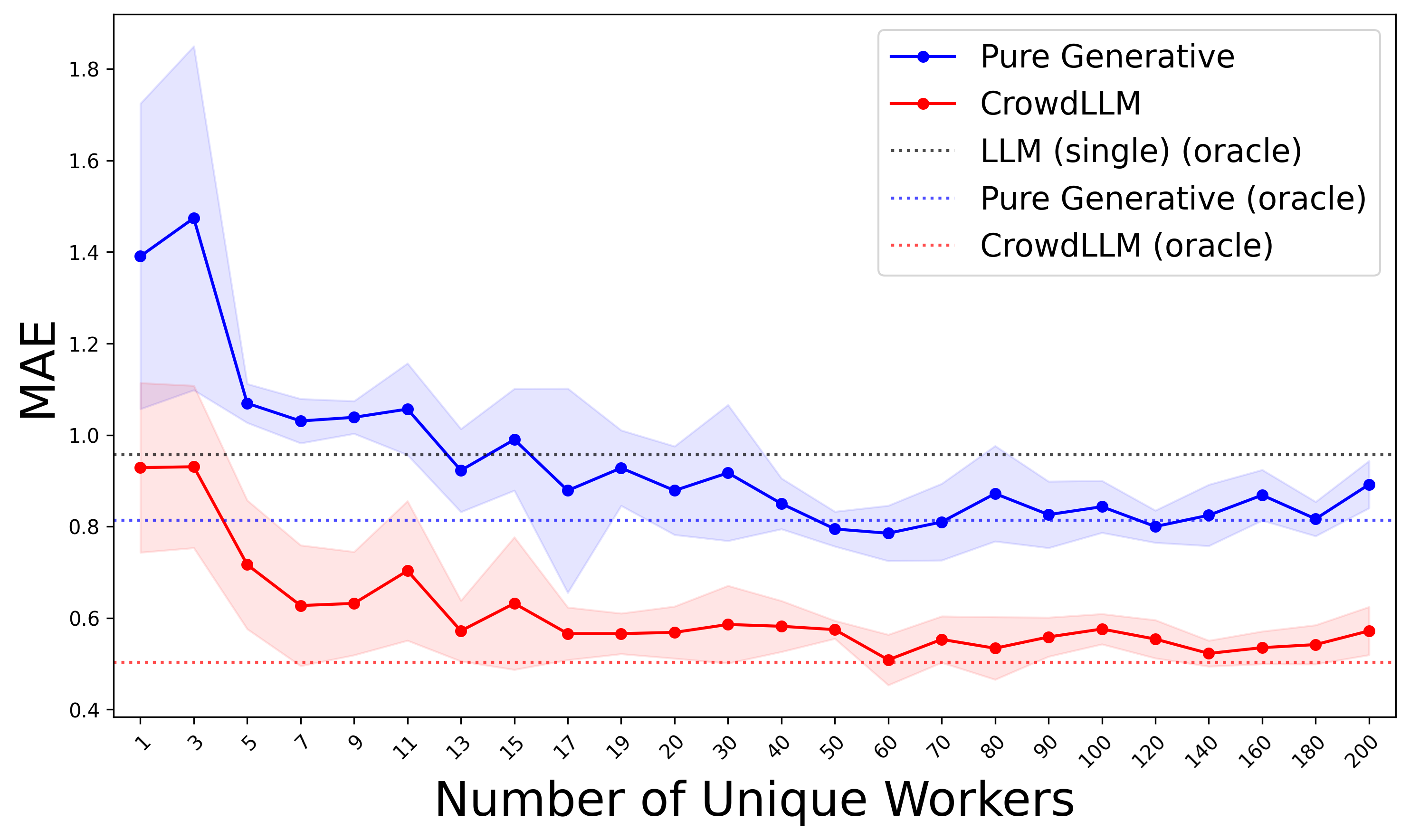}
        \caption{Offensiveness dataset}
        \label{fig:sub_offensiveness_mae_unique_workers}
    \end{subfigure}%
    \begin{subfigure}{\dimexpr\textwidth/2\relax}
        \centering
        \includegraphics[width=\textwidth]{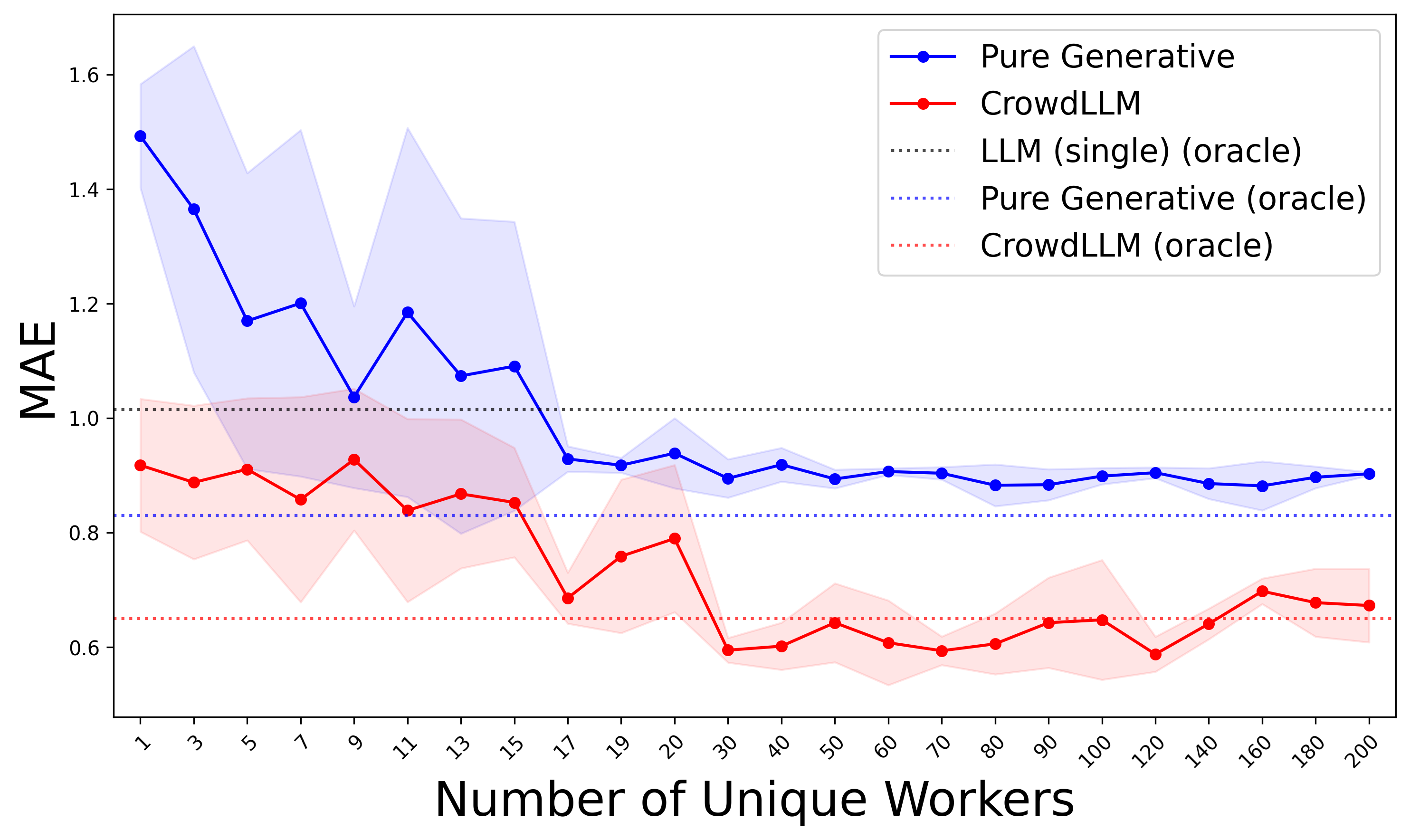}
        \caption{QA Difficulty dataset}
        \label{fig:sub_qa_difficulty_mae_unique_workers}
    \end{subfigure}%
    \caption{MAE vs. Number of Unique Workers.}
    \label{fig:mae_vs_num_unique_workers}
\end{figure}

\begin{figure}[!t]
    \centering
    \begin{subfigure}{\dimexpr\textwidth/2\relax}
        \centering
        \includegraphics[width=\textwidth]{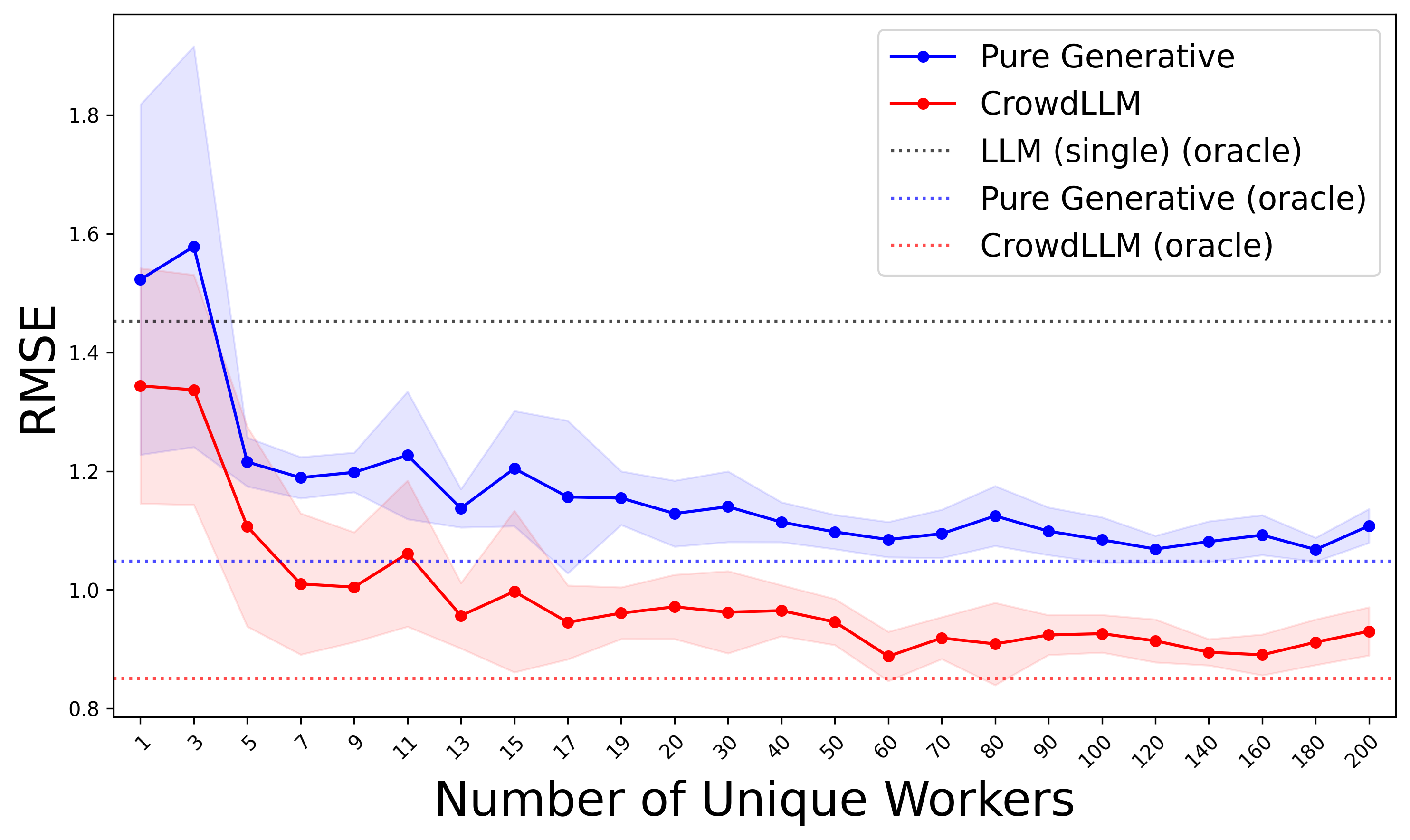}
        \caption{Offensiveness dataset}
        \label{fig:sub_offensiveness_rmse_unique_workers}
    \end{subfigure}%
    \begin{subfigure}{\dimexpr\textwidth/2\relax}
        \centering
        \includegraphics[width=\textwidth]{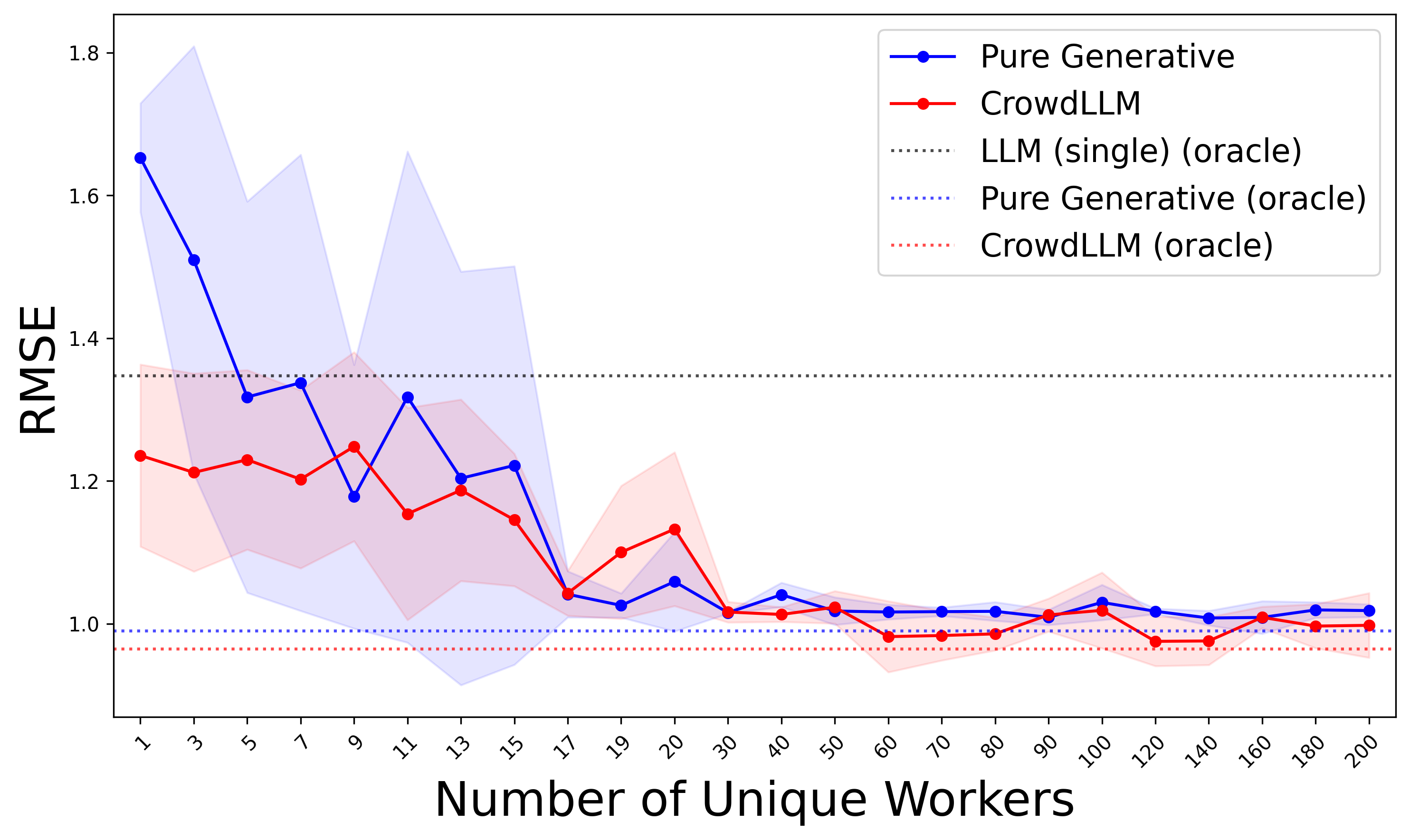}
        \caption{QA Difficulty dataset}
        \label{fig:sub_qa_difficulty_rmse_unique_workers}
    \end{subfigure}%
    \caption{RMSE vs. Number of Unique Workers.}
    \label{fig:rmse_vs_num_unique_workers}
\end{figure}

\subsection{The impact of the number of ratings}

\begin{figure}[!hb]
    \centering
    \begin{subfigure}{\dimexpr\textwidth/2\relax}
        \centering
        \includegraphics[width=\linewidth]{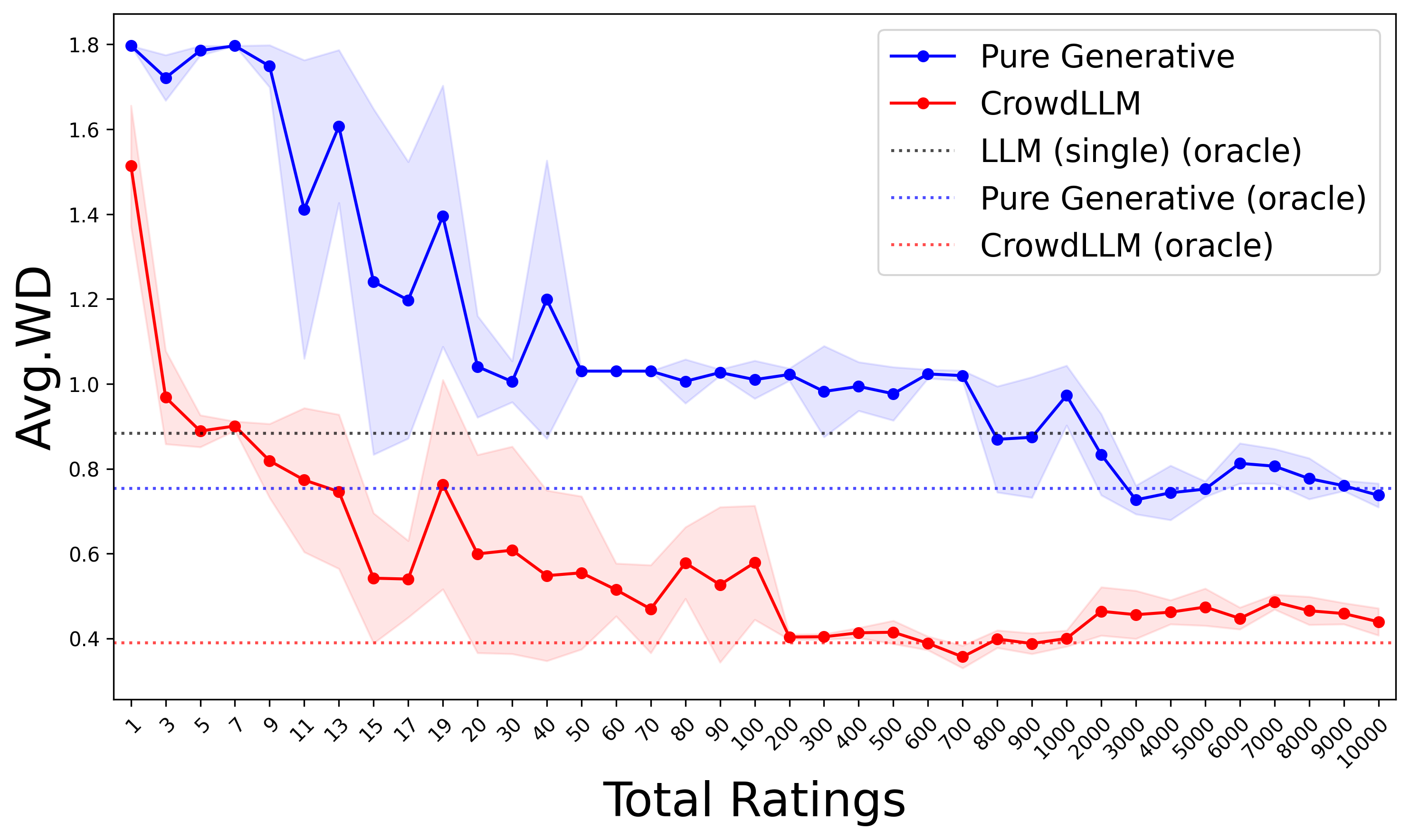}
        \caption{Offensiveness dataset}
        \label{fig:sub_offensiveness_wd_ratings}
    \end{subfigure}%
    \begin{subfigure}{\dimexpr\textwidth/2\relax}
        \centering
        \includegraphics[width=\linewidth]{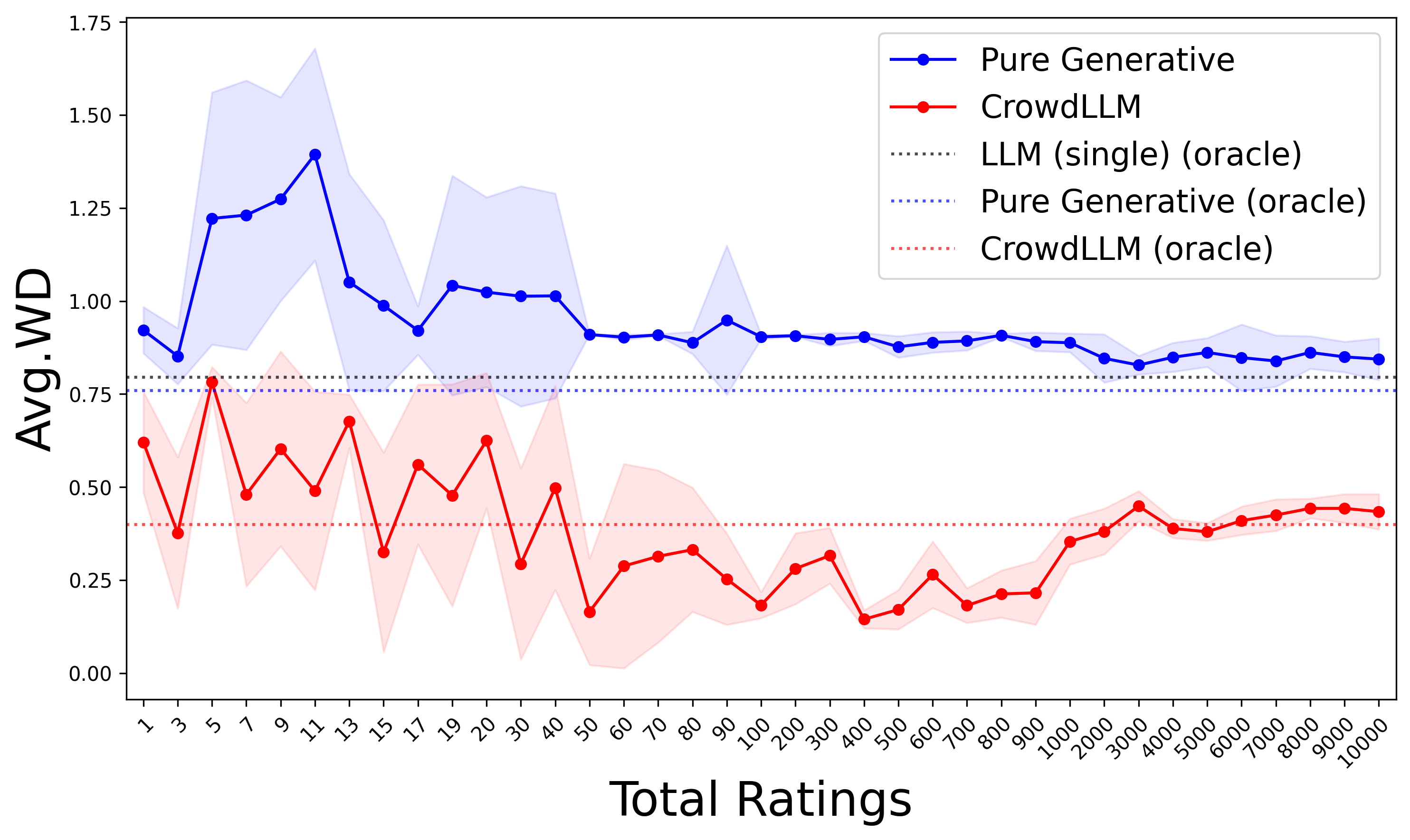}
        \caption{QA Difficulty dataset}
        \label{fig:sub_qa_difficulty_wd_ratings}
    \end{subfigure}%
    \caption{Avg. WD vs. Number of Ratings.}
    \label{fig:wd_vs_num_ratings}
\end{figure}

\begin{figure}[!ht]
    \centering
    \begin{subfigure}{\dimexpr\textwidth/2\relax}
        \centering
        \includegraphics[width=\linewidth]{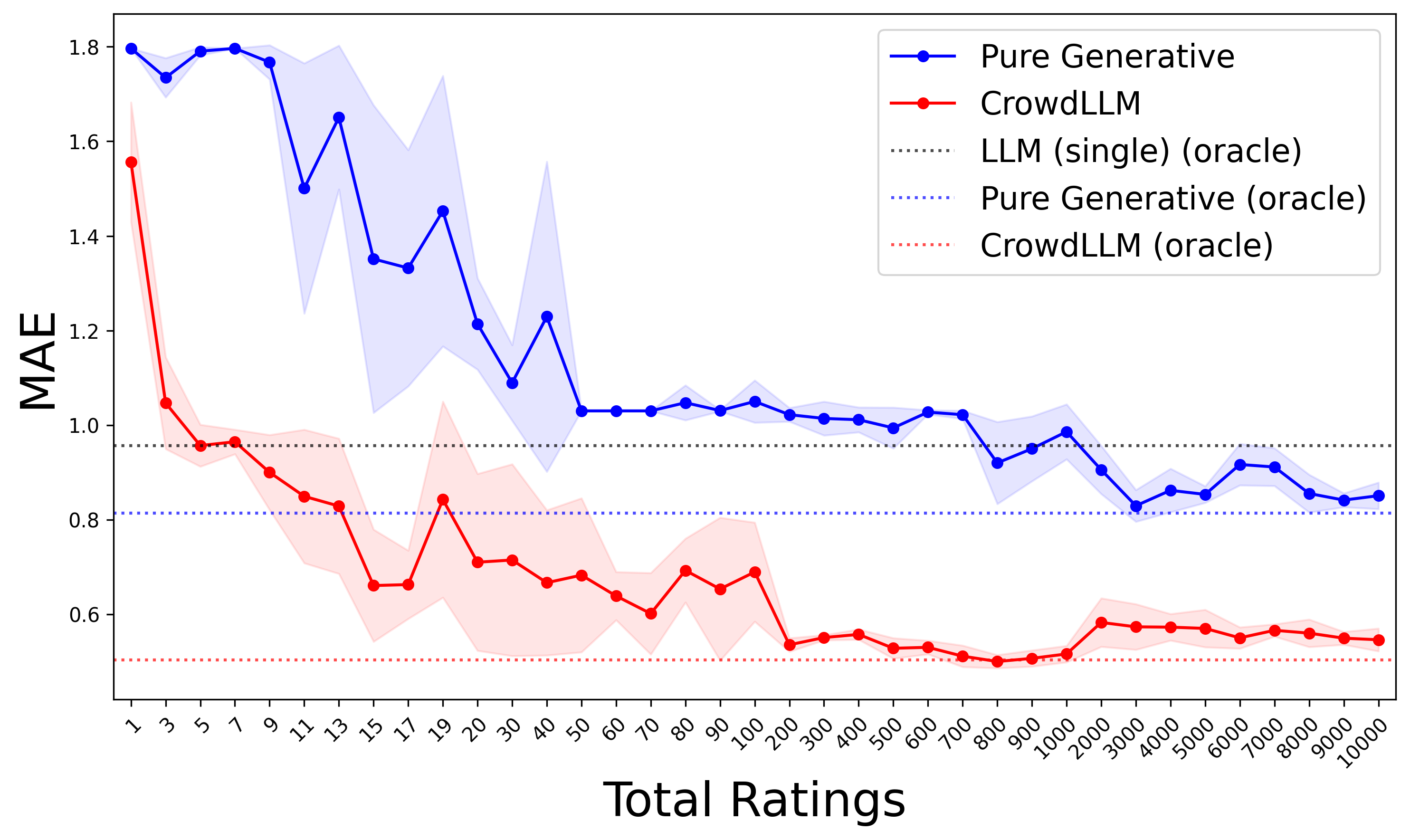}
        \caption{Offensiveness dataset}
        \label{fig:sub_offensiveness_mae_ratings}
    \end{subfigure}%
    \begin{subfigure}{\dimexpr\textwidth/2\relax}
        \centering
        \includegraphics[width=\linewidth]{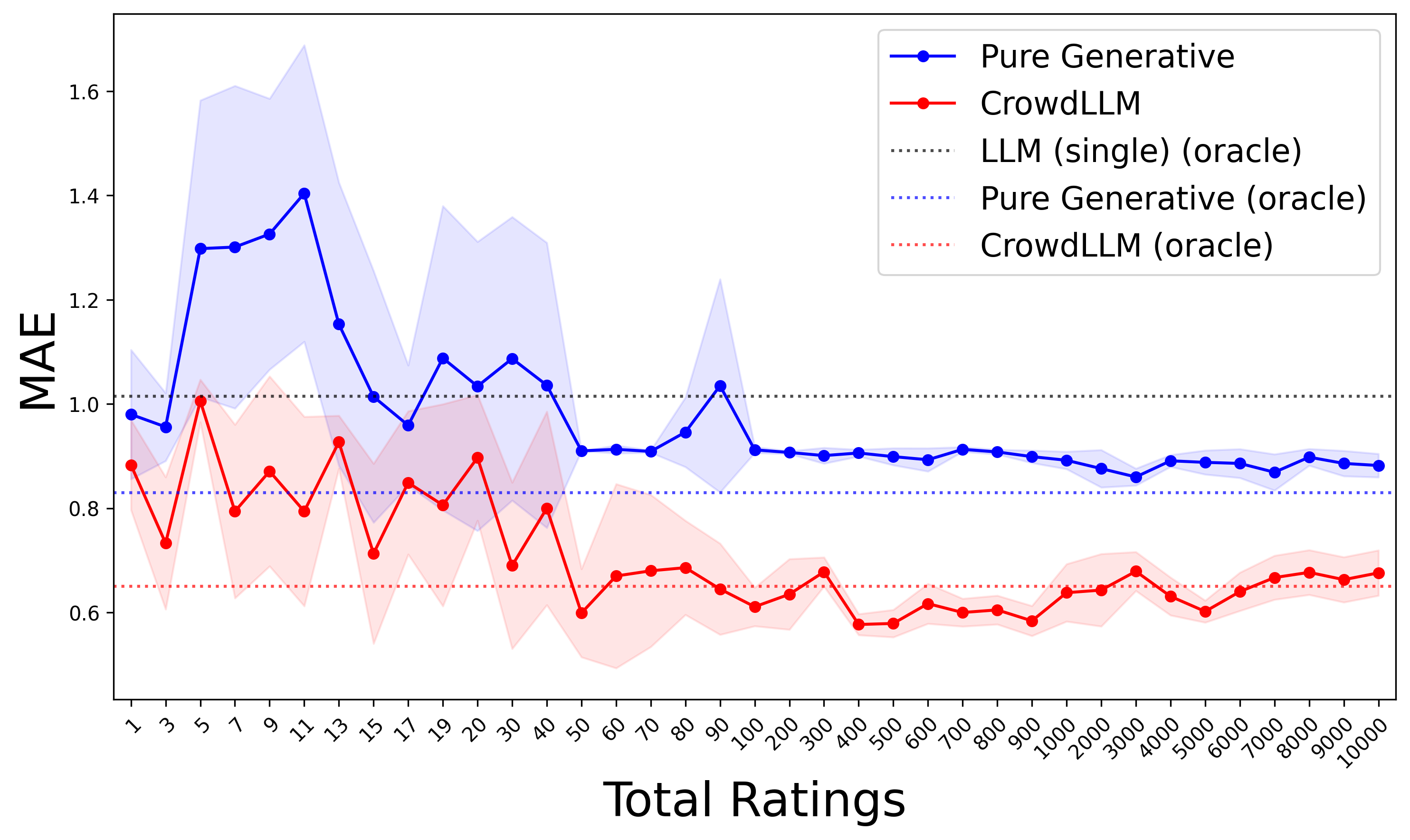}
        \caption{QA Difficulty dataset}
        \label{fig:sub_qa_difficulty_mae_ratings}
    \end{subfigure}%
    \caption{MAE vs. Number of Ratings.}
    \label{fig:mae_vs_num_ratings}
\end{figure}

\begin{figure}[!ht]
    \centering
    \begin{subfigure}{\dimexpr\textwidth/2\relax}
        \centering
        \includegraphics[width=\linewidth]{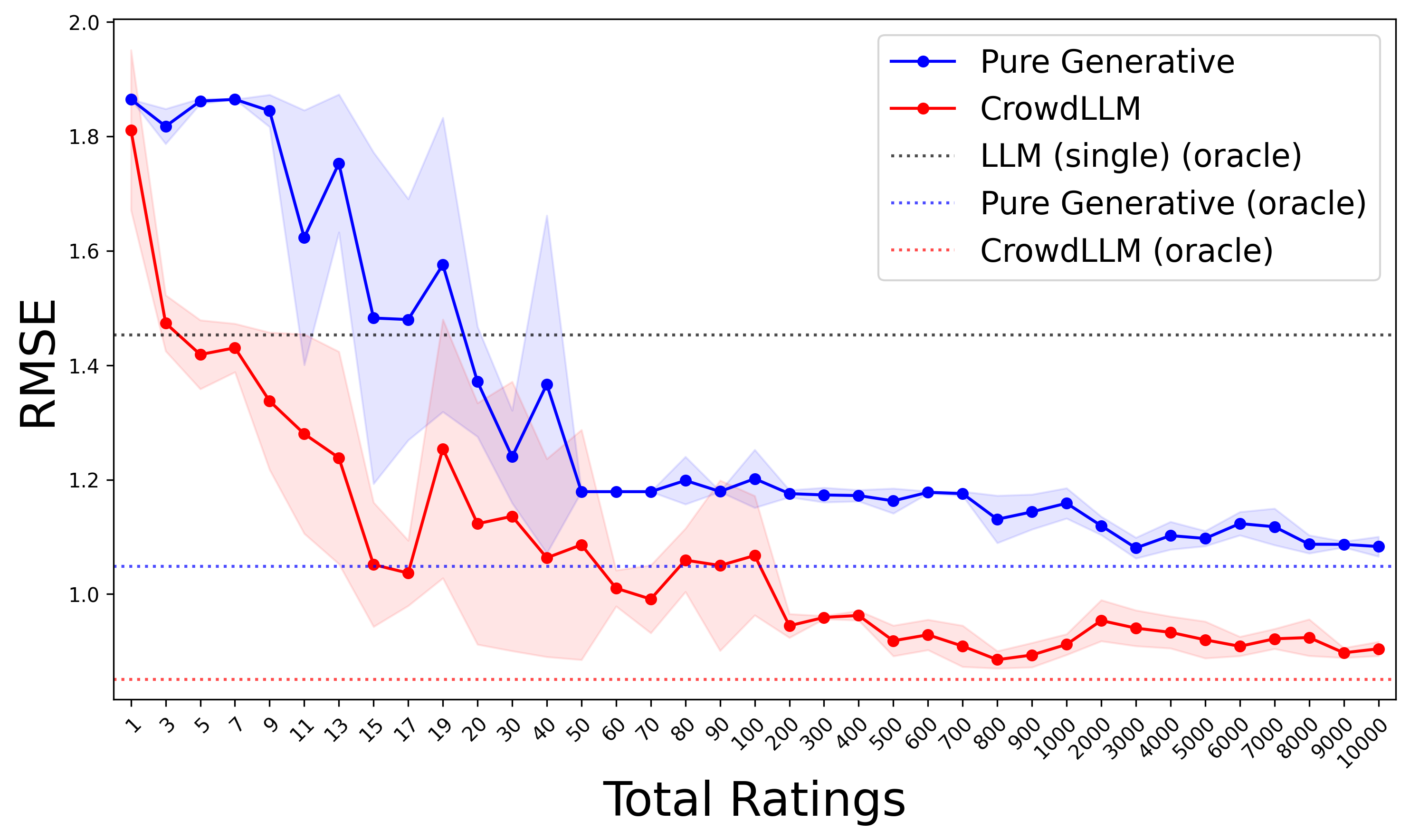}
        \caption{Offensiveness dataset}
        \label{fig:sub_offensiveness_rmse_ratings}
    \end{subfigure}%
    \begin{subfigure}{\dimexpr\textwidth/2\relax}
        \centering
        \includegraphics[width=\linewidth]{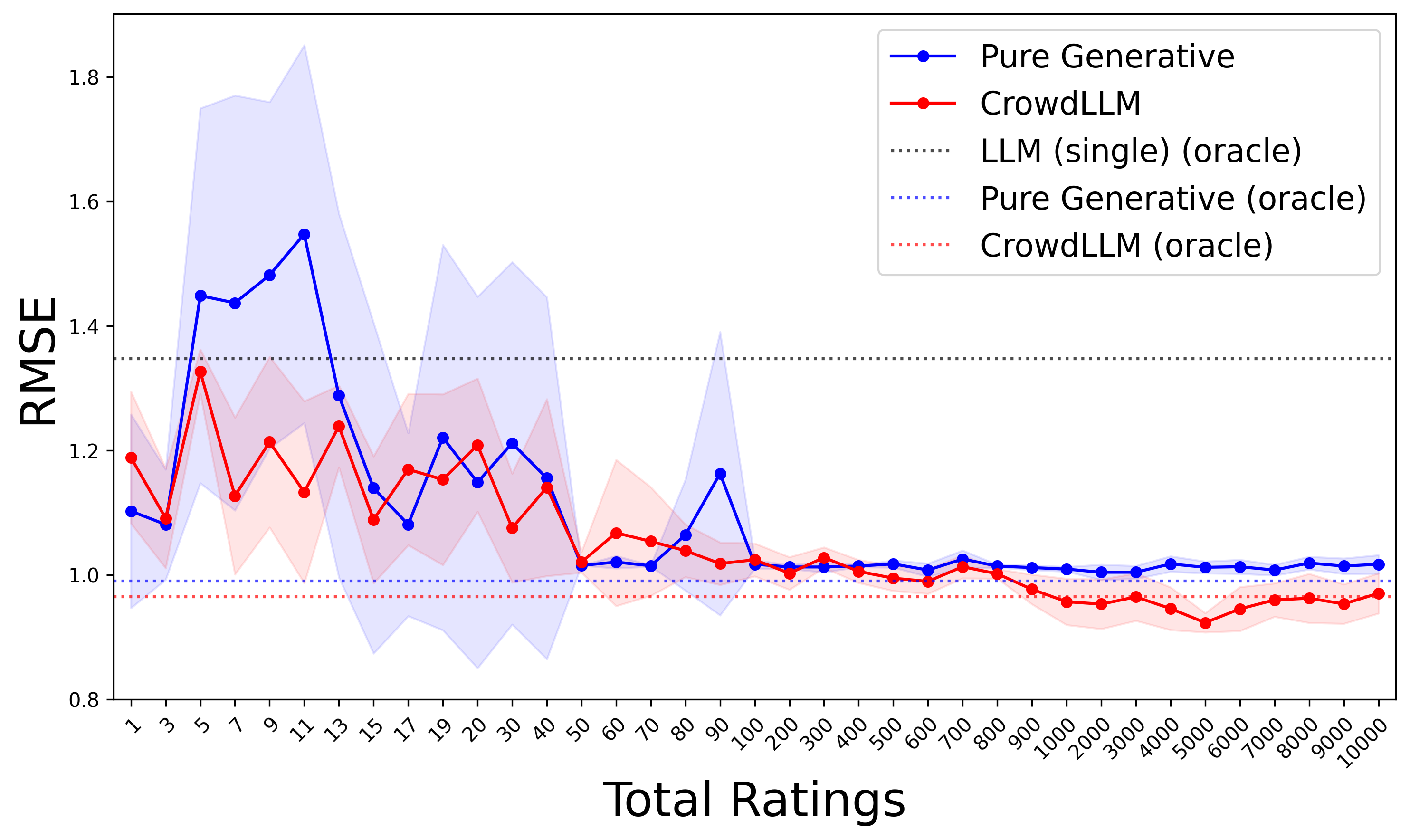}
        \caption{QA Difficulty dataset}
        \label{fig:sub_qa_difficulty_rmse_ratings}
    \end{subfigure}%
    \caption{RMSE vs. Number of Ratings.}
    \label{fig:rmse_vs_num_ratings}
\end{figure}

This section assesses how total ratings impact Pure Generative and CrowdLLM performance (MAE, RMSE, Avg. WD) on the \textit{Offensiveness}, and \textit{QA Difficulty} datasets. From Figures \ref{fig:wd_vs_num_ratings}-\ref{fig:rmse_vs_num_ratings}, we see that CrowdLLM consistently outperforms Pure Generative with lower errors as ratings increase. Both models improve with more ratings, with significant gains when increasing from few (e.g., 1-100) to hundreds/thousands (e.g., 1000-2000). Beyond a high volume (e.g., $>$2000-5000), improvements slow down and performance stabilizes. CrowdLLM achieves superior performance and often reaches optimal levels with fewer total ratings than Pure Generative, underscoring its data efficiency.

\subsection{The impact of the number of workers per problem}
This section analyzes how workers per problem (ranging from 1 to 8) influence Pure Generative and CrowdLLM performance (MAE, RMSE, Avg. WD) across the three datasets. The results are shown in Figures \ref{fig:wd_vs_num_workers_per_problem}-\ref{fig:rmse_vs_num_workers_per_problem}. 
\begin{figure}[!b]
    \centering
    \begin{subfigure}{\dimexpr\textwidth/2\relax}
        \centering
        \includegraphics[width=\linewidth]{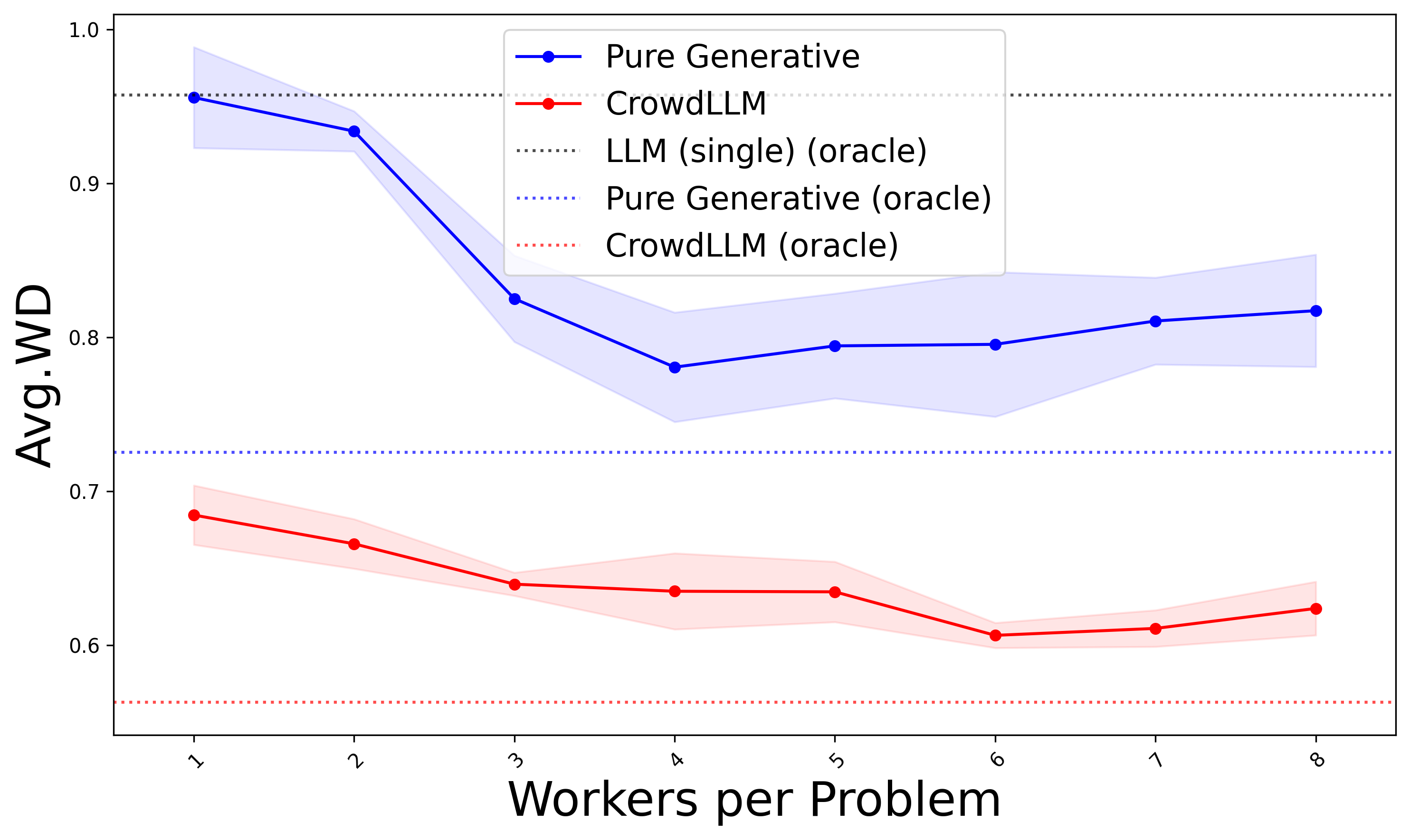}
        \caption{Offensiveness dataset}
        \label{fig:sub_offensiveness_wd_workers_per_problem}
    \end{subfigure}%
    \begin{subfigure}{\dimexpr\textwidth/2\relax}
        \centering
        \includegraphics[width=\linewidth]{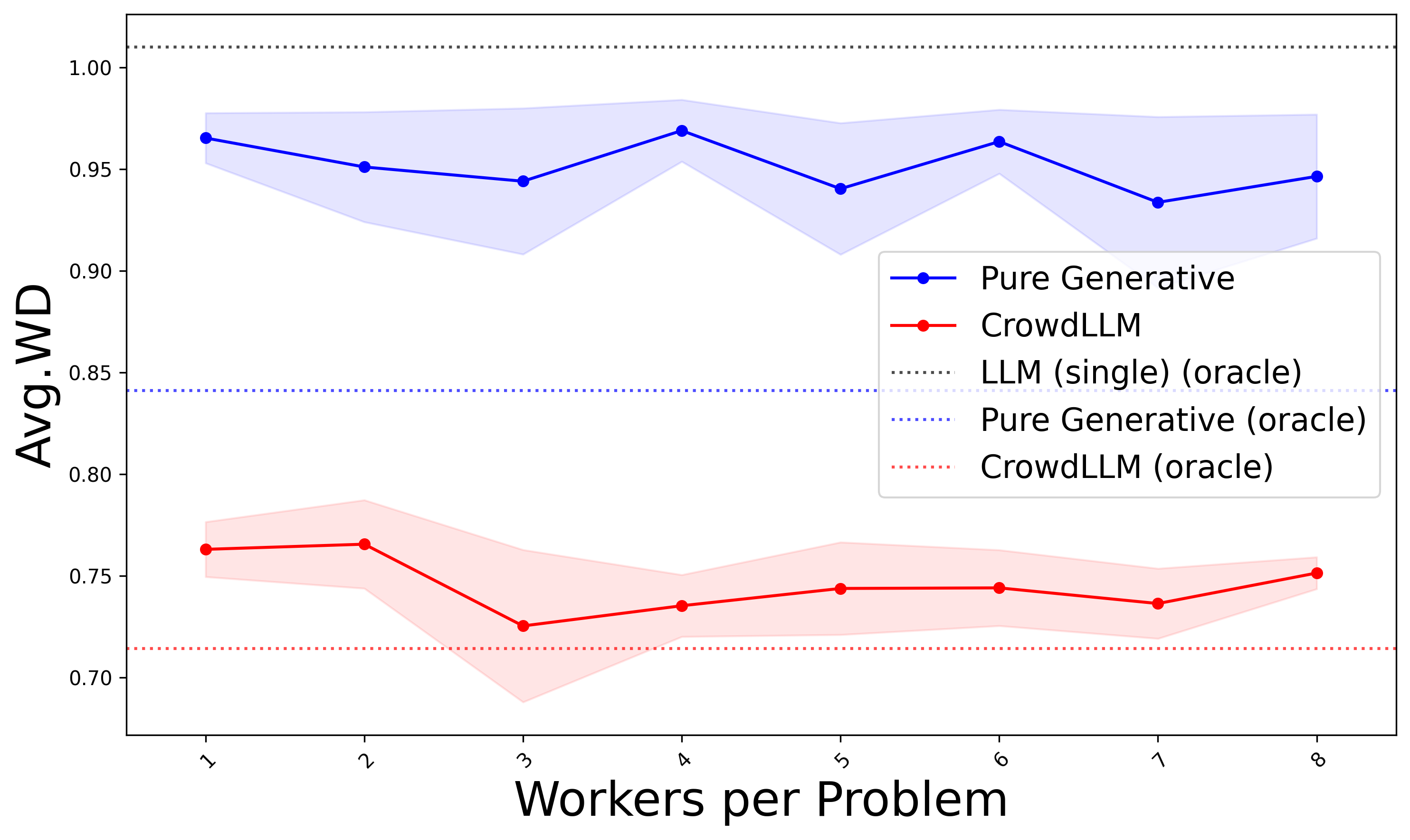}
        \caption{QA Difficulty dataset}
        \label{fig:sub_qa_difficulty_wd_workers_per_problem}
    \end{subfigure}%
    \caption{Avg. WD vs. Number of Workers per Problem.}
    \label{fig:wd_vs_num_workers_per_problem}
\end{figure}

\begin{figure}[!b]
    \centering
    \begin{subfigure}{\dimexpr\textwidth/2\relax}
        \centering
        \includegraphics[width=\linewidth]{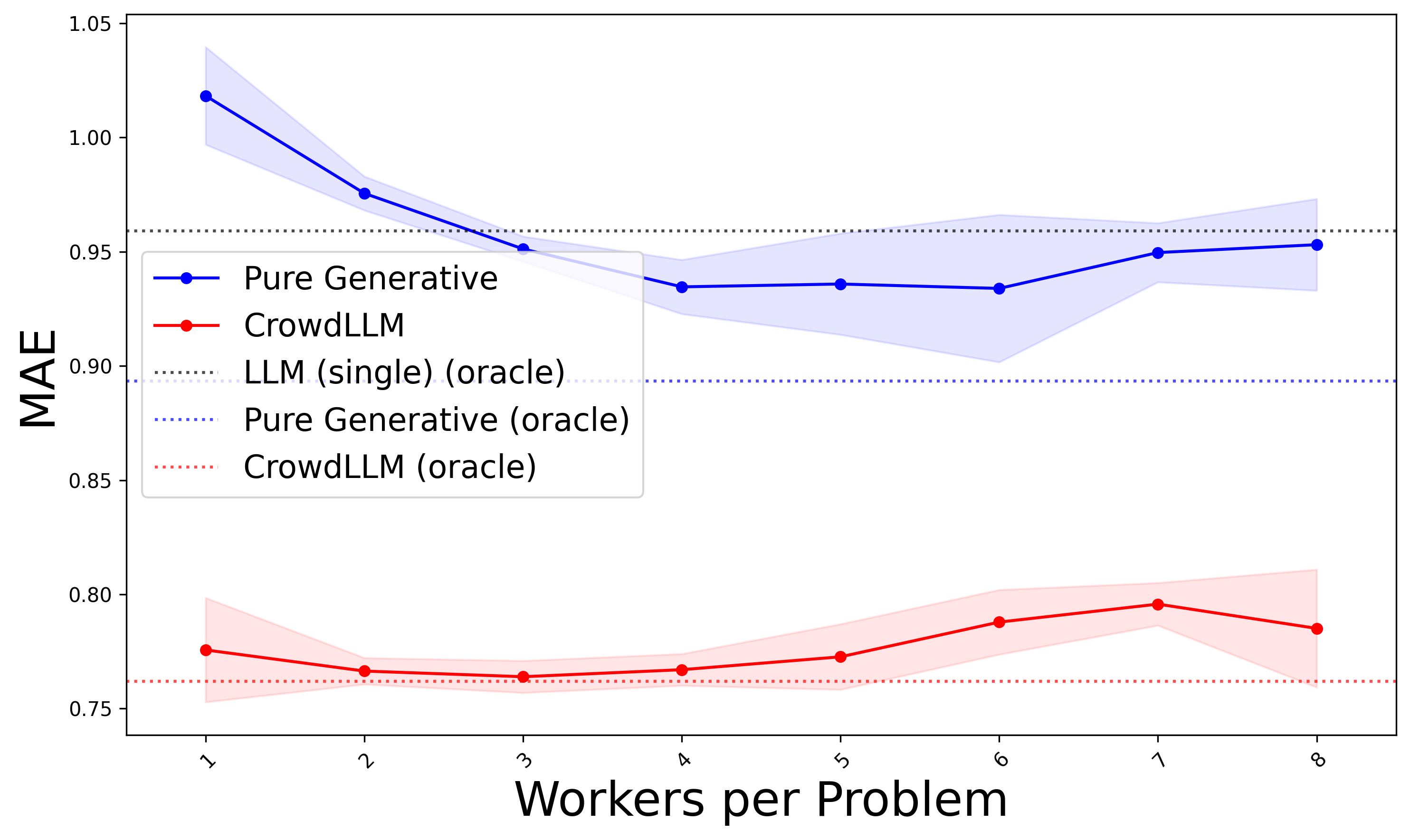}
        \caption{Offensiveness dataset}
        \label{fig:sub_offensiveness_mae_workers_per_problem}
    \end{subfigure}%
    \begin{subfigure}{\dimexpr\textwidth/2\relax}
        \centering
        \includegraphics[width=\linewidth]{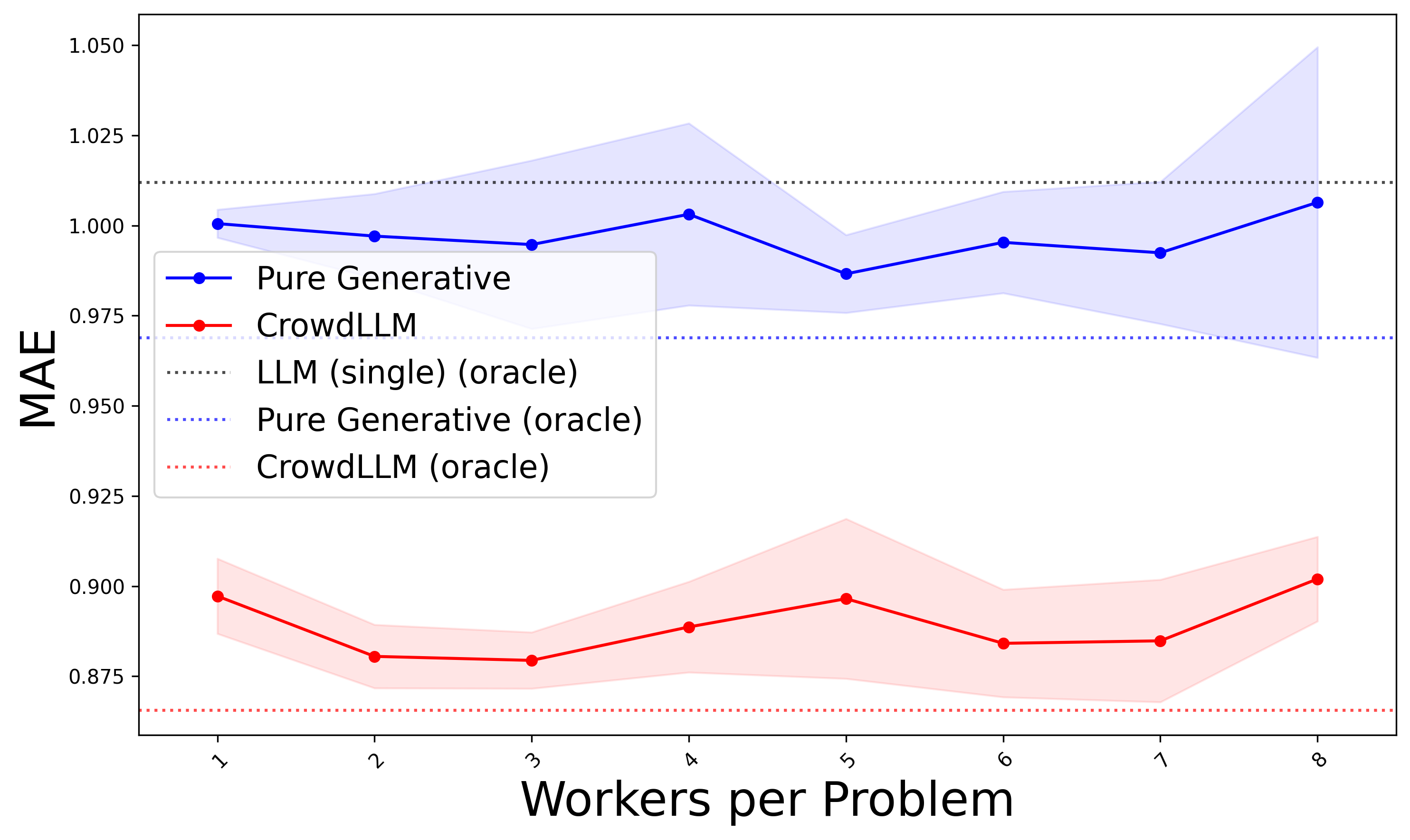}
        \caption{QA Difficulty dataset}
        \label{fig:sub_qa_difficulty_mae_workers_per_problem}
    \end{subfigure}%
    \caption{MAE vs. Number of Workers per Problem.}
    \label{fig:mae_vs_num_workers_per_problem}
\end{figure}

\begin{figure}[!t]
    \centering
    \begin{subfigure}{\dimexpr\textwidth/2\relax}
        \centering
        \includegraphics[width=\linewidth]{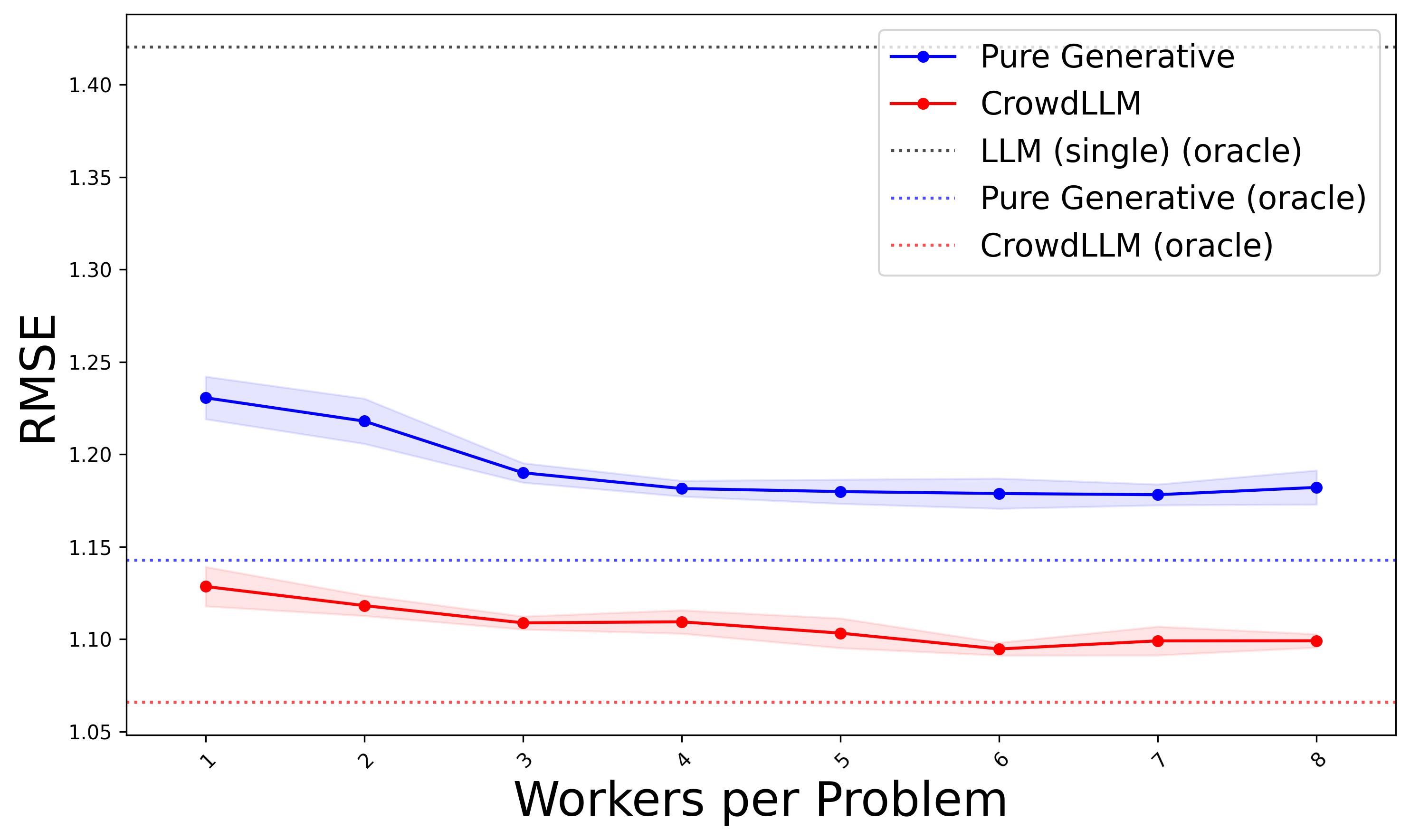}
        \caption{Offensiveness dataset}
        \label{fig:sub_offensiveness_rmse_workers_per_problem}
    \end{subfigure}%
    \begin{subfigure}{\dimexpr\textwidth/2\relax}
        \centering
        \includegraphics[width=\linewidth]{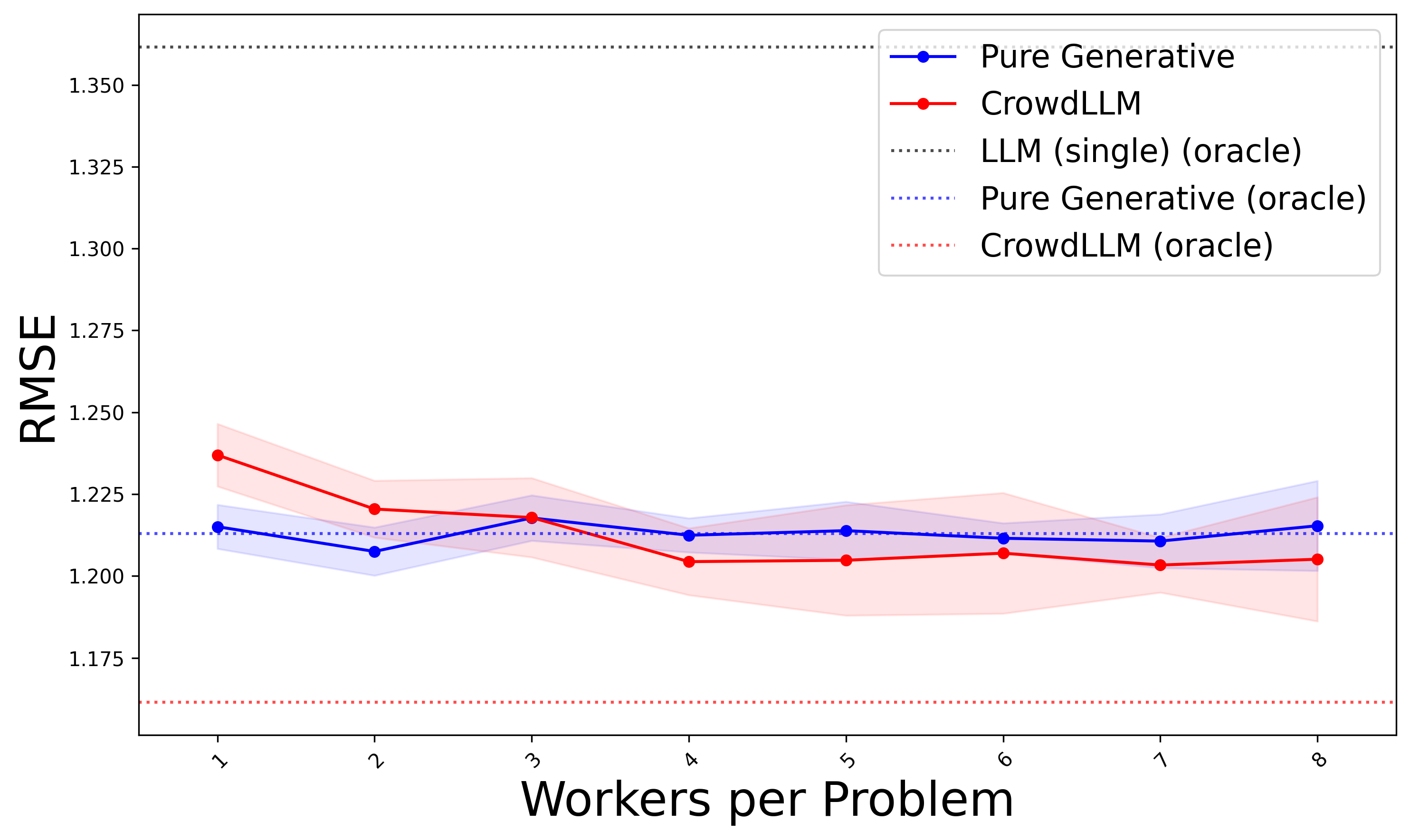}
        \caption{QA Difficulty dataset}
        \label{fig:sub_qa_difficulty_rmse_workers_per_problem}
    \end{subfigure}%
    \caption{RMSE vs. Number of Workers per Problem.}
    \label{fig:rmse_vs_num_workers_per_problem}
\end{figure}

CrowdLLM consistently shows markedly lower errors than Pure Generative. Increasing workers per problem from one generally improves aggregated label quality and reduces errors for both models, most notably when moving from 1 to 2-3 workers. CrowdLLM typically reaches optimal performance or diminishing returns with few workers (e.g., 2-4); for instance, on \textit{Offensiveness}, its Avg. WD stabilizes after 3 workers. Adding more workers (up to 8) offers little further benefit for CrowdLLM. Pure Generative Model also improves but maintains higher error rates than CrowdLLM.


\subsection{Cost-saving prospect of CrowdLLM}\label{performance_tolerance_appendix}
To assess CrowdLLM's cost-effectiveness, we analyze the LLMs needed to approximate ground-truth ratings within various tolerances ($\pm \{0.1, 0.2, 0.3, 0.4\}$) across datasets. 

\begin{figure}[!htbp]
    \centering
    \begin{subfigure}{0.33\textwidth}
        \centering
        \includegraphics[width=\textwidth]{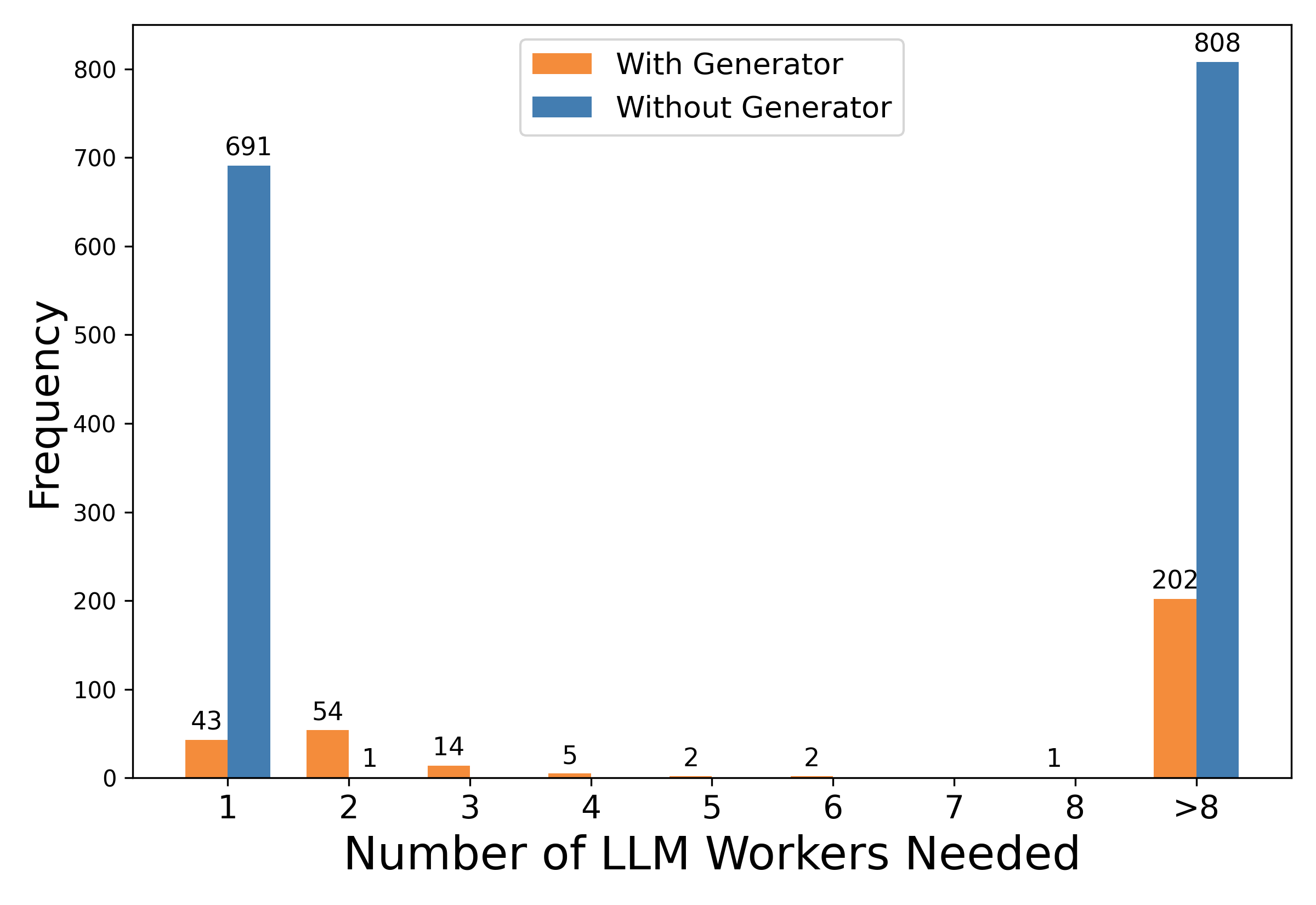}
        \caption{Ground truth rating $\pm 0.1$}
    \end{subfigure}%
    \begin{subfigure}{0.33\textwidth}
        \centering
        \includegraphics[width=\textwidth]{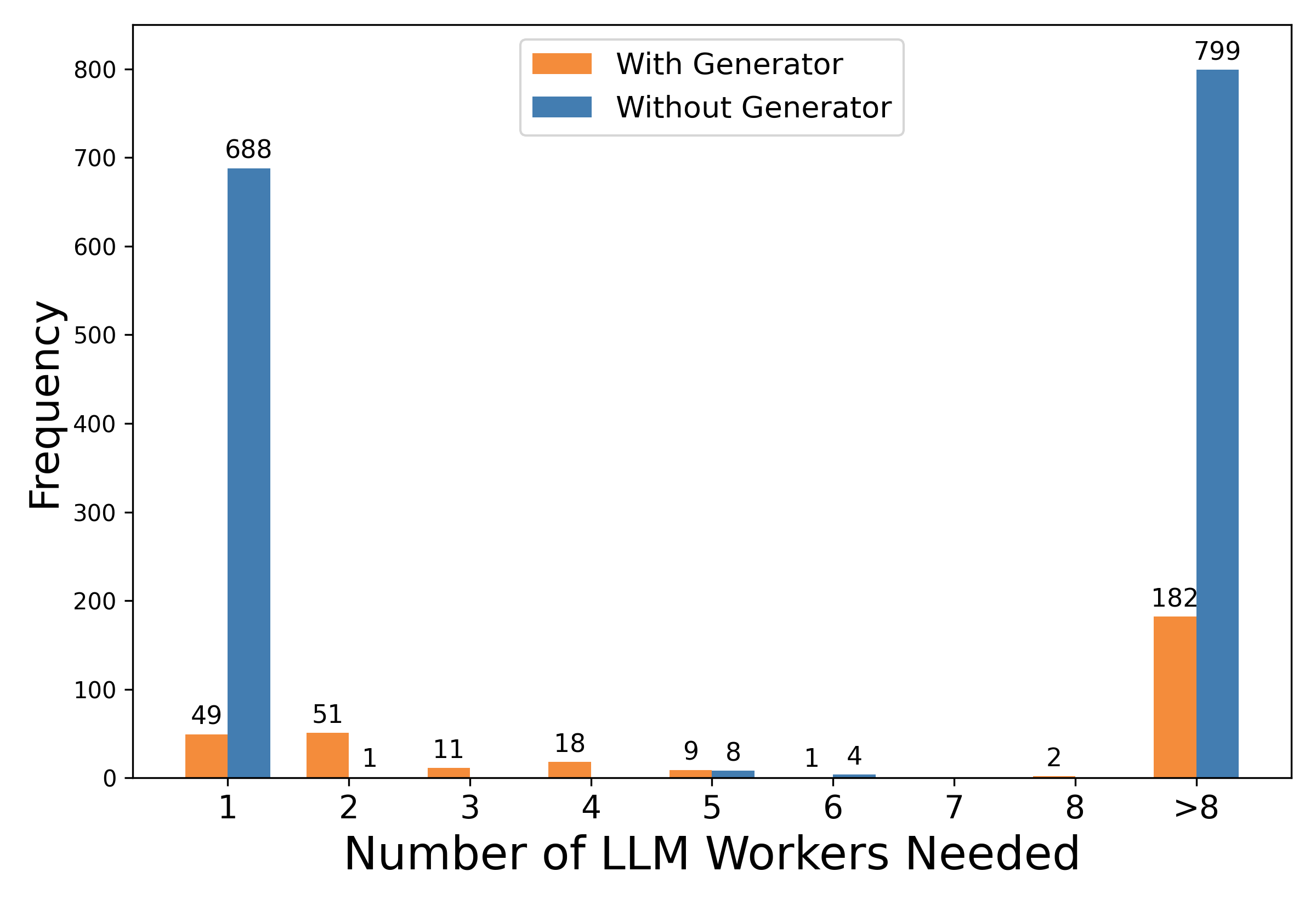}
        \caption{Ground truth rating $\pm 0.2$}
    \end{subfigure}%
    \begin{subfigure}{0.33\textwidth}
        \centering
        \includegraphics[width=\textwidth]{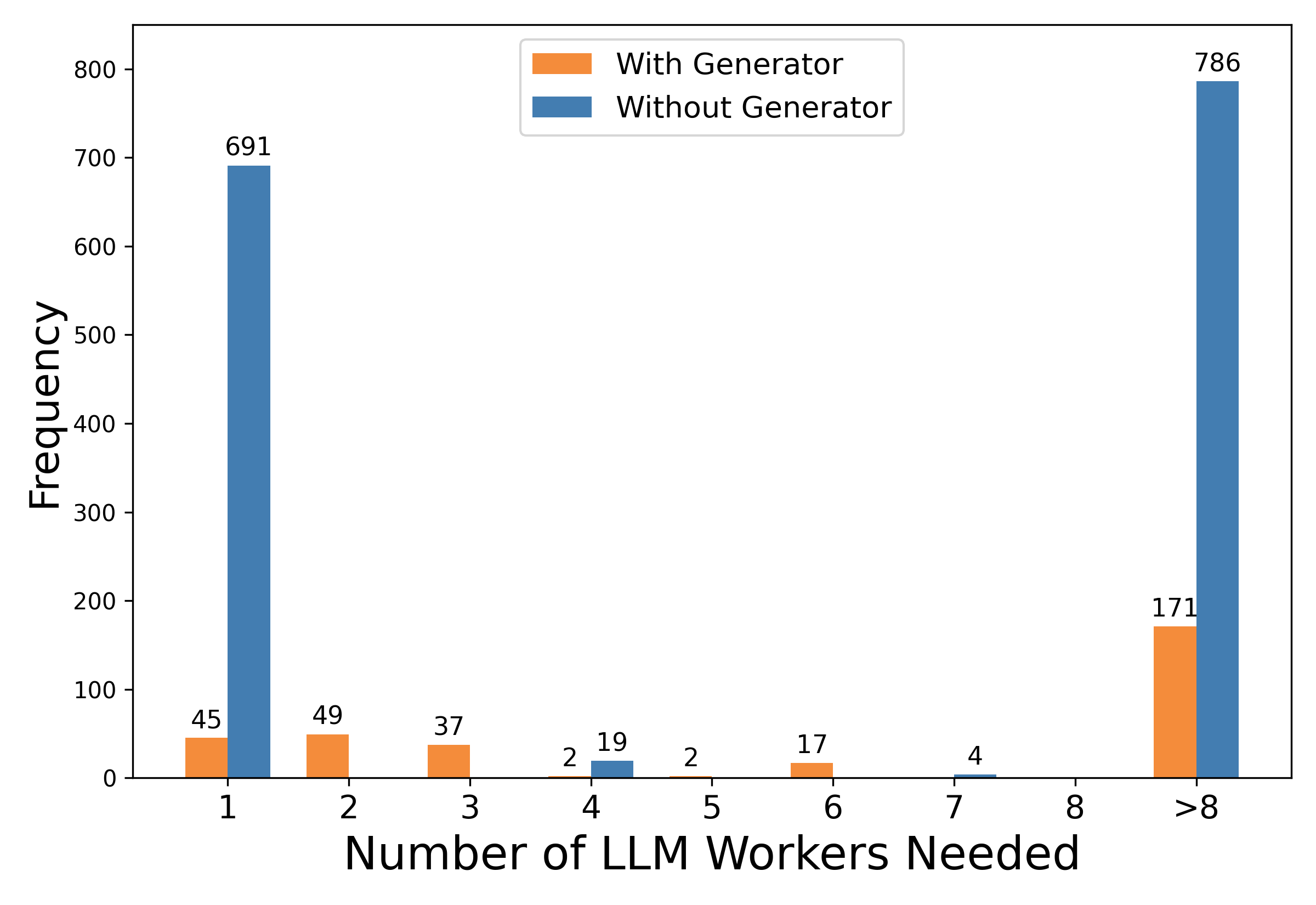}
        \caption{Ground truth rating $\pm 0.3$}
    \end{subfigure}%
    \caption{Number of LLM workers needed to reach different performance ranges with and without generator-based rating on Offensiveness dataset. The generator is trained on more than 13,000 instances.}
    \label{vae_llm_tolerance_0.1-0.3}
\end{figure}

\begin{figure}[!ht]
    \centering
    \begin{subfigure}{0.33\textwidth}
        \centering
        \includegraphics[width=\textwidth]{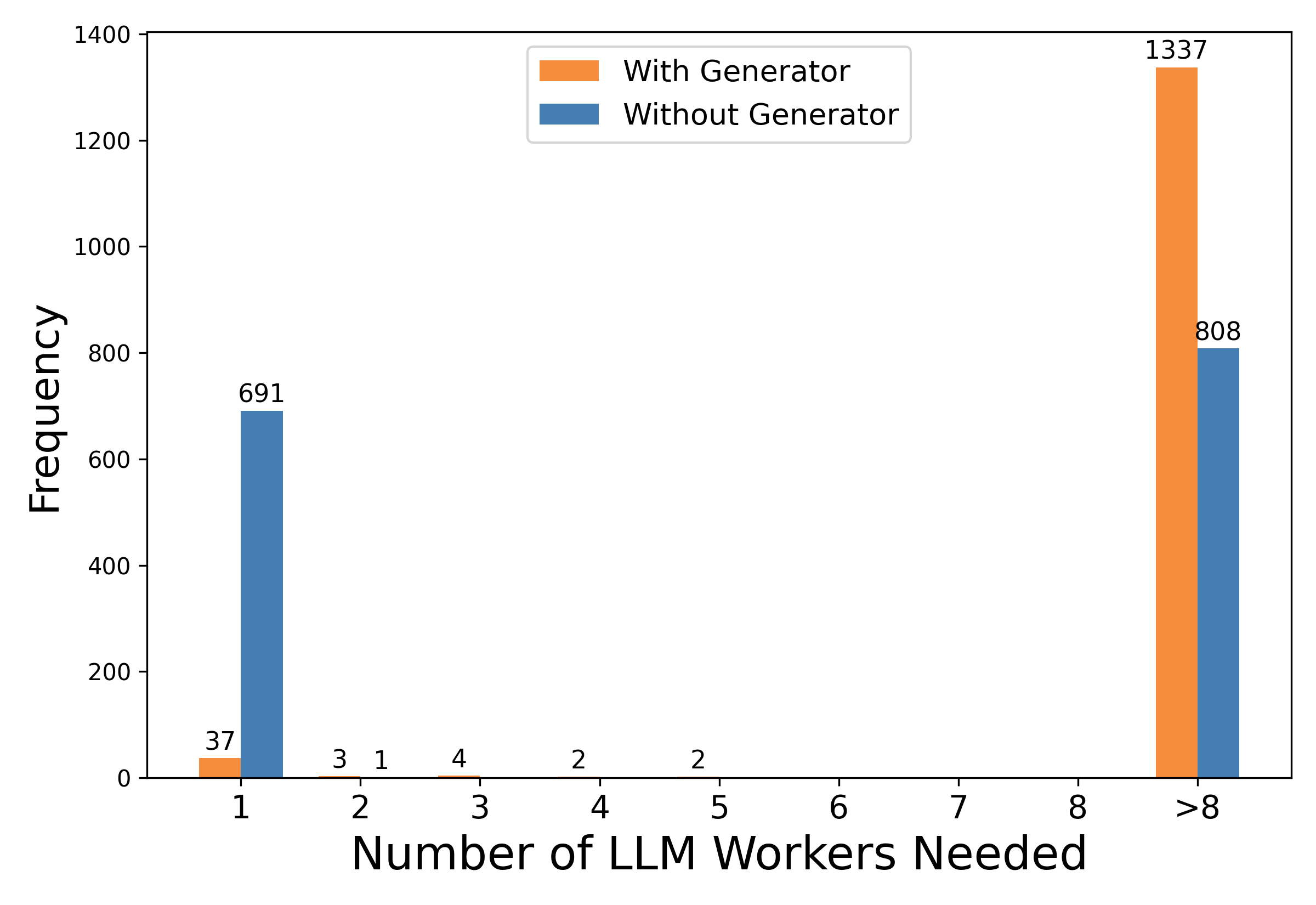}
        \caption{Ground truth rating $\pm 0.1$}
    \end{subfigure}%
    \begin{subfigure}{0.33\textwidth}
        \centering
        \includegraphics[width=\textwidth]{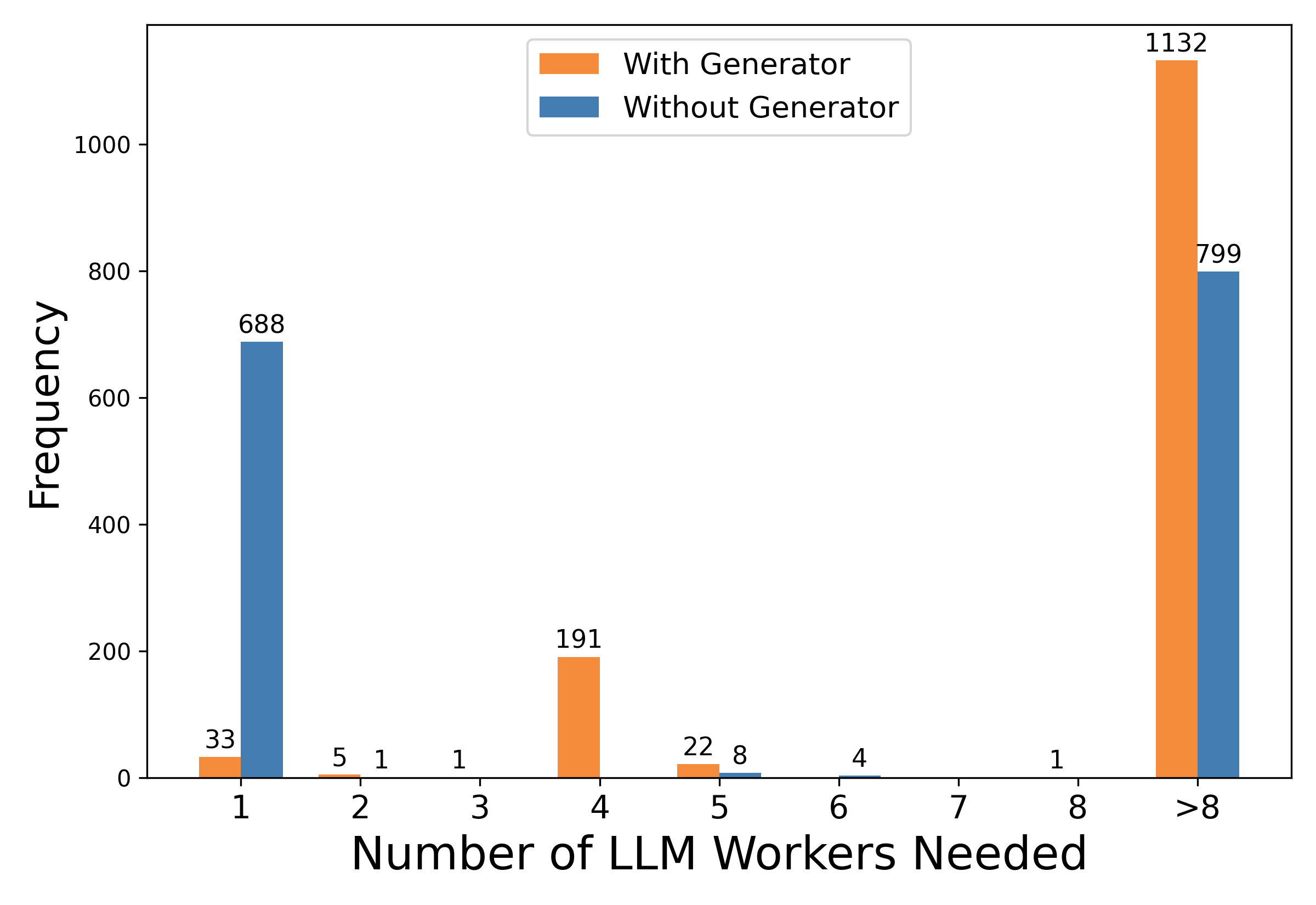}
        \caption{Ground truth rating $\pm 0.2$}
    \end{subfigure}\\
    \begin{subfigure}{0.33\textwidth}
        \centering
        \includegraphics[width=\textwidth]{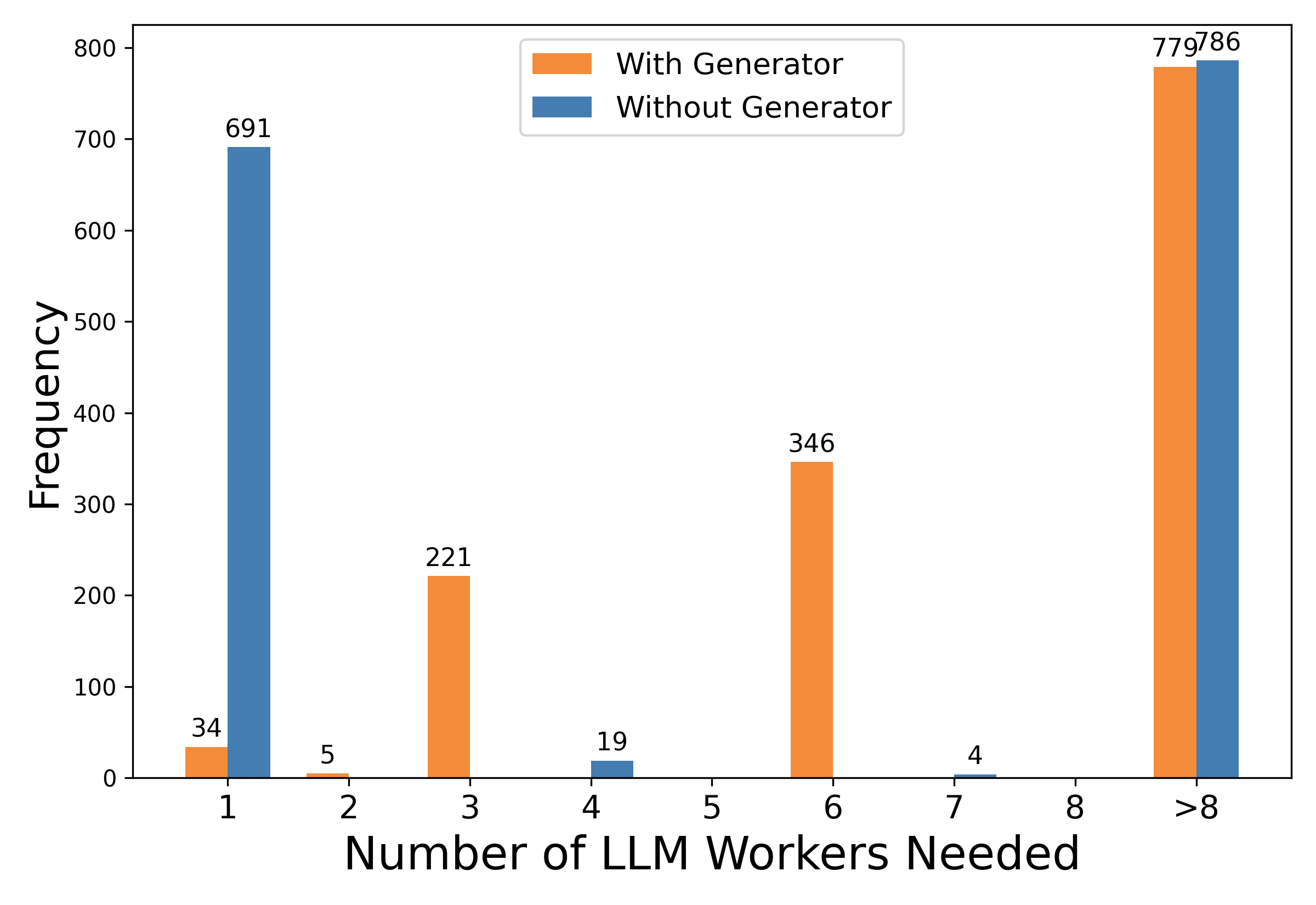}
        \caption{Ground truth rating $\pm 0.3$}
    \end{subfigure}%
    \begin{subfigure}{0.33\textwidth}
        \centering
        \includegraphics[width=\textwidth]{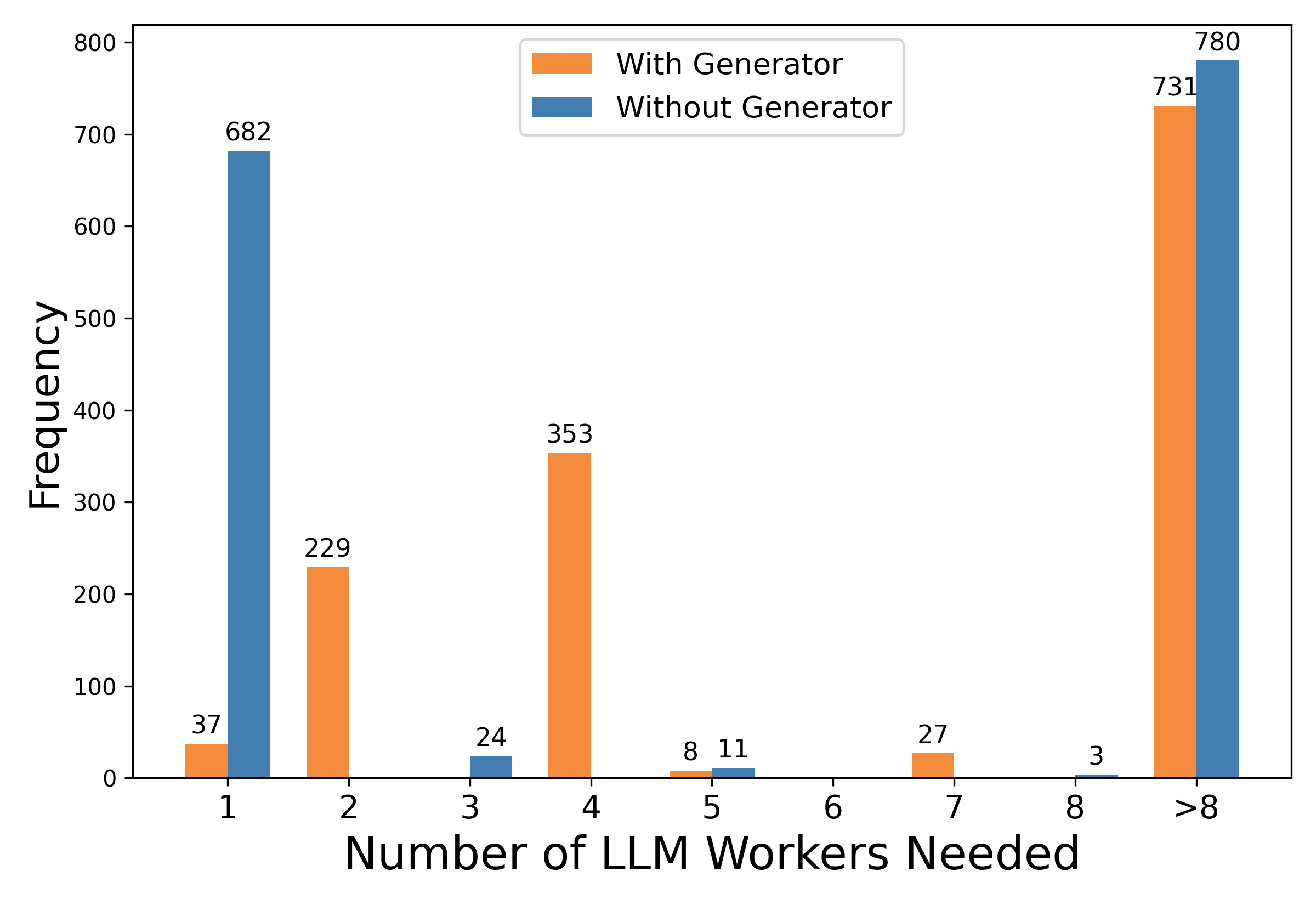}
        \caption{Ground truth rating $\pm 0.4$}
    \end{subfigure}%
    \caption{Number of LLM workers needed to reach different performance ranges with and without generator-based rating on Offensiveness dataset. The generator is trained on only 9 instances.}
    \label{vae_llm_tolerance_9_Offensiveness}
\end{figure}

On \textit{Offensiveness}, Figure \ref{vae_llm_tolerance_0.1-0.3} details LLM requirements for tolerances $\pm \{0.1, 0.2, 0.3\}$ with/without a generator trained on more than 13,000 instances. Figure \ref{vae_llm_tolerance_9_Offensiveness} shows results for a generator trained on only 9 instances. While this lightly-trained generator saw 1337 instances needing more than 8 LLMs at $\pm 0.1$ tolerance (more than its well-trained counterpart), it still outperformed the no-generator baseline. As tolerance loosens, many instances meet targets with 2-6 LLMs using the 9-instance generator, which requires fewer LLMs than the baseline at $\pm 0.3$ and $\pm 0.4$ tolerances, where the baseline still struggles.

\begin{figure}[!ht]
    \centering
    \begin{subfigure}{0.33\textwidth}
        \centering
        \includegraphics[width=\textwidth]{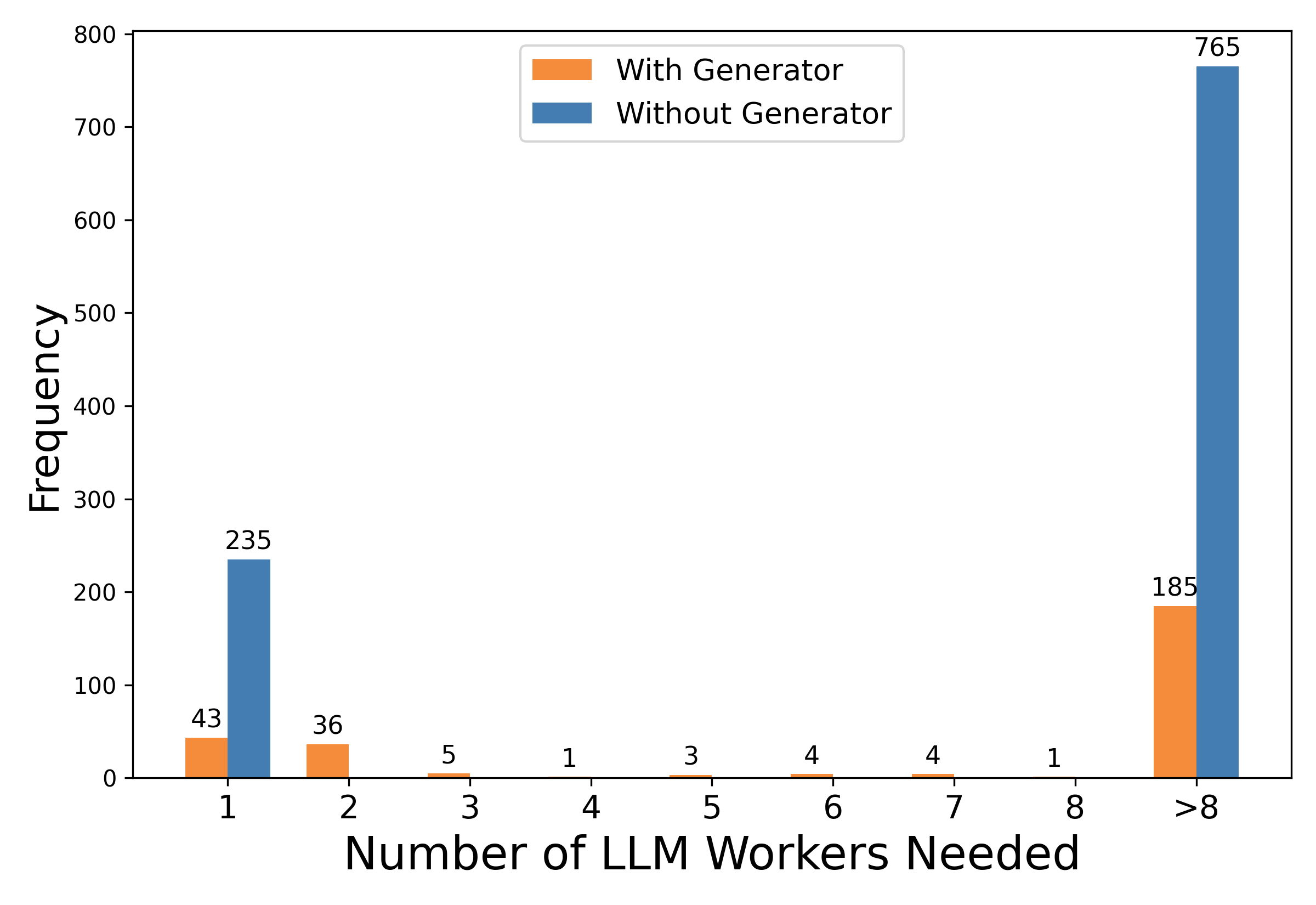}
        \caption{Ground truth rating $\pm 0.1$}
    \end{subfigure}%
    \begin{subfigure}{0.33\textwidth}
        \centering
        \includegraphics[width=\textwidth]{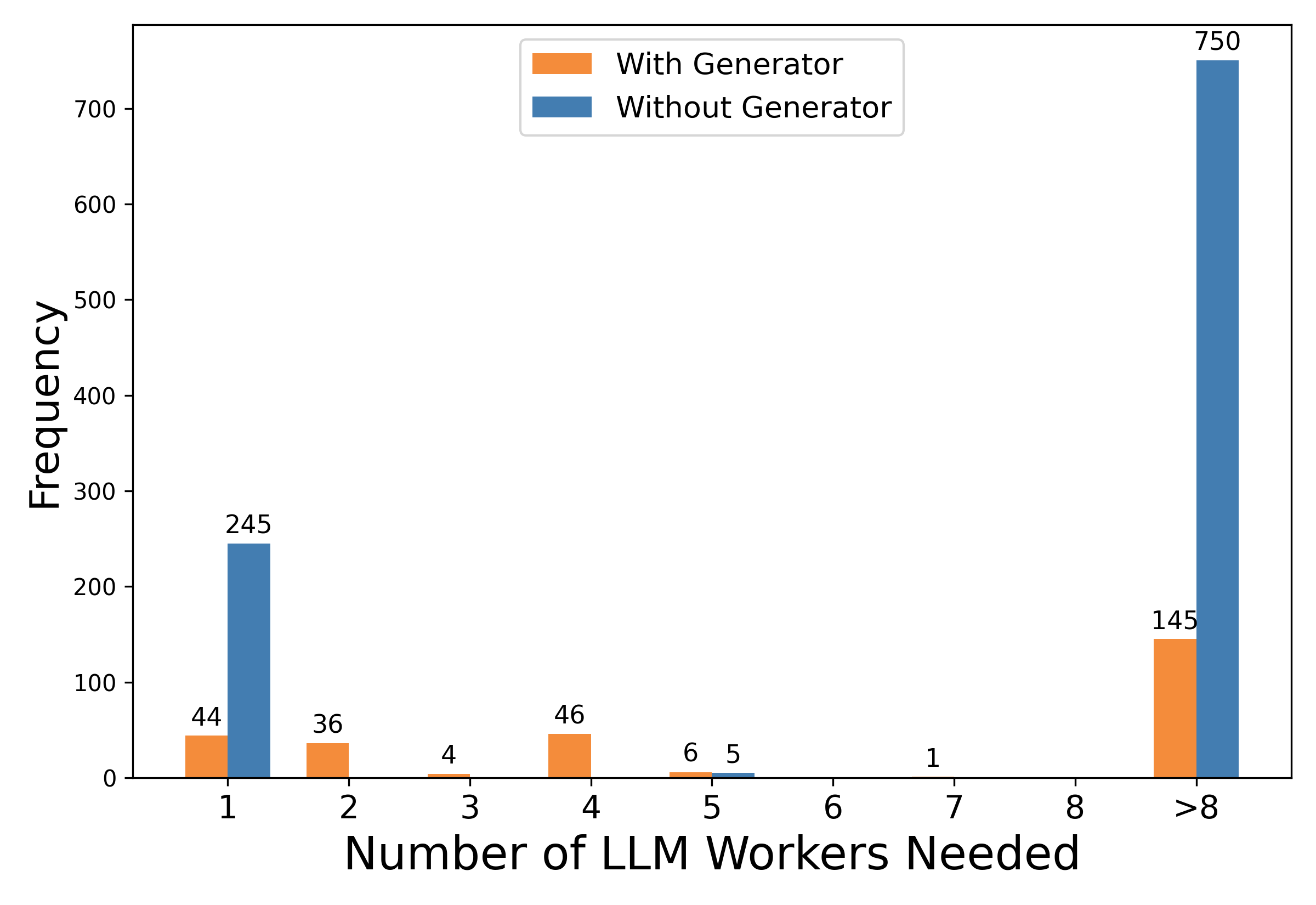}
        \caption{Ground truth rating $\pm 0.2$}
    \end{subfigure}\\
    \begin{subfigure}{0.33\textwidth}
        \centering
        \includegraphics[width=\textwidth]{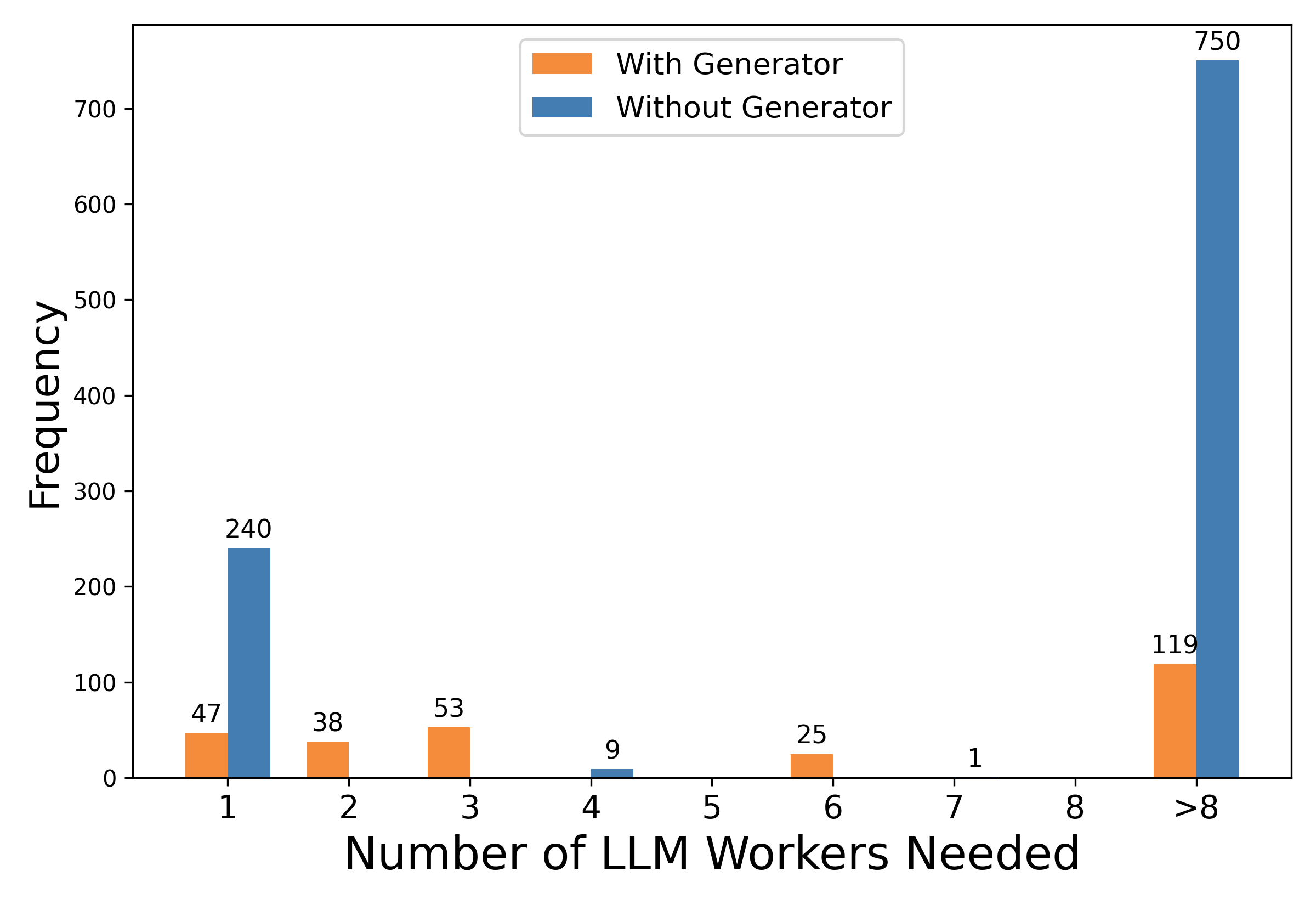}
        \caption{Ground truth rating $\pm 0.3$}
    \end{subfigure}%
    \begin{subfigure}{0.33\textwidth}
        \centering
        \includegraphics[width=\textwidth]{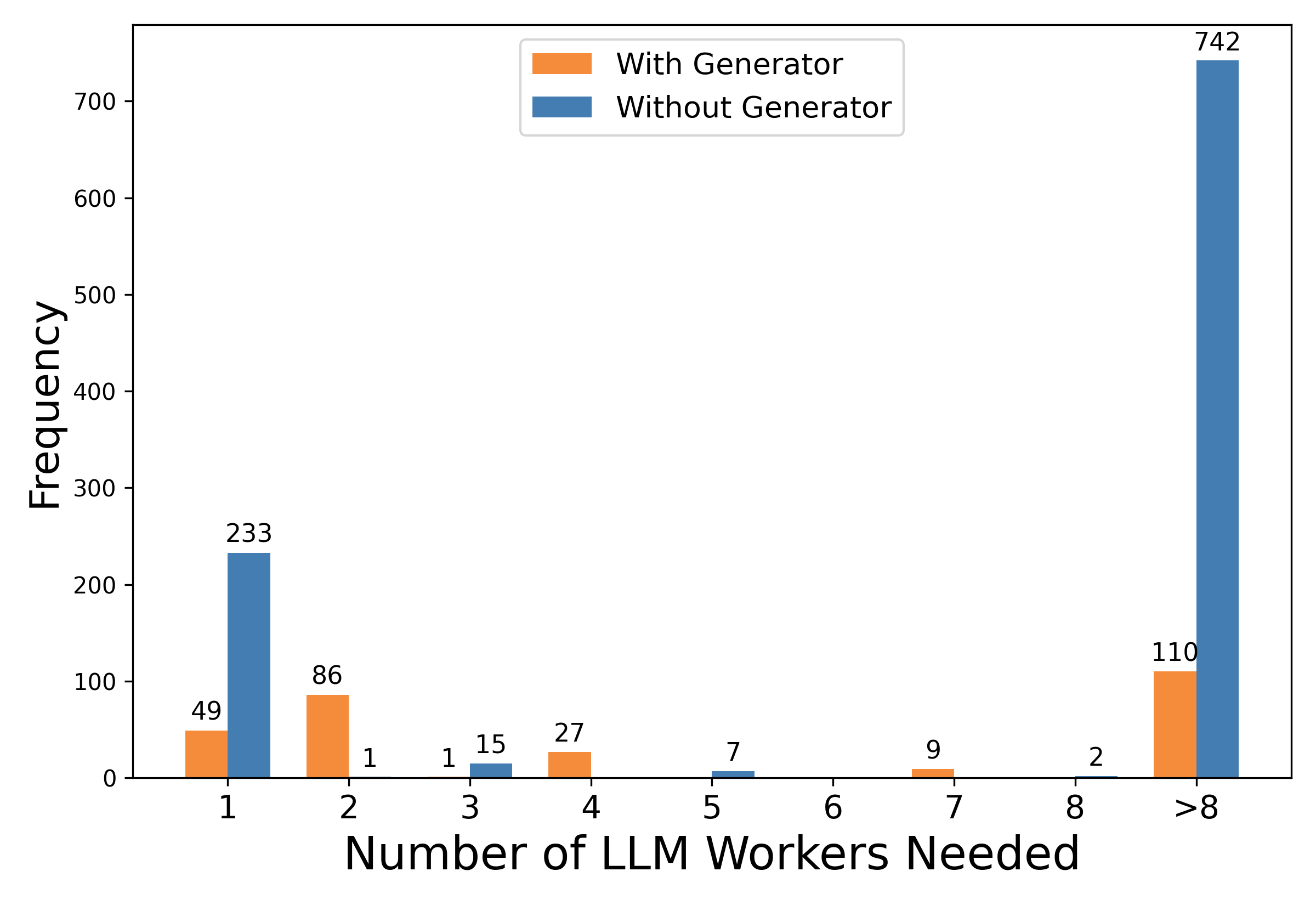}
        \caption{Ground truth rating $\pm 0.4$}
    \end{subfigure}%
    \caption{Number of LLM workers needed to reach different performance ranges with and without generator-based rating on QA Difficulty dataset.}
    \label{vae_llm_tolerance_QAdifficulty}
\end{figure}

Figures \ref{vae_llm_tolerance_QAdifficulty} (\textit{QA Difficulty}) presents results with/without this 9-instance generator. On both datasets, even this lightweight generator significantly reduces the required LLMs. Without a generator, the number of instances needing $>$8 LLMs remains high even at relaxed tolerances (e.g., \textit{QA Difficulty}: 750 at $\pm 0.3$, 742 at $\pm 0.4$). With the generator, as tolerance loosens, 1-5 LLMs resolve more instances, confirming its efficiency.
\end{APPENDIX}
\newpage
\bibliographystyle{informs2014}
\bibliography{informs2014}

\end{document}